 \let\MYorigsubfloat\subfloat
 \renewcommand{\subfloat}[2][\relax]{\MYorigsubfloat[]{#2}}
\renewcommand {\vec}{\mathbf}
\begin{document}    

\title{Understanding of Multi-Frequency Interferometry Ranging: Analysis, Optimization and Validation }
\title{Optimization in Multi-Frequency Interferometry Ranging: Theory and Experiment }

\author{Li~Wei,~Wangdong~Qi,

\thanks{L. Wei is with the PLA University of Science and Technology, Nanjing 210007, China
        (e-mail: wlnb@hotmail.com).}

\thanks{Manuscript received ~x ~x, 2012; revised ~x ~x, 2012.}}





\markboth{submitted to IEEE Trans. ~x ~x,~Vol.~x, No.~x, x~2012}%
{Wei \MakeLowercase{\textit{et al.}}: Frequency Optimization in Multi-Frequency Interferometry Ranging} 




\maketitle


\begin{abstract}

Multi-frequency interferometry (MFI) is well known as an accurate phase-based measurement scheme. The paper reveals the inherent relationship of the unambiguous measurement range (UMR), the outlier probability, the MSE performance with the frequency pattern in MFI system, and then provides the corresponding criterion for choosing the frequency pattern. We point out that the theoretical rigorous UMR of MFI deduced in the literature is usually optimistic for practical application and derive a more practical expression . It is found that the least-square (LS) estimator of MFI has a distinguished ``double threshold effect". Distinct difference is observed for the MSE in moderate and high signal-to-noise ratio (SNR) region (denoted by MMSE and HMSE respectively) and the second threshold effect occurs during the rapid transition from MMSE to HMSE with increasing SNR. The closed-form expressions for the MMSE, HMSE and Cram\'{e}r-Rao bound (CRB) are further derived, with HMSE coinciding with CRB. Since the HMSE is insensitive to frequency pattern, we focus on MMSE minimization by proper frequency optimization. We show that a prime-based frequency interval can be exploited for the purpose of both outlier suppression and UMR extension and design a special optimal rearrangement for any set of frequency interval, in the sense of MMSE minimization. An extremely simple frequency design method is finally developed. Simulation and field experiment verified that the proposed scheme considerably outperforms the existing method in UMR as well as MSE performance, especially in the transition from MMSE to HMSE, for Gaussian and non-Gaussian channel.

\end{abstract}


\begin{IEEEkeywords}
   Multi-frequency interferometry (MFI), RIPS, step frequency radar, DOA, double threshold effect, frequency pattern,
    LS, CRB, MSE, outlier, ambiguity function, unambiguous measurement range (UMR), prime, optimal rearrangement,
    field experiment. %
\end{IEEEkeywords}


\IEEEpeerreviewmaketitle

\section{Introduction}

  \IEEEPARstart{M}{ulti-frequency} interferometry (MFI) is an accurate phase-based ranging method, widely used in high-accuracy ranging\cite{Towers:03}\cite{Falaggis:09}, localization\cite{Kusy:06} or deformation and ground displacement detection\cite{Liu:11}, which estimates the range using the phase measurements recorded at multiple frequencies.

   How to select the measurement frequency in MFI system? There mainly exists three kinds of criterion.
   \begin{itemize}
     \item Maximizing the unambiguous measurement range (UMR).
     \item Minimizing the mean squared error (MSE) of range estimation.
     \item Decreasing the outlier (the wrong estimate lies outside the main lobe of the cost function\cite{Athley:05}) probability and enhancing the robustness to noise.
   \end{itemize}

   The paper will reveal the close relationship between these three criterions and the frequency pattern (refer to the frequency spacing of adjacent frequency in the paper) and pay special attention to the optimization of frequency pattern.

\subsection{Related work}

   Existing works have shown the UMR of some specially designed frequency patterns. Equally spaced frequency is employed in RIPS
   (Radio Interferometric Positioning System) \cite{Kusy:06} and the UMR is the synthetic wavelength of adjacent frequency, i.e. $c/\Delta f$. In \cite{Towers:03}, the geometric series of wavelengths is used and the UMR is extended to the synthetic wavelength of the closest two frequencies or the largest synthetic wavelength. The beat wavelength coincidence method is adopted in \cite{Falaggis:09} and the UMR is further enlarged by a factor of positive integer relative to that of \cite{Towers:03}.

   Meanwhile, ranging accuracy of multi-frequency interferometry system is another important target to be optimized. However, there exists limited works concerned with the accuracy improvement by frequency optimization as well as the performance analysis of ranging error in the literature.

   Among them, the estimation accuracy of RIPS method is in proportion to frequency separation, while the UMR is in inverse proportion to frequency separation\cite{Maroti:05}\cite{Kusy:06}. Then, there exists a compromise between the accuracy and UMR in RIPS.
   The multistage beat wavelength method is exploited for range estimation and frequency design in \cite{Towers:03} \cite{Falaggis:09}. The measurement phase noise will be amplified by $\Lambda_{0i}/\Lambda_{0,i+1}$ when two neighbor synthetic wavelengths $\Lambda_{0i}$,\;$\Lambda_{0,i+1}$ are used for phase unwrapping (determine the integer number of wavelength) in each stage. In order to maximize noise immunity, the wavelength ratio should be equal and the phase noise will be uniformly amplified in each stage. With this in mind, a geometric series of wavelengths is formed\cite{Towers:03}. However, the optimization criterion of \cite{Towers:03}\cite{Falaggis:09} is built on the beat wavelength method. Although it is a fast estimation method, it is far from the optimal one in term of estimation accuracy since only partial information is used in each stage.

   Note that similar problem arises in array design, that is, how to improve direction-of-arrival (DOA) estimation performance by optimizing the antenna positions of the $n$-element linear antenna array. The linear array geometry (antenna separation) design is most similar to our frequency pattern (frequency separation) design. So, we will also review the array geometry design of antenna array. The antenna positions are optimized to minimize the CRB of DOA estimation by numerical optimization in \cite{Gershman:97}. Based on the CRB principle, \cite{Doroslovacki:05} also considers the linear array optimization problem for joint range and DOA estimation.

   However, it seems not to be reasonable to optimize array geometry or frequency pattern only relying on CRB since it is a local measure of estimation accuracy.  The ambiguity function is introduced when the global accuracy performance and the outlier are evaluated   \cite{Doroslovacki:05}\cite{Haykin:91}\cite{Dogandzic:01}$\sim$\cite{Birinci:07}.

   Nonuniform linear array optimization is discussed in \cite{Haykin:91} with emphasis on the comparison between uniform array, nonredundant array and minimum redundant array. Array geometry is optimized in \cite{Gavish:96} to minimize the outlier probability. For each given geometry, the ambiguity function reaches its minimal at a certain direction and the optimal array is the one with the minimal value maximized. Both CRB and the ambiguity function are used to optimize the two-dimension array in \cite{Birinci:07}, and the genetic algorithm is adopted to search for the best array with the minimal CRB, under the constraint that the outlier probability is below a certain threshold.

   Outlier or sidelobe suppression is also an important topic in step frequency radar. Due to extremely large bandwidth, the estimation accuracy is easily satisfied and  more attention is paid to outlier suppression. To this end, random frequency step or different number of the frequency repetitions are proposed\cite{Axelsson:07}\cite{Rabideau:02}\cite{Levanon:02}. This is quite different from narrowband ranging. It is interesting to find in later section that the frequency pattern of our method is quite different from that in step frequency radar, where more weight is given to the frequencies distributed in the center instead of two ends of the band.

\subsection{Our work}

   It is known that the theoretical rigorous UMR of MFI is the least common multiple (LCM) of all wavelengths when expressed as
   integers by quantization\cite{McCormick:89}\cite{Wang:10}\cite{Vrana:93}. We prove that the UMR is the LCM of all synthetic
   wavelength of adjacent frequency or in inverse proportion to the greatest common divisor (GCD) of all frequency spacing when
   the initial frequency is properly chosen. Thus reducing the GCD will readily extend the measurement range. Compared with the
   wavelength-based method, the frequency spacing-based method can make full use of the band, in the sense that there have far more frequencies to be picked up for UMR extension within a given bandwidth\cite{Wang:11}. What is more, with the initial frequency fixed, it is possible to adjust the order of the frequency spacing (known as permutation) to generate various set of measurement
   frequencies, maintaining the same UMR, since the GCD of all frequency spacing remains unchanged. This forms the key foundation of later MSE optimization via the permutation of frequency spacing. It is also pointed out that the rigorous UMR is overly
   optimistic in case of random initial frequency and a more practical expression for the UMR is derived.

   The search based ambiguity function method is widely used to cope with the outlier in antenna array. However, it could not be directly applied to the MFI system because the obvious differences lie between the two systems. The parameter space of ambiguity function is ``bounded" to the direction of $-180^\circ\sim180^\circ$ for antenna array. It is not the case for MFI since the UMR will increase monotonously as the GCD of frequency spacing decreases.
   Moreover, constrained by the size and cost, the number of antenna is usually limited for array design while a large amount of frequencies may be required for accurate ranging in MFI.
   With this in mind, the ambiguity function of MFI is regarded as a theoretical foundation for the evaluation of a given frequency pattern or selecting the optimal frequency pattern from a very limited set.
   We also use the probability density function for analysis of the outlier probability and then design the frequency pattern with low outlier probability.

   The relationship between the theoretical attainable ranging accuracy and the frequency pattern is the major issue of the paper. It is well known that the LS estimator exhibits a threshold effect, i.e., the MSE increases sharply below a certain SNR.
   Surprising, it is found that the LS estimator of MFI system shows a distinguished ``double threshold effect", especially for
   $f_i\gg B$. Distinctive difference is observed for the MSE in moderate and high SNR region (denoted by MMSE and HMSE respectively). Beside the conventional threshold effect caused by outlier in low-SNR region, another threshold effect occurs
   during the rapid transition from large MMSE to small HMSE with increased SNR (see Fig.\ref{fig_double_mse}). The closed-form expression for the MMSE, HMSE and CRB are further obtained with HMSE coinciding with CRB.
   Under the assumption that $f_i\gg B$, which is true in most cases, different frequency design methods have almost the same HMSE
   and it is not the case for MMSE. Therefore, we focused on MMSE minimization by proper frequency optimization.

   A quite simple prime-based frequency design method is finally developed with no need of searching. That is, if priori knowledge is provided and the range uncertainty can be restricted to the mainlobe region of cost function, i.e., the outlier is excluded,
   the optimal measurement frequency is proved to be densely distributed on both ends of the band. Otherwise, a prime-based frequency interval is firstly constructed for the purpose of both sidelobe suppression and UMR extension, followed by a special optimal rearrangement of the frequency interval. The proof of the optimal rearrangement is given for the first time, in the sense of MMSE minimization.

   Different from the exhaust search or numerical optimization method, which have prohibitive computational complexity with increased antennas or measurement frequencies (Suppose $N$ measurement frequencies should be selected from $M$ possible frequencies, there exists $C_M^N$ combinations of measurement frequencies, so the antenna number $N=4$ is used in \cite{Gavish:96}), the proposed method may be free of the frequency number constraint.
   Moreover, for given bandwidth and frequency number, it is usually thought that the UMR and the ranging accuracy can not be improved simultaneously\cite{Cheng:11}. However, the proposed method and subsequent experiments have changed this opinion .

   Apart from the simulation analysis, field experiment in outdoor non-Gaussian channel (multipath error is incorporated inevitably) is also performed. The experiment results demonstrate that the proposed scheme considerably outperforms the existing method in UMR as well as MSE performance. Meanwhile, the simplicity and robustness even to non-Gaussian error are also attractive for its practical use.

   Notation: Upper (lower) bold face letters are used for matrices (column vectors).  $\vec{E}[x]$, $|x|$, $\angle{[x]}$, $\vec{Re}\{x\}$ and $\vec{Im}\{x\}$ denote the operation of taking expectation, absolute value, phase angle, real part and imaginary part of $x$, respectively. $[x]_{2\pi}$ denotes modulo-$2\pi$ operation, which reduces $x$ to the interval $(-\pi,\pi]$. $\hat{x}$ and $x^{\ast}$ denotes the estimate and optimal solution of $x$. $\vec{I}_N$ denotes the $N\times N$ identity matrix;  $\vec{1}_N$ denotes an $N\times1$ all-one column vector; $j=\sqrt{-1}$ ; The mth row and nth column entry of matrix $\vec{A}$ are denoted as $\vec{A}(m,n)$ or $\vec{[A]}_{m,n}$. The trace of $\vec{A}$ is given as $\mathrm{tr}(\vec{A})=\sum_m\vec{A}(m,n)$. $(\cdot)^T$, $(\cdot)^H$ and $(\cdot)^{-1}$ denote matrix transpose, conjugate transpose and inverse operators, respectively. We use script letters $\mathcal{A}$ to define sets and by $|\mathcal{A}|$ its cardinality.

\section{Problem Formulation and Performance Analysis}
    %

    \subsection{System model}

    In the absence of noise, the measurement phase of multi-frequency interferometry ranging system is related to the range $q_0$ by the following equation\cite{Maroti:05}\cite{Kusy:06}\cite{Towers:03}\cite{Falaggis:09}
    \begin{gather}
    \label{equa:basic}
        \varphi_0 (i)=\left[2\pi\frac{q_0}{\lambda_i} \right]_{2\pi}
    \end{gather}
    where $\lambda_i=c/f_i$ is the wavelength of the carrier frequency $f_i$, and $c$ is the speed of light, $q_0$ is the true range (it is the q-range in radio interferometry and path length difference in optical interferometry)and $\varphi_0 (i)$ is the ideal measurement phase wrapped to the principal interval of $(-\pi,\pi]$. The subscript ``0" denotes the true or ideal value.
    Equation (\ref{equa:basic}) is equivalent to
    \begin{gather}
        q_0 =\varphi_0 (i)\frac{\lambda_i }{2\pi} + n_i \lambda_i
    \end{gather}
    where $n_i$ is an unknown integer. For single measurement frequency, the range must satisfy $|q_0 |<\lambda_i/2$ and then $q_0 = \varphi (i)\lambda_i/{2\pi}$ with $n_i=0$. Otherwise, ambiguity appears because $n_i$ is undetermined. Therefore, multi-frequency ranging is usually needed.\\
    \indent There are various methods to estimate $q_0$ with a set of phase $\varphi (i)$ given, such as the CRT method \cite{Huang:87}\cite{Xia:07}\cite{Wang:10}, the LS-based search method \cite{Kusy:06}, the modulo conversion method\cite{Sundaram:00}, the multistage unwrapping method\cite{Towers:03}\cite{Cheng:85}\cite{Vrana:93}\cite{Falaggis:09}, the excess fractions method\cite{Benoit:1898}. Among which, the LS-based search method can achieve best estimation performance and will be adopted in the paper (we concentrate on the frequency design rather than estimation method), in the form of
    \begin{align}
    \label{equa:LS_estimation}
       &\hat q = \underset {q}{\textrm{argmin}}\; S(q),\quad  S(q)=\sum_{i=1}^N \left(\left[\varphi (i)-\hat \varphi_q (i)\right]_{2\pi}\right)^2 \notag \\
       &\hat \varphi_q (i) =\left[2\pi\frac{q}{\lambda_i} \right]_{2\pi},\quad
       \varphi(i)=\left[2\pi\frac{q_0}{\lambda_i} +\theta_e (i)\right]_{2\pi}
    \end{align}
    where $S(q)$ denotes the cost function and $\varphi(i)$ is the measurement phase with phase error $\theta_e (i)$.

    \subsection{Maximum Unambiguous Range}
    \label{subsect:UMR}
    Before a formal derivation of the unambiguous range, we begins with an intuitive interpretation of the ambiguity problem.\\
    \indent Fig. \ref{fig_intuition} shows the relationship between theoretical and measurement phase without noise under different
    range $q$. The theoretical and measurement phase are represented by red lines and blue stars. Although the true range $q$ as
    well as the theoretical phase are different, the measurement phase are the same with  $\Delta f=1.5\textrm{MHz}$. Note that
    the measurement phase is all the information we have in estimating $q$. In other words, the range $q=50$ and $q=250$ are
    indistinguishable for $\Delta f=1.5\textrm{MHz}$ and the ambiguity happens. If we measure the phase with a frequency
    spacing of $\Delta f=0.5\textrm{MHz}$ instead, the difference will immediately appear. This suggests that the ambiguity is
    closely related to the frequency spacing.
    \begin{figure}[!t]  
     \centerline{
              \subfloat[Case I]{\includegraphics[height=2.8in,width=3in]{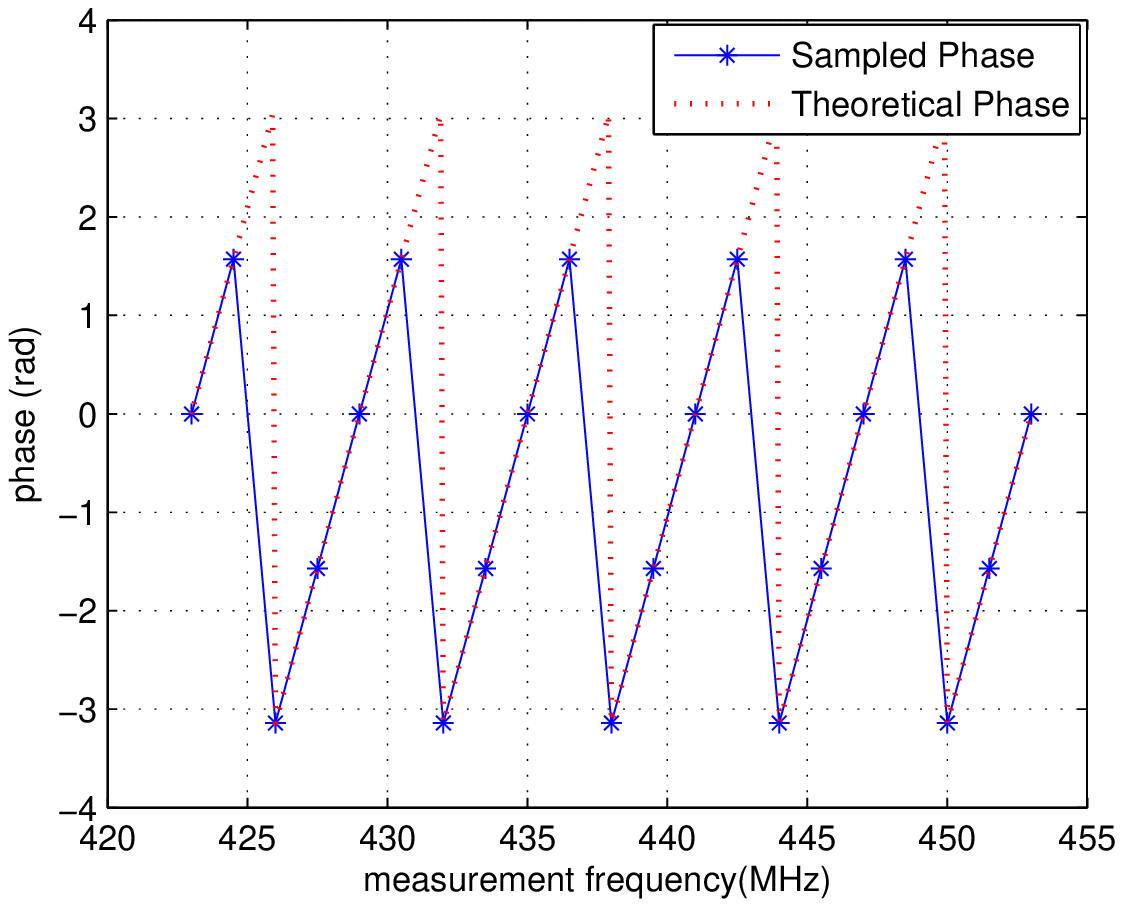}%
               \label{fig_intuition1}}
     \hspace{-0mm}
              \subfloat[Case II]{\includegraphics[height=2.8in,width=3in]{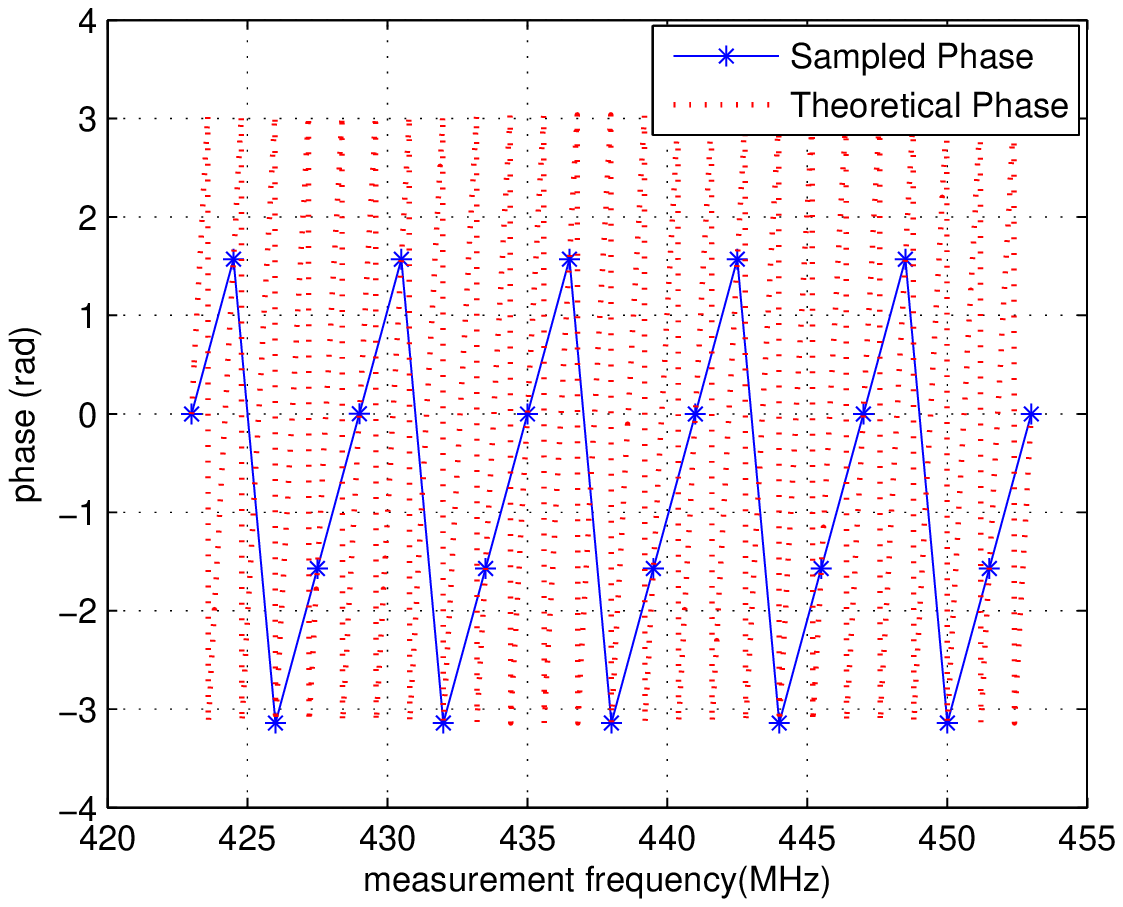}%
               \label{fig_intuition2}}
               }
     \caption{The sampled and theoretical wrapped phase versus frequency. (a) $q=50\,\textrm{m}$,$\Delta f=1.5\,\textrm{MHz}$.
               (b) $q=250\,\textrm{m}$,$\Delta f=1.5\,\textrm{MHz}$.}%
     \label{fig_intuition}
   \end{figure}

    For multi-frequency interferometry ranging, the phase under $q+\Delta L$ and $q$ are given by
    \begin{align}
        \hat \varphi_q (i)&=\left[ 2\pi f_i \frac{q}{c} \right]_{2\pi} , i = 1,2 \cdots N \notag\\
        \hat \varphi_{q+\Delta L} (i) &= \left[ {2\pi f_i \frac{q+\Delta L}{c}} \right]_{2\pi}
    \end{align}
    Ambiguity occurs if and only if the ranges $q$ and $q+\Delta L$ have the same phase in all measurement frequencies  $f_i, i=1,2,\cdots N$. That is,
    \begin{align}
    \label{equa:UMR_condition}
        \left[ ( \hat \varphi_{q+\Delta L}(i) - \hat \varphi_{q} (i) ) \right]_{2\pi}  &=0, \quad i = 1,2 \cdots N \notag\\
        \left[\frac{2\pi }{c}f_i \Delta L \right]_{2\pi} &= 0
    \end{align}
    Then
    \begin{gather}
    \label{equa:UMR_CRT}
        \Delta L  = n_1 \lambda_1 =n_2 \lambda_2 = \cdots n_N \lambda_N
    \end{gather}

    Where $n_i \in \mathbf{Z}$ and  $\lambda_i$ is the wavelength of frequency $f_i$.
    Clearly, there are indefinitely many solutions of $\Delta L$ satisfying equation system in (\ref{equa:UMR_condition}). Obviously, the minimum among those solutions is defined as the unambiguous measurement range (UMR). Equation (\ref{equa:UMR_CRT}) implies that the UMR is the least common multiple of all wavelengths\cite{Vrana:93}.

    To get closed-form solutions of $\Delta L$ in terms of frequency spacing instead of wavelength, equation system in (\ref{equa:UMR_condition}) is transformed into an equivalent form as follows:
    \begin{itemize}
      \item Condition 1: Equal initial phase at initial frequency $\hat \varphi_{q+\Delta L} (1)=\hat \varphi_q (1)$.
      \item Condition 2: Equal phase increment between adjacent frequencies $\Delta \hat\varphi_q(i,i+1)=\Delta\hat\varphi_{q+ \Delta L} (i,i+1)$.
    \end{itemize}
    where $\Delta \hat \varphi_q (i,j) = \left[\hat \varphi_q (j)-\hat \varphi_q (i) \right]_{2\pi} $.

    From Condition 1, we get
    \begin{align}
       \left[2\pi f_1 \frac{q}{c}\right]_{2\pi}&= \left[2\pi f_1 \frac{q+\Delta L}{c} \right]_{2\pi} \notag\\
       \Delta L&=k_0 c/{f_1} = k_0 \lambda _1
    \end{align}

    From Condition 2, we get
    \begin{align}
        \left[2\pi \Delta f_i \frac{q}{c} \right]_{2\pi}&=\left[2\pi \Delta f_i \frac{q + \Delta L}{c} \right]_{2\pi}  \notag\\
        \Delta L &= k_i c/{\Delta f}_i
    \end{align}

      where $\Delta f_i= f_{i+1}-f_i$ (assume $f_1<f_2<\cdots<f_N$) and $k_0,k_i$ are positive integer.
       \newtheorem {theorem}{\noindent Theorem}
       \begin{theorem}
        \label{theo:UMR}
         If $\Delta f_{\min} $ is the greatest common divisor (GCD) of all the adjacent frequency spacing and $f_1=k\Delta f_{\min} $, then the UMR is  $\Delta L=c/\Delta f_{\min}$.
       \end{theorem}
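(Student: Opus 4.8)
The plan is to work from the equivalent system of Condition~1 and Condition~2 already established in the excerpt, which reduce the ambiguity requirement (\ref{equa:UMR_condition}) to the two families of constraints $\Delta L = k_0 c/f_1$ and $\Delta L = k_i c/\Delta f_i$, with $k_0, k_i$ positive integers. The UMR is by definition the smallest positive $\Delta L$ lying simultaneously in all of these, i.e. the least common multiple of $c/f_1$ and the $c/\Delta f_i$. Writing $x = \Delta L/c$, the constraints become the purely number-theoretic statements $f_1 x \in \mathbf{Z}$ and $\Delta f_i x \in \mathbf{Z}$ for every $i$, and the goal is to show the smallest positive $x$ is $1/\Delta f_{\min}$.

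First I would dispose of the family of phase-increment constraints (Condition~2). Let $d = \Delta f_{\min} = \gcd(\Delta f_1, \ldots, \Delta f_{N-1})$. I claim the set $\{x : \Delta f_i x \in \mathbf{Z},\; i = 1, \ldots, N-1\}$ coincides with $\{x : d x \in \mathbf{Z}\}$. The inclusion $\subseteq$ follows from B\'{e}zout: choosing integers $a_i$ with $\sum_i a_i \Delta f_i = d$ gives $d x = \sum_i a_i (\Delta f_i x) \in \mathbf{Z}$ whenever each $\Delta f_i x \in \mathbf{Z}$. The reverse inclusion is immediate, since $d \mid \Delta f_i$ means $\Delta f_i/d$ is an integer and $\Delta f_i x = (\Delta f_i/d)(d x)$. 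Hence Condition~2 alone is satisfied exactly on the lattice $\Delta L \in (c/d)\,\mathbf{Z}$, whose smallest positive element is $c/\Delta f_{\min}$.

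Next I would show the initial-phase constraint (Condition~1) is redundant under the hypothesis $f_1 = k\Delta f_{\min}$. For any $\Delta L$ satisfying Condition~2 we have $d x \in \mathbf{Z}$, and therefore $f_1 x = k d x = k(d x) \in \mathbf{Z}$, so $f_1 \Delta L/c$ is automatically an integer. Thus imposing Condition~1 removes nothing, the feasible set for the full system is still $(c/\Delta f_{\min})\,\mathbf{Z}$, and taking its minimal positive member yields $\Delta L = c/\Delta f_{\min}$, the asserted UMR.

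The one place that needs care --- and the main obstacle --- is the well-definedness of the GCD step: the B\'{e}zout argument presumes the $\Delta f_i$ are commensurable (integers in the quantized units the paper works with), so I would state the quantization convention explicitly before invoking it. I would also note \emph{why} the hypothesis $f_1 = k\Delta f_{\min}$ is essential: without it the relevant divisor would be $\gcd(f_1, \Delta f_1, \ldots, \Delta f_{N-1})$, which can be strictly larger than $\Delta f_{\min}$ and would shrink the lattice spacing, shortening the UMR. The hypothesis is exactly the condition guaranteeing that $f_1$ contributes no new constraint beyond those already forced by the spacings. Everything else is routine lattice arithmetic.
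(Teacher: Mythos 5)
Your proof is correct and starts from the same reduction the paper uses (Conditions~1 and~2), but the way you close the argument is genuinely different and in fact tighter. The paper first exhibits $c/\Delta f_{\min}$ as one ambiguous shift and then argues by contradiction: it posits that the true minimum has the form $\Delta L'=\Delta L/m$ for a positive integer $m$ and deduces that the GCD of the spacings would then be $m\Delta f_{\min}$. That step silently assumes the minimal ambiguous shift divides $c/\Delta f_{\min}$ (equivalently, that the set of ambiguous shifts is a discrete subgroup generated by its least positive element), which is true but left unjustified. Your B\'ezout argument sidesteps this entirely by characterizing the full solution set of Condition~2 as the lattice $(c/\Delta f_{\min})\,\mathbf{Z}$ and then showing that Condition~1 is absorbed by the hypothesis $f_1=k\Delta f_{\min}$; the minimum is then read off directly rather than established by contradiction. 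Your explicit caveat about commensurability of the $\Delta f_i$ is also well placed, since the paper takes the quantization convention for granted.

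One correction to your closing remark: the inequality runs the other way. Adjoining $f_1$ to the list of arguments can only decrease the GCD or leave it unchanged, so when $f_1\neq k\Delta f_{\min}$ the relevant divisor $\gcd(f_1,\Delta f_1,\dots,\Delta f_{N-1})$ is at most $\Delta f_{\min}$ and possibly strictly smaller; this \emph{enlarges} the lattice spacing $c/\gcd(\cdot)$ and hence \emph{lengthens} the rigorous UMR rather than shortening it. That is precisely the situation Theorem~\ref{theo:UMR_modify} addresses: the rigorous UMR becomes optimistically large, while a near-ambiguity of the cost function persists near $c/\Delta f_{\min}$, which is why the paper introduces the practical UMR. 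This slip does not affect the validity of your proof of the theorem as stated, since the hypothesis $f_1=k\Delta f_{\min}$ is in force throughout.
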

       \begin{proof}
         Without loss of generality, we assume $f_{i+1}>f_i$ and the separation of two adjacent frequencies is $f_{i + 1}-f_i=
         \Delta f_i =k_i\Delta f_{\min}$, $k_i$ is a positive integer. It is clear that $\Delta L=c/\Delta f_{\min}=k_ic/\Delta f_i$, and $\Delta L=kc/f_1$, then $\Delta L=c/\Delta f_{\min}$  is one of the ambiguous range. If it is not the minimum ambiguous range (UMR), let $\Delta L'=\Delta L/m=c/(m\Delta f_{\min})$ be the minimum ambiguous range, $m$ is positive integer, using condition 2, then
         \begin{align}
         \label{equa:mini_UMR}
         \frac{{2\pi }}{c}k_i \Delta f_{\min } \Delta L' = p_i 2\pi \notag\\
           f_{i+1}-f_i=mp_i\Delta f_{\min}
         \end{align}
        where $p_i$ is arbitrary positive integer. (\ref{equa:mini_UMR}) indicates that the greatest common divisor is $m\Delta f_{\min}$ instead of $\Delta f_{\min}$. This conclusion conflicts with the hypothesis. Therefore, we claim that $c/\Delta f_{\min}$ is the UMR.
       \end{proof}

       \newtheorem {corollary}{\noindent Corollary}
       \begin{corollary}
       \label{coro:RIPS}
       For equal-spaced measurement frequency, i.e., $\Delta f_i=\Delta f$, and $f_1=k\Delta f$, then the UMR is $\Delta L=c/\Delta f$.
       \end{corollary}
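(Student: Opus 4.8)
The plan is to recognize the corollary as an immediate specialization of Theorem~\ref{theo:UMR} rather than to re-derive the unambiguous range from first principles via Conditions~1 and~2. First I would observe that when all adjacent spacings coincide, $\Delta f_i = \Delta f$ for every $i$, the greatest common divisor of the collection $\{\Delta f_1,\dots,\Delta f_{N-1}\}$ is simply $\Delta f$ itself, so that $\Delta f_{\min}=\Delta f$. This is the single computational fact that links the equal-spaced hypothesis of the corollary to the general GCD hypothesis of the theorem.

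Next I would verify that the initial-frequency condition of Theorem~\ref{theo:UMR} is met. The corollary assumes $f_1=k\Delta f$; substituting the identity $\Delta f_{\min}=\Delta f$ just established, this reads $f_1=k\Delta f_{\min}$, which is precisely the requirement imposed in the theorem. With both the GCD identity and the initial-frequency condition in hand, I would invoke Theorem~\ref{theo:UMR} directly and conclude $\Delta L = c/\Delta f_{\min} = c/\Delta f$, which is the claimed UMR.

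Since every hypothesis of the corollary reduces to a trivial instance of the general result, there is no genuine obstacle in this argument; the only point that formally deserves a word is the GCD computation itself, namely confirming that a finite set all of whose members equal $\Delta f$ has $\Delta f$ as its greatest common divisor. This is immediate, so the entire proof amounts to substituting $\Delta f_{\min}=\Delta f$ into the conclusion of Theorem~\ref{theo:UMR}.
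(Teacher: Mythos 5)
Your proposal is correct and matches the paper's intent exactly: the paper states this as an immediate corollary of Theorem~\ref{theo:UMR} with no separate argument, and your observation that $\Delta f_{\min}=\Delta f$ when all spacings are equal, together with the check that $f_1=k\Delta f=k\Delta f_{\min}$, is precisely the specialization required. Nothing is missing.
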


       \begin{theorem}
       \label{theo:UMR_modify}
        Suppose $\Delta f_{\min} $ is the GCD of all the adjacent frequency spacing, if the condition $f_1=k\Delta f_{\min} $ is not met in Theorem \ref{theo:UMR} and let $f_1 =(k_1 +\varepsilon)\Delta f_{\min}$, $-0.5<\varepsilon\le0.5$, $q_0$ is the true range, then
        \renewcommand{\labelenumi}{(\arabic{enumi})}
        \begin{enumerate}
        \renewcommand{\baselinestretch}{0.2}\normalsize
         \item	The cost function $S(q)$ will achieve a local minimum at $q=q_0 +c/\Delta f_{\min}- \varepsilon{\sum_{i=1}^N \lambda_i ^{-1}}\bigg/{\sum_{i=1}^N \lambda_i^{-2}}$, with
           \begin{gather}
              \hspace{-30pt}\underset{f_1/B\rightarrow\infty}{\lim}
                 S(q|q=q_0 +c/\Delta f_{\min}- \varepsilon{\sum_{i=1}^N \lambda_i ^{-1}}\bigg/{\sum_{i=1}^N \lambda_i^{-2}})=0
           \end{gather}
           where $S(q)$ is defined in (\ref{equa:LS_estimation}).
         \item	 For any $f_1/B$, there does not exist any $\Delta l$ with $0<\Delta l<c/\Delta f_{\min}- \varepsilon{\sum_{i=1}^N \lambda_i ^{-1}}\bigg/{\sum_{i=1}^N \lambda_i^{-2}}$, satisfy
           \begin{gather}
              \hspace{-30pt}\underset{f_1/B\rightarrow\infty}{\lim}
                 S(q|q=q_0 +\Delta l)=0
           \end{gather}
         \item  Assume that the measurement phase noise $\theta_e (i)$ at each frequency $f_i$ is independent and identically distributed zero-mean Gaussian noise with variance $E\{\theta_{e}^{2}(i)\}=\sigma_{\theta}^{2}$. If $f_1/B\geq4$ and $\textrm{SNR}>0$(SNR is defined in section \ref{subsect:CRB}), then, with at least $3\sigma$ reliability(or $99.7\%$ reliability), we have
             \begin{align}
                &P_a=P\left( S\left(q_0+c/\Delta f_{\min}\hspace{-2pt}-\hspace{-2pt}\varepsilon{\sum_{i=1}^N \lambda_i^{-1}}\bigg/{\sum_{i=1}^N \lambda_i^{-2}}\right)< S\left(q_0\right) \right)   \notag\\
                &>\frac{1}{2} \left(1- \textrm{erf}\left(\frac{\sqrt{N} W}{2\sqrt{2}\sigma_\theta } \right) \right)
             \end{align}
             where $\textrm{erf}\left(x\right)=\frac{2}{\sqrt{\pi}} \int_0^x {e^{-t^2}}dt$, and $W=2\pi|\varepsilon|\frac{B}{f_1}$.
         \vspace{-5pt}
       \end{enumerate}
       \end{theorem}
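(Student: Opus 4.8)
The whole argument rests on reducing $S$ to a sum of wrapped phases and exploiting the arithmetic structure of the frequencies. Since $f_1=(k_1+\varepsilon)\Delta f_{\min}$ and every spacing obeys $\Delta f_i=k_i\Delta f_{\min}$ with $\gcd_i k_i=1$ (this is what ``$\Delta f_{\min}$ is the GCD'' means), each frequency can be written as $f_i=(K_i+\varepsilon)\Delta f_{\min}$ with $K_i\in\mathbf{Z}$. For parts (1) and (2) I take the noiseless cost (the ambiguity function, $\theta_e(i)\equiv0$), for which the modulo identity used in (\ref{equa:UMR_condition}) gives $S(q_0+\Delta l)=\sum_{i=1}^N(\left[2\pi f_i\Delta l/c\right]_{2\pi})^2$. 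To prove (1) I would substitute $\Delta l=c/\Delta f_{\min}+\delta$. Because $2\pi f_i(c/\Delta f_{\min})/c=2\pi(K_i+\varepsilon)$, the $i$-th wrapped phase becomes $\left[2\pi\varepsilon+2\pi\delta/\lambda_i\right]_{2\pi}$; for $f_1/B$ large the minimizing $\delta$ is $O(|\varepsilon|B/f_1)$, so all arguments stay inside $(-\pi,\pi]$ and locally $S=(2\pi)^2\sum_i(\varepsilon+\delta/\lambda_i)^2$, a strictly convex quadratic in $\delta$. Setting $dS/d\delta=0$ yields $\delta^{\ast}=-\varepsilon(\sum_i\lambda_i^{-1})/(\sum_i\lambda_i^{-2})$, exactly the claimed minimizer, and exhibits it as a genuine local minimum. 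Back-substitution gives
\[
S(\Delta l^{\ast})=(2\pi\varepsilon)^2\Big(N-\tfrac{(\sum_i\lambda_i^{-1})^2}{\sum_i\lambda_i^{-2}}\Big)=(2\pi\varepsilon)^2\,\frac{N\sum_i b_i^2-(\sum_i b_i)^2}{c^2\sum_i\lambda_i^{-2}},\quad b_i:=f_i-f_1 .
\]
The decisive point is that the numerator is independent of $f_1$ (the $f_1$ terms cancel) while $\sum_i\lambda_i^{-2}=c^{-2}\sum_i f_i^2\ge Nf_1^2/c^2\to\infty$, so $S(\Delta l^{\ast})\to0$ as $f_1/B\to\infty$, which is the displayed limit in (1).

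For (2) I would track a candidate null $\Delta l\in(0,\Delta l^{\ast})$ and argue by contradiction. If $S(q_0+\Delta l)\to0$ then every $\left[2\pi f_i\Delta l/c\right]_{2\pi}\to0$; forming consecutive differences (Condition 2) forces the principal value of $2\pi k_i\Delta f_{\min}\Delta l/c$ to vanish, so each $k_i\Delta f_{\min}\Delta l/c$ approaches an integer. Since $\gcd_i k_i=1$, B\'ezout's identity makes $\Delta f_{\min}\Delta l/c$ itself approach an integer $p$, i.e. $\Delta l$ approaches $p\,c/\Delta f_{\min}$; requiring in addition $\left[2\pi f_1\Delta l/c\right]_{2\pi}\to0$ (Condition 1) removes the trivial $p=0$ case and pins the smallest admissible null to $p=1$, whose $\delta$-corrected location is exactly $\Delta l^{\ast}$. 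Hence no value strictly between $0$ and $\Delta l^{\ast}$ can drive $S$ to zero, which is (2).

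For (3) I reinstate the i.i.d. Gaussian phase noise and work on the $3\sigma$ event $\{|\theta_e(i)|<3\sigma_\theta\ \forall i\}$ (probability $\ge99.7\%$), on which $f_1/B\ge4$ keeps the residual phases $c_i:=2\pi(\varepsilon+\delta^{\ast}/\lambda_i)$ small enough that no wrapping occurs, so that $S(q_0)=\sum_i\theta_e(i)^2$ and $S(q_0+\Delta l^{\ast})=\sum_i(\theta_e(i)-c_i)^2$. Consequently
\[
P_a=P\Big(S(q_0+\Delta l^{\ast})<S(q_0)\Big)=P\Big(\textstyle\sum_i c_i\theta_e(i)>\tfrac12\sum_i c_i^2\Big),
\]
and since $Z:=\sum_i c_i\theta_e(i)\sim\mathcal{N}(0,\sigma_\theta^2 V)$ with $V:=\sum_i c_i^2=S(\Delta l^{\ast})$, this equals $\tfrac12(1-\mathrm{erf}(\sqrt{V}/(2\sqrt2\,\sigma_\theta)))$. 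The proof then closes by bounding $V=(2\pi\varepsilon)^2(N-(\sum_i\lambda_i^{-1})^2/\sum_i\lambda_i^{-2})\le(2\pi\varepsilon)^2 NB^2/(4f_1^2)<NW^2$, using $\sum_i f_i^2\ge Nf_1^2$ and the variance estimate $N\sum_i b_i^2-(\sum_i b_i)^2\le N^2B^2/4$, and invoking the monotonicity of $\mathrm{erf}$, which converts the equality into the stated lower bound with $W=2\pi|\varepsilon|B/f_1$.

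I expect the main obstacle to be part (2): making rigorous the sense in which ``$\lim_{f_1/B\to\infty}S=0$'' selects discrete null locations, in particular justifying the B\'ezout step and the claim that Condition 1 excludes every $p$ strictly between the trivial $p=0$ and $p=1$. Part (3) is then routine once the no-wrap linearization is justified; the only care needed there is to confirm that $f_1/B\ge4$ together with the $3\sigma$ bound on $\theta_e(i)$ indeed keeps $\theta_e(i)-c_i$ inside $(-\pi,\pi]$, so that the quadratic expansions of $S(q_0)$ and $S(q_0+\Delta l^{\ast})$ hold exactly.
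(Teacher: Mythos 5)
Your proposal is correct, and for parts (1) and (3) it is essentially the paper's own argument: reduce to the noiseless wrapped-phase sum $T(x)=\sum_i([2\pi(x+\varepsilon\lambda_i)/\lambda_i]_{2\pi})^2$, argue no wrapping occurs, minimize the resulting quadratic to land on $x^{\ast}=-\varepsilon\sum_i\lambda_i^{-1}/\sum_i\lambda_i^{-2}$, bound the minimum by $NW^2$, and for (3) reduce $P_a$ to a one-dimensional Gaussian tail $\tfrac12(1-\mathrm{erf}(\sqrt{\sum_i w_i^2}/(2\sqrt2\sigma_\theta)))$ with the same $3\sigma$/no-wrap bookkeeping (your exact evaluation $V=(2\pi\varepsilon)^2(N-(\sum_i\lambda_i^{-1})^2/\sum_i\lambda_i^{-2})$ and the Popoviciu-type bound $V\le NW^2/4$ is in fact slightly sharper than the paper's termwise bound $V<NW^2$). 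Part (2) is where you genuinely diverge, and your route is the better one: the paper assumes without justification that the minimal asymptotic null $\Delta l'$ divides $c/\Delta f_{\min}-\varepsilon\sum\lambda_i^{-1}/\sum\lambda_i^{-2}$ and then derives a contradiction from $[k_j]_n\neq 0$; your argument instead observes that the increment phases $[2\pi k_i\Delta f_{\min}\Delta l/c]_{2\pi}$ do not depend on $f_1$ at all, so "tends to zero" forces exact vanishing, B\'ezout with $\gcd_i k_i=1$ then forces $\Delta l=p\,c/\Delta f_{\min}$ exactly, and Condition 1 gives $[2\pi\varepsilon p]_{2\pi}=0$, impossible for $p=1$ and $\varepsilon\neq 0$. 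This closes the very gap you worried about — the constancy in $f_1$ is what makes the B\'ezout step exact rather than asymptotic — so the obstacle you flag is not real. Two cosmetic slips do not affect the argument: the minimizing $\delta$ is $O(|\varepsilon|\lambda_1)$, not $O(|\varepsilon|B/f_1)$ (it is the residual \emph{phases} that are $O(|\varepsilon|B/f_1)$), and $p=1$ is not itself an "admissible null" — it is merely the only candidate in $(0,\Delta l^{\ast})$ left to exclude, which Condition 1 does.
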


       \begin{proof}
          See Appendix \ref{append:UMR}.
       \end{proof}
   Remark:
   This implies that $c/\Delta f_{\min}-\varepsilon\frac{\sum_{i=1}^N \lambda _i^{-1}}{\sum_{i=1}^N \lambda_i^{-2} }$ is the minimal unambiguous range (UMR) only in the sense of limit.
   However, if $f_1/B\gg1$, the cost function at  $q_0 +c/\Delta f_{\min}- \varepsilon{\sum_{i=1}^N \lambda _i ^{-1}}\big/{\sum_{i=1}^N \lambda_i^{-2} }$ will be indistinguishable from the one at $q_0$, with an extreme large probability, even in the presence of extremely small noise. For example, when $f_1/B=10$,  $\varepsilon=0.1$, $N=40$, we have $P_a>30.8\%$ and $P_a>18.7\%$ for $\textrm{SNR}=5\textrm{dB}$ and $\textrm{SNR}=10\textrm{dB}$ respectively.
   We then define $\Delta L=c/\Delta f_{\min}- \varepsilon{\sum_{i=1}^N \lambda _i ^{-1}}\big/{\sum_{i=1}^N \lambda_i^{-2} }$ as the practical UMR, denoted as P-UMR. If the measurement range is beyond the limit of P-UMR, a very large error probability will appear, as is verified in Fig.\ref{fig_UMR_bandwidth}(a) and Fig.\ref{fig_UMR_experiment}. The UMR derived in Theorem \ref{theo:UMR} is a special case of Theorem \nobreak\ref{theo:UMR_modify} with $\varepsilon=0$.\\
   \indent Note that the UMR of  CRT method is the product of all wavelengths according to (\ref{equa:UMR_CRT}) when expressed as prime number by quantization\cite{McCormick:89}\cite{Huang:87} and is much larger than the P-UMR in Theorem \ref{theo:UMR_modify}, it becomes an overly optimistic estimation since it is not attainable in practice.\\
   \indent  Since $\varepsilon{\sum_{i=1}^N \lambda _i ^{-1}}\big/{\sum_{i=1}^N \lambda_i^{-2} }<\lambda_1=c/f_1$ is negligible compared to $c/\Delta f_{\min}$ under the assumption $f_1\gg B\gg \Delta f_{\min}$, the UMR is thought of $c/\Delta f_{\min}$ for both $f_1=k\Delta f_{\min} $ and $f_1\neq k\Delta f_{\min} $ in latter part of the paper.

   \subsection{Outlier}

   Inspired by the result in \cite{Gavish:96}\cite{Birinci:07}, we introduce the ambiguity function to cope with the outlier (the wrong estimate lies outside the main lobe of the cost function) in multi-frequency interferometry ranging.

   It is proved in \cite{Athley:05} \cite{Birinci:07} that the outlier probability is in proportion to the ambiguity function
   \begin{gather}
    \mathrm{C} (\Delta q)=\frac{ \left|  \mathbf{s}(q)^H  \mathbf{s}(q+\Delta q )  \right|^2 } {N^2}
      = \frac{1}{N^2}\left| \sum_{i=1}^N e^{j\frac{2\pi}{c}f_i \Delta q}  \right|^2
   \end{gather}
   where
   \begin{gather*}
      \mathbf{s}(q)=\left[e^{j2\pi f_1 q/c},e^{j2\pi f_2 q/c},\cdots e^{j2\pi f_N q/c}  \right]^T
   \end{gather*}


   It is obvious that the ambiguity function attains its global maximal $1$ when $\Delta q=k\Delta L$, where $k$ is an integer. This kind of ambiguity (UMR) is inherent and inevitable. The UMR is excluded with the constraint $|q_0|<\Delta L$. Meanwhile, for $\frac{B_m}{2} < | \Delta q | < \Delta L- \frac{B_m}{2}$, $B_m$ is the width of mainpeak, there still exists the second maximal which may exceed the value achieved at the true $q_0$ with the help of noise and lead to large error, i.e., outlier. The objective is to solve the following optimization problem

   \begin{gather}
    \underset{\Delta \mathbf{f} }{\min} \left\{  \underset {\Delta q} {\max}
    \frac{1}{N^2}\left| \sum_{i=1}^N e^{j\frac{2\pi}{c}f_i \Delta q}  \right|^2 \right\}\notag\\
    \mathrm{subject\; to} \quad \frac{B_m}{2} < | \Delta q | < \Delta L- \frac{B_m}{2}
   \end{gather}
   Since
   \begin{gather}
    \mathrm{C} (\Delta q ) = \frac{1}{N} +\frac{2}{N^2}\mathbf{Re} \left\{\sum_{i=1}^{N-1} \sum_{k=i+1}^N e^{ j\frac{2\pi (f_i-f_k)\Delta q}{c} } \right\}
   \end{gather}
   Then the problem becomes
   \begin{gather}
     \underset{\Delta \mathbf{f} }{\min} \left\{  \underset {\Delta q} {\max}
      \mathbf{Re}  \left\{   \sum_{i=1}^{N-1} \sum_{k=i+1}^N e^{ j\frac{2\pi (f_i-f_k)\Delta q}{c} }   \right\}   \right\}
   \end{gather}
   Note that we are interested in those $\Delta q$ satisfying $\left[\frac{2\pi}{c}(f_i-f_k)\Delta q \right]_{2\pi} \ll 1$ , which are prone to resulting in outlier. Using a first-order Taylor series expansion of $e^x$, we have

  \begin{align}
      \label{equa:Taylor}
      \mathbf{Re}  \left\{   \sum_{i=1}^{N-1} \sum_{k=i+1}^N e^{ j\frac{2\pi (f_i-f_k)\Delta q}{c} }   \right\}
      \simeq \sum_{i=1}^{N-1}\sum_{k=i+1}^N \left(1-\frac{1}{2}\left(\left[\frac{2\pi}{c}(f_i-f_k)\Delta q \right]_{2\pi}\right)^2  \right)
  \end{align}
  According to {\cite{Wu:95}}
  \begin{align}
      \sum_{i=1}^{N-1}\sum_{k=i+1}^N \left(\left[\frac{2\pi}{c}(f_i-f_k)\Delta q \right]_{2\pi}\right)^2
      = \left[\frac{2\pi}{c}\Delta q\Delta \mathbf{f}^T \mathbf{\Gamma}^T \right]_{2\pi} \mathbf{R}^{-1}
         \left[\frac{2\pi}{c}\Delta q \mathbf{\Gamma}\Delta \mathbf{f} \right]_{2\pi}
  \end{align}
  where
  \begin{gather*}
    \Delta \mathbf{f} =[\Delta f_1 ,\Delta f_2 , \cdots\Delta f_{N-1} ]^T,\quad
     \mathbf{\Gamma}= \left[\begin{array}{*{20}c}
                       1 & 0 & \cdots & 0  \\
                       1 & 1 & \cdots & 0  \\
                     \vdots  &  \vdots  & 1 & 0  \\
                       1 & 1 & 1 & 1  \\
                     \end{array}  \right]
   \end{gather*}
   \begin{gather}
   \label{equa:R_u}
      \vec{R}^{-1}\hspace{-2pt}=\hspace{-2pt}\vec{I}\hspace{-1pt}_{N\hspace{-1pt}-\hspace{-1pt}1}\hspace{-2pt}-\hspace{-2pt}\frac{\vec{uu}^T}{N},\, \vec{R}\hspace{-2pt}=\hspace{-2pt}\vec{I}\hspace{-1pt}_{N\hspace{-1pt}-\hspace{-1pt}1}\hspace{-2pt}+\hspace{-2pt}\vec{uu}^T,\,
      \vec{u}\hspace{-2pt}=\hspace{-2pt}[1,1,\cdots 1]^T \hspace{-2pt}= \hspace{-2pt}\mathbf{1}\hspace{-1pt}_{N\hspace{-1pt}-\hspace{-1pt}1}
   \end{gather}

   It becomes the max-mini optimization problem
   \begin{gather}
       \underset{\Delta \mathbf{f} }{\max} \left\{  \underset {\Delta q} {\min}
       \left[\frac{2\pi}{c}\Delta q\Delta \mathbf{f}^T \mathbf{\Gamma}^T \right]_{2\pi} \mathbf{R}^{-1}
       \left[\frac{2\pi}{c}\Delta q \mathbf{\Gamma}\Delta \mathbf{f} \right]_{2\pi} \right\}
   \end{gather}

   When $N$ is large, it is approximated as
   \begin{gather}
    \label{equa:cost_outlier}
      \underset{\Delta \mathbf{f} }{\max} \left\{  \underset {\Delta q} {\min}
      \left[\frac{2\pi}{c}\Delta q\Delta \mathbf{f}^T \mathbf{\Gamma}^T \right]_{2\pi}
      \left[\frac{2\pi}{c}\Delta q \mathbf{\Gamma}\Delta \mathbf{f} \right]_{2\pi} \right\}
   \end{gather}

   Note that (\ref{equa:cost_outlier}) may be used to judge the quality of a specific $\Delta \mathbf{f}$ as well as to search for an optimal $\Delta \mathbf{f}$, in term of minimal outlier probability. As is the usual case in array design \cite{Gavish:96} \cite{Birinci:07}.
   However, we are not intended to find the optimal $\Delta \mathbf{f}$ by exhaust searching due to the formidable complexity when the parameter space is too large.
   It is known that the probability density function (PDF) of $q$ given $q_0$ and the measurement phase variance  $\sigma_\theta^2$ at frequency $f_i$ is  \cite{Vrana:93}\cite{Pascazio:01}\cite{Cheng:11}
      \begin{gather}
        \label{equa:pdf_single}
        p(f_i|q)\approx \frac{1}{\sqrt {2\pi} \sigma_q(i) }   \exp \big[-\frac{1}{2\sigma_q^2(i)} ([q- q_0]_{\lambda_i})^2  \big] \notag\\
         \sigma_q(i)=\frac{c}{2\pi f_i} \sigma_\theta
      \end{gather}

     The joint probability density functions for measurement using two adjacent frequencies $f_i$ and $f_{i+1}$ becomes
       \begin{gather}
        \label{equa:pdf_double}
        p(f_i,f_{i+1}|q)= p(f_i|q)p(f_{i+1}|q)
        \approx\frac{1}{\prod_{k=i}^{i+1} \sqrt {2\pi} \sigma_q(k) }   \exp \Big[- \sum_{k=i}^{i+1} \frac{1}{2\sigma_q^2(k)} ([q- q_0]_{\lambda_k})^2  \Big]
      \end{gather}
     When $f_1\gg B$, $\sigma_q(k)\approx \sigma_q(0)$, denoted as $\sigma_q$, (\ref{equa:pdf_double}) can be simplified as
       \begin{gather}
        \label{equa:pdf_double}
        p(f_i,f_{i+1}|q)
        \approx\frac{1}{{2\pi} \sigma_q^2 } \exp \Big[-\frac{1}{2\sigma_q^2} \sum_{k=i}^{i+1}([q- q_0]_{\lambda_k})^2  \Big]
      \end{gather}

     \begin{figure}[!t]
      \centerline{
         \subfloat[Case I]{\includegraphics[height=2.8in,width=3in]{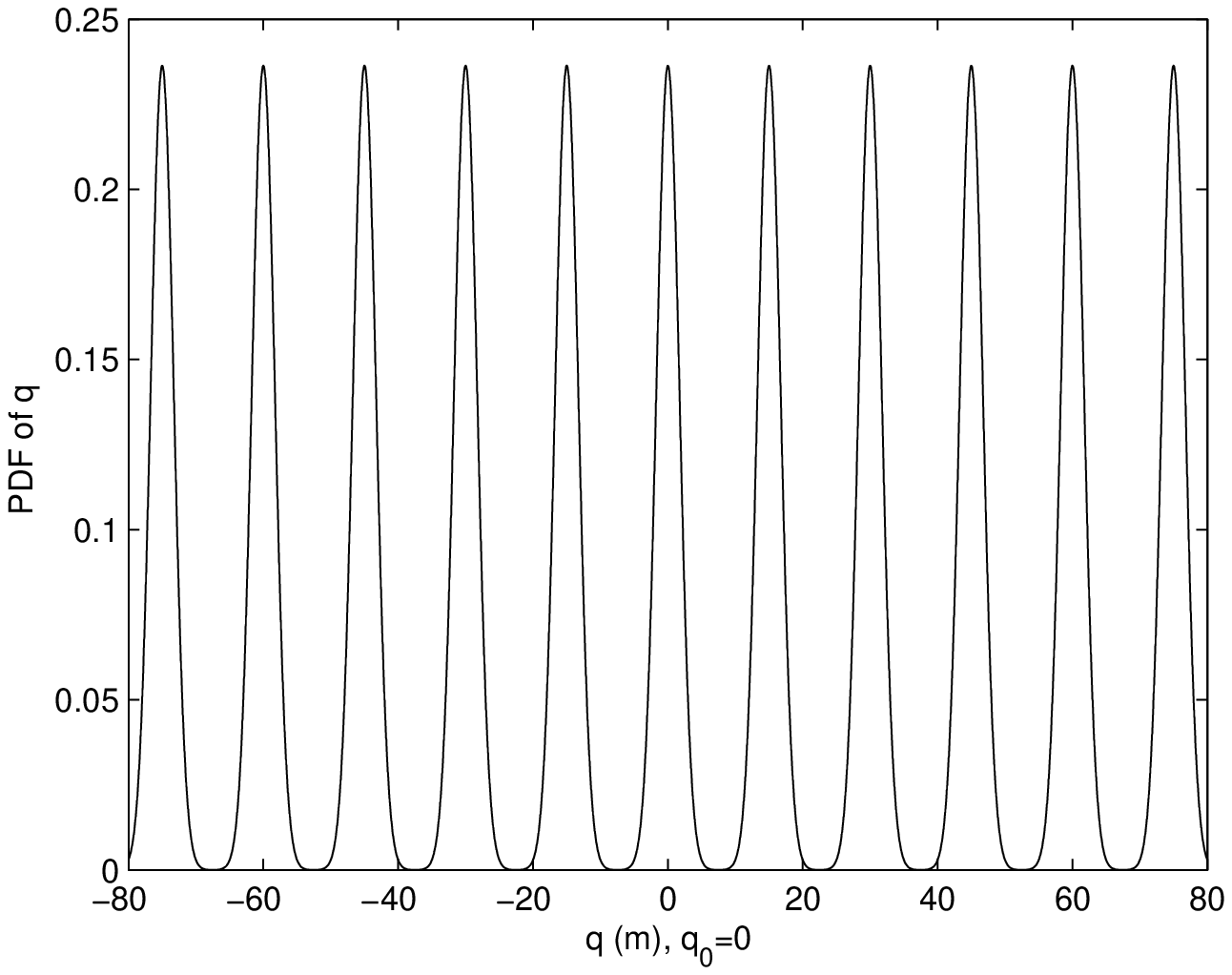}%
         \label{fig_PDF_f1}}
         \subfloat[Case II]{\includegraphics[height=2.8in,width=3in]{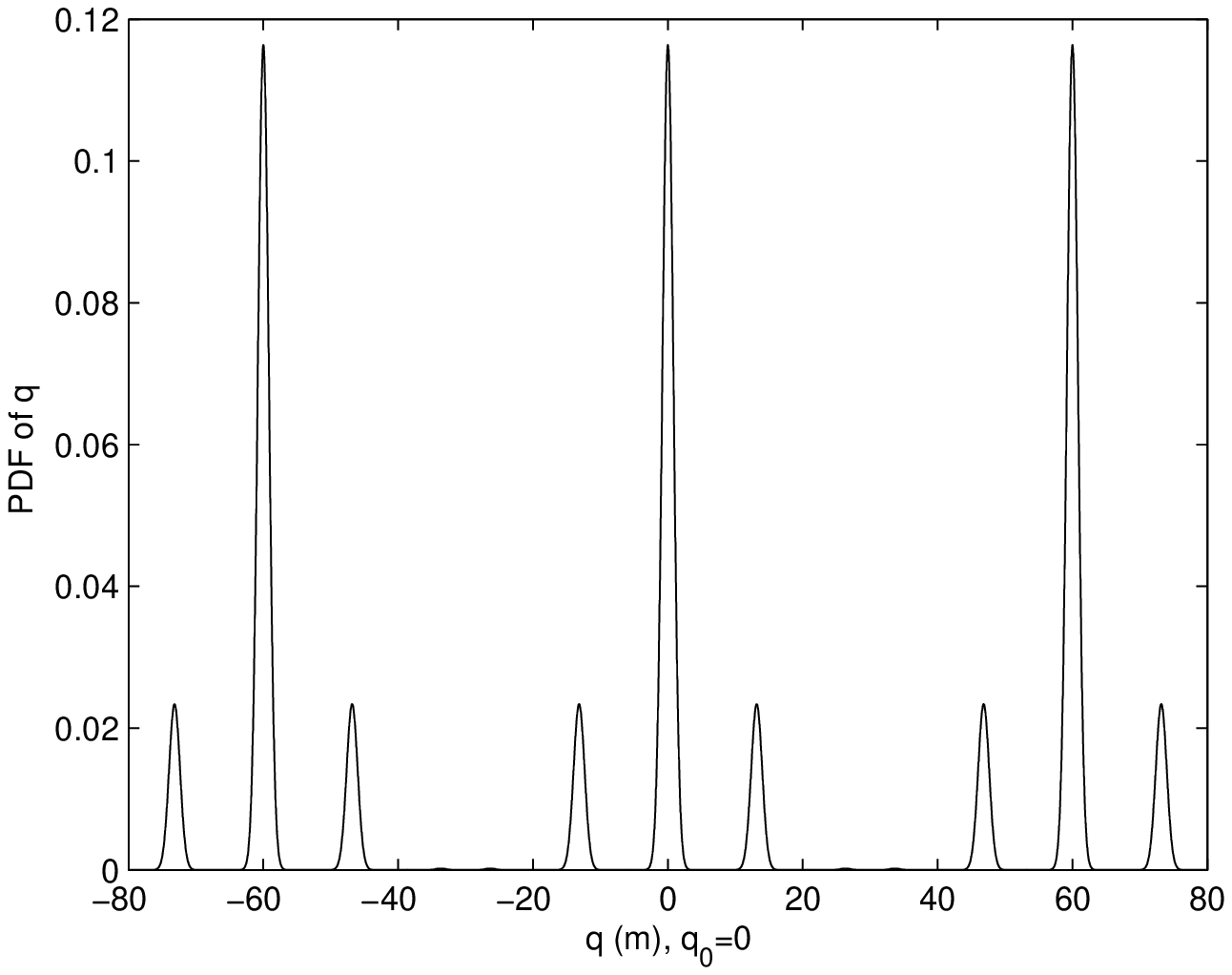}%
         \label{fig_PDF_f12}}
         }
         \hfill               
      \caption{The PDF of q with $\sigma_\theta^2=0.3$ and $q_0=0$. (a) Single frequency PDF $p(f_1|q)$ with $f_1=20\textrm{MHz}$. (b) Double-frequency PDF $p(f_1,f_2|q)$ with $f_1=20\textrm{MHz}$ and $f_2=25\textrm{MHz}$.}
      \label{fig_PDF_q}
    \end{figure}

     Let $\Delta q=q-q_0$,  it is proved in section \ref{subsect:UMR} that the UMR is well approximated by $c/\Delta f_{i}$ for two adjacent frequencies $f_i$ , $f_{i+1}$ satisfying $f_i\gg (f_{i+1}-f_i)$. So, $p(f_i,f_{i+1}|q)$ is $c/\Delta f_{i}$-periodic functions ( $p(f_i,f_{i+1}|q)=p(f_i,f_{i+1}|q+kc/\Delta f_{i})$, $k \in \mathbf{Z}$, see Fig.\ref{fig_PDF_q}). Then, sharp peak will appear at $\Delta q^{(i)}(k)=kc/\Delta f_{i}$.\\
     \indent Denote $\mathbf{f}=[f_1,f_2,\cdots,f_N]$, the PDF for $N$ independent measurement frequencies is
         \begin{gather}
          \label{equa:pdf_Multi}
            p(\mathbf{f}|q)=\prod_{k=1}^{N}p(f_k|q)=\sqrt { {\prod_{k=1}^{N-1}p(f_k,f_{k+1}|q)}* {p(f_1,f_{N}|q)} }
         \end{gather}
     (\ref{equa:pdf_Multi}) will be exploited in the design of $\Delta \mathbf{f}$ in later section.

    \subsection{MSE}

    Suppose the measurement phase at the $i\text{-th}$ frequency is $\varphi(i)=\left[2\pi q_0/\lambda_i +\theta_e (i) \right]_{2\pi}$, and the estimated phase is $\hat \varphi_q (i)=\nobreak\left[2\pi q/\lambda_i  \right]_{2\pi}$. where $q_0$, $q$ are the true and estimated range, $\theta_e (i)$ is assumed to be i.i.d white Gaussian noise. The LS-based objective function is given by
    \begin{gather}
        \label{equa:LS_object}
        \underset{q}{\min}  \sum_{i=1}^N  \left(\left[\varphi(i) - \hat \varphi_q(i) \right]_{2\pi} \right)^2
    \end{gather}

    Since $\left[[x]_{2\pi}-[y]_{2\pi}\right]_{2\pi}=[x-y]_{2\pi}$, then (\ref{equa:LS_object}) becomes
    \begin{gather}
        \label{equa:cost_mse}
        \underset{\Delta q}{\min}  \sum_{i=1}^N  \left(\left[\frac{2\pi f_i \Delta q}{c}-\theta_e(i) \right]_{2\pi} \right)^2
    \end{gather}
    We will derive the MSE using the perturbation analysis approach\cite{Chan:94}. The MSE for the moderate and high SNR are denoted by MMSE and HMSE in the paper. For SNR high enough, i.e., $\left|\Delta q \right|<\lambda_N\cdots <\lambda_1$. In this case, (\ref{equa:cost_mse}) may be simplified as
    \begin{gather}
        \underset{\Delta q}{\min} \sum_{i=1}^N \left(\frac{2\pi f_i \Delta q}{c}-\theta_e(i) \right)^2
    \end{gather}
    The optimal $\Delta q$ satisfy
    \begin{gather}
        \sum_{i=1}^N \left(\frac{2\pi f_i \Delta q}{c}-\theta_e (i) \right) f_i= 0 \notag\\
        \Delta q =\frac{c\sum_{k=1}^N f_k \theta_e (k) } {2\pi \sum_{k=1}^N {f_k^2} }
    \end{gather}
    Then the HMSE is readily obtained
    \begin{gather}
    \label{equa:HMSE}
        \text{HMSE}=\mathbf{E}\left[\Delta q^2 \right]=\frac{c^2 \sigma_\theta ^2 }{4\pi^2 }\left(\sum_{k=1}^N f_k^2 \right)^{-1}
    \end{gather}

     For moderate or high SNR, $\left|\Delta q \right|>\lambda_N$, but $\left|\Delta q\right|< c/2B$, then the error could not be
     obtained directly. Similar to (\ref{equa:Taylor}), we may transform the estimator into (since $\Delta q$ is the one
     minimizing (\ref{equa:cost_mse}), then it leads to small $({2\pi f_i \Delta q}/{c}-\theta_e(i))$ for moderate or high SNR):
    \begin{gather}
        \underset{\Delta q}{\max} \mathbf{Re}\left\{\sum_{i=1}^N \exp \left \{{j\left(\frac{2\pi f_i \Delta q}{c}-\theta_e(i) \right)} \right \} \right\}
    \end{gather}

    When $f_1/B\gg 1$, the problem may be well approximated as (the approximation is reasonable, as is confirmed in later simulation, see Fig.\ref{fig_comp_cost} and Fig.\ref{fig_double_mse})
    \begin{figure}[!t]
      \centering
      \includegraphics[height=2.8in,width=3.5in]{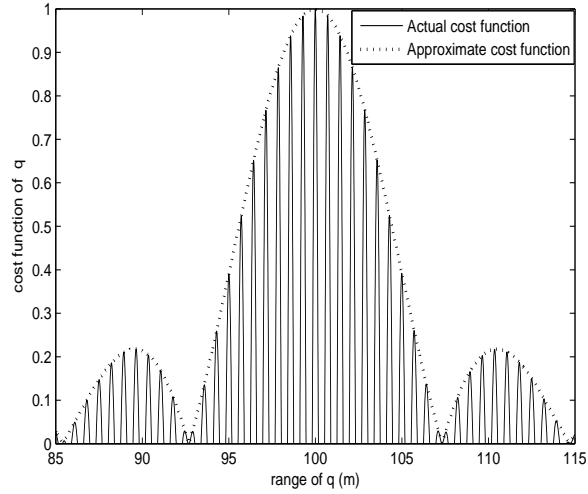}
      \caption{Comparison of the actual and approximate cost function  }%
      \label{fig_comp_cost}
    \end{figure}
    \begin{gather}
        \label{equa:cost_min}
        \underset{\Delta q}{\max} \left|\sum_{i=1}^N \exp \left \{{j\left(\frac{2\pi f_i \Delta q}{c}-\theta_e(i) \right)} \right \}    \right|^2   \notag\\
        \underset{\Delta q}{\max} \vec{Re} \left\{ \sum_{i=1}^{N-1} \sum_{k=i+1}^N
        \hspace{-3pt}\exp \left\{ {j\left( \frac{ 2\pi \left(f_i-f_k \right)\Delta q }{c}- \Big(\theta_e(i)-\theta_e (k)\Big) \right)} \right\} \right\}
     \end{gather}
    or equivalently
    \begin{gather}
        \label{equa:cost_min}
        \underset{\Delta q}{\min}\hspace{-3pt} \sum_{i=1}^{N-1}\hspace{-3pt} \sum_{k=i+1}^N \hspace{-4pt}
        \left( \left[\frac{ 2\pi\left(f_i\hspace{-2pt}-\hspace{-2pt}f_k \right)\Delta q }{c}- \hspace{-2pt}\Big(\theta_e(i)\hspace{-2pt}-\hspace{-2pt}\theta_e(k)\Big) \right]_{2\pi} \right)^2
    \end{gather}
    When $\left|\Delta q\right|< c/2B$, then $\left|\Delta q \right| < \left| c/2\left(f_i-f_k\right) \right|$. Omitting the influence of $\theta_e(i)-\theta_e(k)$ for moderate or high SNR, (\ref{equa:cost_min}) can be rewritten as
    \begin{gather}
        \underset{\Delta q}{\min} \sum_{i=1}^{N-1} \sum_{k=i+1}^N  \left( \frac{ 2\pi\left(f_i-f_k \right)\Delta q }{c}- \Big(\theta_e(i)-\theta_e(k)\Big)                          \right)^2
    \end{gather}
    Define
    \begin{gather*}
        \phi_{\Delta q} \left(i,k\right)=\frac{2\pi \left(f_i -f_k \right)\Delta q}{c} -\left(\theta_e(i)-\theta_e(k)\right)\\
        \vec{\Phi}(\Delta q)=\left[\phi_{\Delta q} (2,1),\phi_{\Delta q} (3,1),\cdots \phi_{\Delta q} (N,1) \right]^T \\
        \vec{\Theta}_e\hspace{-4pt}=\left[(\theta_e(2)-\theta_e(1)),(\theta_e(3)-\theta_e(1)),\cdots (\theta_e(N)-\theta_e(1)) \right]^T
    \end{gather*}
    The problem can be expressed as
    \begin{gather}
       \underset{ \Delta q}{\min} \sum_{i=1}^{N-1} \sum_{k=i+1}^N  \left(\phi_{\Delta q} (i,k) \right)^2
                           = \underset{\Delta q}{\min}  \vec{\Phi}^T (\Delta q)\vec{R}^{-1} \vec{\Phi}(\Delta q)
    \end{gather}
    Where $\vec{R}^{-1}$, $\vec{R}$, $\vec{u}$ are defined in (\ref{equa:R_u}).
    The optimal solution obey
    \begin{gather}
       \frac{ \partial \left(  \vec{\Phi}^T(\Delta q)  \vec{R}^{-1}   \vec{\Phi}(\Delta q)  \right)  }  {\partial (\Delta q)} = 0 \notag\\
       \left( \partial \left[\vec{\Phi}(\Delta q) \right]  {\partial (\Delta q)}  \right)^T  \vec{R}^{-1}  \vec{\Phi}(\Delta q)= 0
    \end{gather}

   Since $\partial \left[\phi_{\Delta q}(i,1) \right] / \partial(\Delta q)=\nobreak \frac{2\pi}{c}\left(f_i-f_1 \right) =\nobreak \frac{2\pi}{c}\sum_{k=1}^{i-1} {\Delta f_k}$, where $\Delta f_k=f_{k+1}-f_k $, and
    \begin{align}
        \partial \left[\vec{\Phi}(\Delta q ) \right]/ {\partial (\Delta q)}
        &= \frac{2\pi}{c}\left[\Delta f_1,\Delta f_1+\Delta f_2,\cdots\sum_{k=1}^{N-1} {\Delta f_k } \right]^T\notag\\
        &= \frac{2\pi}{c}\vec{\Gamma }\Delta \vec{f}
    \end{align}
   It follows that
    \begin{align}
        \Delta \vec{f}^T \vec{\Gamma}^T \vec{R}^{-1} \vec{\Phi}\left(\Delta q \right) &= 0 \notag\\
        \Delta \vec{f}^T \vec{\Gamma }^T \vec{R}^{-1} \left( \frac{2\pi \Delta q}{c}
                 \vec{\Gamma} \Delta \vec{f} - \vec{\Theta}_e \right)&= 0
    \end{align}
    Then
    \vspace{-14pt}
    \begin{align}
         \Delta q &= \frac{c}{2\pi} \frac{\Delta \vec{f}^T \vec{\Gamma}^T \vec{R}^{-1} \vec{\Theta}_e }
                  { \Delta \vec{f}^T \vec{\Gamma}^T \vec{R}^{-1} \vec{\Gamma}\Delta \vec{f}  } \\
         \vec{E} \left[\Delta q^2 \right]
                                      &= \frac{c^2}{4\pi ^2 }  \frac{  \Delta \vec{f}^T \vec{\Gamma }^T \vec{R}^{-1} \vec{E}\left[ \vec{\Theta}_e \vec{\Theta }_e^T \right]\vec{R}^{-1} \vec{\Gamma} \Delta \vec{f}   }
                                       {  (\Delta \vec{f}^T \vec{\Gamma}^T \vec{R}^{-1} \vec{\Gamma}\Delta \vec{f})^2  }
    \end{align}
    Since
    \begin{gather}
        \vec{E}\left[\vec{\Theta}_e \vec{\Theta}_e^T \right]=\sigma_\theta^2 (\vec{I}_{N-1}+\vec{uu}^T)=\sigma_\theta^2 \vec{R}
    \end{gather}
    Therefore, the MMSE for this moderate SNR is given by
    \begin{align}
    \label{equa:MMSE}
         \text{MMSE}&=\vec{E}\left[\Delta q^2 \right]
                      =\frac{c^2}{4\pi ^2} \frac{\sigma_\theta ^2} {\Delta \vec{f}^T \vec{\Gamma }^T \vec{R}^{-1} \vec{\Gamma}\Delta
                       \vec{f} } \notag \\
                    &= \frac{c^2}{4\pi ^2}  \frac{\sigma_\theta ^2}{\Delta \vec{f}^T \vec{\Gamma }^T
                         \left(\vec{I}_{N-1}-\vec{uu}^T/N \right) \vec{\Gamma}\Delta \vec{f} }
    \end{align}

    \subsection{CRB}
    \label{subsect:CRB}
     It is well known that the phase noise in (\ref{equa:LS_estimation}) follows the wrapped normal distribution due to modulo $2\pi$ operation\cite{Bahlmann:06}\cite{Cheng:11}. So the estimation is usually not unbias and the CRB does not exist. But the noise can be approximated as normal distribution under the assumption of high SNR.

     Consider the signal $y(k)= e^{j2\pi q / \lambda _k }+ n(k)$, where $q$ is the parameter to be estimated and  $n(k)$ is the complex Gaussian noise with zero-mean and variance $\vec{E}[ n^2(k)]=\sigma_n^2= \sigma ^2$ and the signal-to-noise ratio (SNR) is defined as $\textrm{SNR} =1/\sigma^2$. The signal can be expressed as
     \vspace{-0pt}
     \begin{gather}
        \hspace{-4pt} y(k) = e^{j2\pi q / \lambda _k }  + n(k)
             = e^{j2\pi q / \lambda _k } (1 + n(k)e^{-j2\pi q / \lambda _k } )
     \end{gather}
      \vspace{-0pt}
     Let $n'(k) = n(k)e^{-j2\pi q / \lambda_k } = n'_R (k) + n'_I (k)j$, then $n'(k)$ is statistically
     equivalent to $n(k)$ with $\vec{E}[n'^2 (k) ]=\sigma ^2$ and $\vec{E}[{n'_I}^2(k)] = {\sigma ^2/2}$. Therefore
     \begin{align}
        y(k)&=e^{j2\pi q /{\lambda _k }} (1+n'_R(k)+n'_I(k)j) \notag\\
            &= \sqrt { (1+n'_R(k))^2+{n'_I}^2(k) } e^{(j2\pi q /{\lambda _k }+\theta_e (k) )}
     \end{align}

      Where $\theta_e (k)$ is phase noise corresponding to $n(k)$. At high SNR, the following approximation holds:
     \begin{gather}
     \theta_e(k) \approx tan(\theta_e (k))= \frac{ n'_I (k)} {1 + n'_R (k)} \approx n'_I (k)\\
     \vec{E} \left[\theta_e^2 (k) \right] = \sigma_\theta^2  \approx {\sigma ^2 /2}
     \end{gather}

     Define $\vec{y} = [y(1),y(2),\cdots y(N) ]$ and $\vec{\varphi} =\nobreak [\varphi(1),\varphi(2),\cdots \varphi(N) ]$, $\varphi (k) ={2\pi q/\lambda_k+\theta _e(k)}$. It is clear that the original problem of estimating $q$ using $\vec{\varphi}$, corrupted by noise of variance ${\sigma^2 /2}$, is equivalent to estimating it from $\vec{y}$ with double variance. The Cramer-Rao bound of the equivalent problem is easily obtained as follows\cite{Rife:74}.

     Let  $a_k =\vec{Re} \{y(k)\}$, $b_k =\vec{Im} \{y(k)\}$. The probability distribution function is
     \begin{align}
        f(\vec{y},q)&=\hspace{-4pt}\left(\frac{1}{\sqrt {\pi \sigma_n^2} }\right)^{2N}
                         \hspace{-4pt}\exp \Bigg[\hspace{-4pt}- \frac{1}{\sigma _n^2}\sum_{k=1}^N  \bigg( \big(a_k- \cos ( 2\pi q /\lambda _k) \big)^2  + \big(b_k- \sin ( 2\pi q /\lambda _k) \big)^2  \bigg) \Bigg]  \notag\\
                       &=\hspace{-4pt}\left(\frac{1}{\sqrt {\pi \sigma_n^2} }\right)^{2N}
                         \hspace{-4pt}\exp \Bigg[\hspace{-4pt}- \frac{1}{\sigma _n^2}\sum_{k=1}^N  \bigg( \big(a_k- \cos ( 2\pi q f_k/c) \big)^2  + \big(b_k- \sin ( 2\pi q f_k/c) \big)^2  \bigg) \Bigg]
     \end{align}

     The entry of Fisher information matrix and the Cramer-Rao bound (CRB) are given by
     \begin{gather}
        [\vec{F}]_{q,q}=\vec{E}\left[ \frac{\partial \log f(\vec{y},q)}{\partial q} \frac{\partial \log f(\vec{y},q)}{\partial q} \right]=\frac{8\pi^2}{c^2 \sigma_n^2 } \sum_{k=1}^N {f_k^2 }
     \end{gather}
     \vspace{-12pt}
     \begin{gather}
     \label{equa:CRB}
        \mathrm{CRB}(q)= ([\vec{F}]_{q,q})^{-1}
                       =\frac{c^2 \sigma _n^2}{8\pi^2}\left(\sum_{k=1}^N
                       {f_k^2 } \right)^{-1}
                       = \frac{c^2 \sigma_\theta ^2}{4\pi^2}\left(\sum_{k=1}^N {f_k^2} \right)^{-1}
     \end{gather}
     Note that the CRB coincides with the HMSE, see (\ref{equa:HMSE}).

\section{Frequency Design}

    There exist three different frequency design goals, including maximizing UMR, minimizing the outlier probability and enhancing the MSE performance. We provide corresponding design criterion independently and then present a simple algorithm with all factors considered together.

    \subsection{Frequency Spacing Design for UMR and Outlier}
    \newtheorem {propo}{\noindent Proposition}
     \begin{propo}
       \label{propo:prime}
       Suppose the frequency resolution of the system is $\Delta f$, that is to say, each frequency spacing must be integer number of $\Delta f$. If all the frequency spacing can be expressed as $\Delta f_i=p_i \Delta f$, $p_i$ is prime number, then the UMR is $\Delta L=c/\Delta f$.
       \end{propo}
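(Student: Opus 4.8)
The plan is to reduce the claim to Theorem~\ref{theo:UMR}, which already asserts that the UMR equals $c/\Delta f_{\min}$ whenever $\Delta f_{\min}$ is the greatest common divisor of the adjacent spacings and the initial frequency $f_1$ is an integer multiple of $\Delta f_{\min}$. Thus the entire content of the proposition is to verify, under the prime-spacing hypothesis, that the GCD of the spacings is exactly $\Delta f$ and that the initial-frequency condition holds; the value of the UMR then follows immediately.

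First I would compute the GCD of the spacings. Writing $\Delta f_i = p_i \Delta f$ with each $p_i$ prime, the distributivity of the GCD over a common factor gives
\begin{gather}
  \Delta f_{\min} = \gcd(\Delta f_1,\ldots,\Delta f_{N-1}) = \Delta f\cdot\gcd(p_1,\ldots,p_{N-1}).
\end{gather}
The key number-theoretic step is then that $\gcd(p_1,\ldots,p_{N-1})=1$: since a prime $p$ divides another prime $p_i$ only when $p=p_i$, no single prime can divide every $p_i$ unless all the $p_i$ coincide. Hence, provided at least two distinct primes appear among the $p_i$, we obtain $\gcd(p_1,\ldots,p_{N-1})=1$ and therefore $\Delta f_{\min}=\Delta f$.

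Next I would dispose of the initial-frequency hypothesis of Theorem~\ref{theo:UMR}. Because the system has frequency resolution $\Delta f$, every frequency, and in particular $f_1$, lies on the grid of multiples of $\Delta f$, so $f_1 = k\Delta f = k\Delta f_{\min}$ for some integer $k$. With both hypotheses of Theorem~\ref{theo:UMR} now in force, I conclude $\Delta L = c/\Delta f_{\min} = c/\Delta f$.

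The main obstacle, and really the only subtle point, is the coprimality step: the conclusion genuinely requires that the $p_i$ not all equal a single prime, since, for instance, spacings all equal to $3\Delta f$ would give $\Delta f_{\min}=3\Delta f$ and UMR $c/(3\Delta f)$. I would therefore make this mild distinctness assumption explicit, and I would note that it is precisely the condition guaranteeing that the prime design attains the \emph{largest} UMR allowed by the resolution: any spacing that is a multiple of $\Delta f$ forces $\Delta f_{\min}\geq\Delta f$ and hence UMR $\leq c/\Delta f$, so the prime choice is optimal in this sense.
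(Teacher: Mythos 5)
Your proposal is correct and takes the same route as the paper, whose entire proof is the single line ``It is the corollary of Theorem~\ref{theo:UMR}.'' Your explicit verification that $\gcd(p_1,\dots,p_{N-1})=1$, together with the observation that the statement silently requires at least two \emph{distinct} primes among the $p_i$ (and that $f_1$ lie on the $\Delta f$ grid so the initial-frequency hypothesis of Theorem~\ref{theo:UMR} holds), supplies exactly the details the paper leaves implicit and flags a caveat the paper does not mention.
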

     \begin{proof}
         It is the corollary of Theorem \ref{theo:UMR}.
     \end{proof}
     Remark: Proposition \ref{propo:prime} demonstrates that we can extend the measurement range by simply reducing the GCD and rearranging the elements of $\Delta \vec{f}$ has no influence on the UMR.

     According to (\ref{equa:pdf_Multi}), it is clear that a sharp peak, comparable with the main peak at the true $q_0$, will appear in the probability density functions if the peaks belonging to different PDF $p(f_i,f_{i+1}|q)$ coincide at the same location, i.e. $\Delta q^{(i)}(k_i)= \Delta q^{(j)}(k_j)=\cdots$. Then the outlier probability will increase.
     To avoid the occurrence of outlier, we provide the following proposition.

    \begin{propo}
       \label{propo:outlier}
           Assume the GCD of the frequency spacing $\Delta \mathbf{f}$ is $\Delta f$, i.e., $\Delta f_i  = n_i \Delta f$, $n_i$ is positive integer. If $n_i$ and $n_j$ are co-prime with $\mathrm{GCD}(n_i,n_j)=1$ for any two frequency spacing $\Delta f_i,\Delta f_j$, then for any $\Delta q^{(i)}(k_i),\Delta q^{(j)}(k_j)<c/\Delta f$, $\Delta q^{(i)}(k_i)\neq \Delta q^{(j)}(k_j)$ ($\Delta q^{(i)}(k_i)$ is the location of possible sharp peak in $p(f_i,f_{i+1}|q)$). In other word, sharp peak of different PDF $p(f_i,f_{i+1}|q)$ will not coincide and strengthen at the same location.
    \end{propo}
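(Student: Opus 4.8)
The plan is to reduce the geometric statement ``two sharp peaks coincide'' to a purely number-theoretic divisibility condition, and then to rule that condition out using the coprimality hypothesis together with the range restriction $\Delta q < c/\Delta f$.

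First I would recall from the discussion preceding (\ref{equa:pdf_Multi}) that $p(f_i,f_{i+1}|q)$ is $c/\Delta f_i$-periodic, so its sharp peaks sit exactly at $\Delta q^{(i)}(k_i)=k_i c/\Delta f_i = k_i c/(n_i \Delta f)$ for integer $k_i$, where I have substituted $\Delta f_i = n_i \Delta f$ from the hypothesis. Restricting attention to the window $0<\Delta q^{(i)}(k_i)<c/\Delta f$ excludes the trivial main peak at the true range and forces the index into the range $1\le k_i\le n_i-1$; the boundary value $k_i=n_i$ already lands exactly at $c/\Delta f$ (the UMR itself, a genuine and unavoidable ambiguity by Theorem~\ref{theo:UMR}), so it is correctly excluded by the strict inequality. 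The same bounds apply to $k_j$ with $n_j$.

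Next I would argue by contradiction: suppose two such side peaks coincide, $\Delta q^{(i)}(k_i)=\Delta q^{(j)}(k_j)$. Cancelling the common factor $c/\Delta f$ turns this into $k_i/n_i = k_j/n_j$, i.e. $k_i n_j = k_j n_i$. Since $\mathrm{GCD}(n_i,n_j)=1$, the relation $n_i \mid k_i n_j$ gives $n_i \mid k_i$ by Euclid's lemma. But $1\le k_i \le n_i-1$ makes $n_i \mid k_i$ impossible, which is the contradiction I want. Hence no two side peaks can align, and $\Delta q^{(i)}(k_i)\neq \Delta q^{(j)}(k_j)$ throughout $(0,\,c/\Delta f)$, proving the claim.

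I do not expect a genuine obstacle here; the argument is short once the peak locations are pinned down. The only point needing care is the bookkeeping of the admissible index range --- in particular excluding $k_i=0$ (the main peak) and confirming that $k_i=n_i$ sits precisely on the UMR boundary --- so that the divisibility contradiction is clean rather than vacuous. It is also worth remarking afterward that the hypothesis requires \emph{pairwise} coprimality of the $n_i$, which is exactly the setting realized by Proposition~\ref{propo:prime} when every $n_i$ is a distinct prime, tying the outlier-suppression criterion back to the prime-based frequency-spacing design.
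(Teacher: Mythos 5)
Your proposal is correct and follows essentially the same route as the paper: both reduce the coincidence of peaks to $k_i n_j = k_j n_i$, invoke $\mathrm{GCD}(n_i,n_j)=1$ to conclude $n_i \mid k_i$, and contradict this with the bound $k_i < n_i$ forced by $\Delta q^{(i)}(k_i) < c/\Delta f$. Your bookkeeping of the admissible index range is in fact slightly more careful than the paper's, but the argument is the same.
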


    \begin{proof}
      Suppose $\Delta q^{(i)}(k_i)= \Delta q^{(j)}(k_j)$, then
     \begin{gather*}
         k_ic/\Delta f_{i}=k_jc/\Delta f_{j}\\
        \Rightarrow{k_i}{n_j}={k_j}{n_i}
     \end{gather*}

     Since $\mathrm{GCD}(n_i,n_j)=1$, then $k_i$ must have the factor $n_i$. In other word, $k_i=m_in_i$, $m_i\geq 1$ is positive integer. Similarly, $k_j=m_jn_j$.
     On the other hand, $\Delta q<c/\Delta f$, we get
     \begin{gather*}
        k_i<n_i,\quad k_j<n_j
     \end{gather*}
     Contradiction.
    \end{proof}

    \subsection{Frequency Spacing Design for MSE}
    For $f_i\gg B$, which is typical in most measurement system, the HMSE expressed in (\ref{equa:HMSE}) is not sensitive to different frequency pattern. In other words, different frequency design methods have negligible influence on the HMSE performance. Therefore, we will be concerned with the optimization of MMSE rather than HMSE, in the form of
     \begin{gather}
     \label{equa:MMSE}  
     \underset{\scriptstyle \Delta \vec{f}}{\max}(\Delta \vec{f})^T \vec{\Gamma }^T
                    \left(\vec{I}_{N-1}-\vec{uu}^T/N \right) \vec{\Gamma}\Delta \vec{f}
     \end{gather}

    \begin{propo}
    \label{propo:increase_B}
        Suppose $B_2>B_1$, for any given measurement frequency in frequency band $B_1$ with spacing $\Delta \vec{f} =[\Delta f_{1} ,\Delta f_2 ,\cdots\Delta f_{N-1} ]^T$, If we pick a new set of frequencies from band $B_2$ in such a manner that $\Delta\vec{f'}=\nobreak[\Delta f'_{1} ,\Delta f'_2 ,\cdots\Delta f'_{N-1} ]^T$, $\Delta f'_k \geq \Delta f_k$, then the new measurement frequencies will result in high ranging accuracy.
    \end{propo}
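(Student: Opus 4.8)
The plan is to recognize that the quadratic form to be maximized in (\ref{equa:MMSE}) is, up to a factor $1/N$, exactly the sum of squared pairwise frequency differences, a quantity that is manifestly monotone under a componentwise increase of the spacings. First I would introduce the cumulative-spacing vector $\vec{g}=\vec{\Gamma}\Delta\vec{f}$, whose $i$-th entry is $g_i=\sum_{k=1}^{i}\Delta f_k=f_{i+1}-f_1$, together with the convention $g_0=0$ for the reference frequency $f_1$. Because $\vec{u}=\vec{1}_{N-1}$, the matrix $\vec{I}_{N-1}-\vec{uu}^T/N$ acts as a centering operator, so that, writing $J$ for the quadratic form in (\ref{equa:MMSE}),
\begin{align*}
   J &= \Delta\vec{f}^T\vec{\Gamma}^T\left(\vec{I}_{N-1}-\vec{uu}^T/N\right)\vec{\Gamma}\Delta\vec{f} \\
     &= \sum_{i=1}^{N-1}g_i^2-\frac{1}{N}\Big(\sum_{i=1}^{N-1}g_i\Big)^2 .
\end{align*}

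Next I would rewrite this expression over the augmented index set $\{0,1,\dots,N-1\}$. Since $g_0=0$ contributes nothing to either sum, applying the elementary identity $\sum_{i<j}(x_i-x_j)^2=N\sum_i x_i^2-(\sum_i x_i)^2$ to the $N$ values $g_0,\dots,g_{N-1}$ gives
\begin{align*}
   J = \frac{1}{N}\sum_{0\le i<j\le N-1}(g_j-g_i)^2
     = \frac{1}{N}\sum_{0\le i<j\le N-1}\Big(\sum_{k=i+1}^{j}\Delta f_k\Big)^2 .
\end{align*}
Each inner sum $\sum_{k=i+1}^{j}\Delta f_k=f_{j+1}-f_{i+1}$ is a sum of consecutive positive spacings and is therefore nonnegative, which is the property that makes the monotonicity work.

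The conclusion then follows termwise. Because $\Delta f'_k\ge\Delta f_k\ge0$ for every $k$, each partial sum obeys $\sum_{k=i+1}^{j}\Delta f'_k\ge\sum_{k=i+1}^{j}\Delta f_k\ge0$, so squaring preserves the inequality and hence $J(\Delta\vec{f}')\ge J(\Delta\vec{f})$. Equivalently, $\partial J/\partial\Delta f_m=\tfrac{2}{N}\sum_{i<m\le j}\big(\sum_{k=i+1}^{j}\Delta f_k\big)\ge 0$ exhibits $J$ as coordinatewise nondecreasing. Since MMSE $=\frac{c^2}{4\pi^2}\sigma_\theta^2/J$ by (\ref{equa:MMSE}), a larger $J$ yields a smaller MMSE, so the frequencies drawn from the wider band $B_2$ achieve higher ranging accuracy; the role of $B_2>B_1$ is merely to guarantee that a feasible componentwise-larger spacing vector $\Delta\vec{f}'$ exists.

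I expect the only step needing genuine care to be the first identity: correctly matching the quadratic form against the centered sum of cumulative spacings, and verifying that appending the fixed reference term $g_0=0$ converts it \emph{exactly} into the pairwise-difference form. A direct attack via the gradient $2\,\vec{\Gamma}^T(\vec{I}_{N-1}-\vec{uu}^T/N)\vec{\Gamma}\Delta\vec{f}$ is awkward because the $-\vec{uu}^T/N$ term injects negative cross-terms whose sign is not transparent; the pairwise reformulation dissolves this difficulty, after which no estimation is required and the monotonicity is immediate.
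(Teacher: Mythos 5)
Your proof is correct, and it takes a different route from the paper's. The paper works directly with the matrix $\vec{Q}=\vec{\Gamma}^T(N\vec{I}_{N-1}-\vec{uu}^T)\vec{\Gamma}$: it computes the entries explicitly as $\vec{Q}(i,j)=\min(i,j)\,(N-\max(i,j))>0$ and then expands $(\Delta\vec{f}+\vec{d})^T\vec{Q}(\Delta\vec{f}+\vec{d})$, concluding from the positivity of all entries and the nonnegativity of $\vec{d}$ that the quadratic form cannot decrease. You instead recenter the cumulative spacings and invoke the identity $\sum_{i<j}(x_i-x_j)^2=N\sum_i x_i^2-(\sum_i x_i)^2$ to rewrite the objective as $\frac{1}{N}\sum_{i<j}(f_j-f_i)^2$, a sum of squares of nonnegative partial sums of spacings, each of which is manifestly monotone in every $\Delta f_k$. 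The two arguments are really dual views of the same fact --- expanding your pairwise sum in the $\Delta f_k$ basis reproduces exactly the paper's formula $\vec{Q}(i,j)=\min(i,j)(N-\max(i,j))$, since that is the number of index pairs $(a,b)$ whose frequency difference contains both spacings $i$ and $j$ --- but yours buys a cleaner interpretation (the objective is the sum of squared pairwise frequency differences, i.e.\ $N$ times the variance of the frequency set) and avoids the entrywise computation of $\vec{Q}$; it also sidesteps a small blemish in the paper's displayed expansion, which omits the cross term $2\,\vec{d}^T\vec{Q}\Delta\vec{f}$ (harmless there, since that term is nonnegative by the same positivity of $\vec{Q}$). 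The only cosmetic point: you conclude $J(\Delta\vec{f}')\ge J(\Delta\vec{f})$, whereas the statement intends a strict improvement when $\Delta\vec{f}'\neq\Delta\vec{f}$; this follows at once from your decomposition because any index $m$ with $\Delta f'_m>\Delta f_m$ makes the single term indexed by $(i,j)=(m-1,m)$ strictly larger.
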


    \begin{proof}
        Let
        \begin{align}
            \Delta \vec{f'} &= [\Delta f'_{1},\Delta f'_2 ,\cdots\Delta f'_{N-1} ]^T  \notag\\
            &= [\Delta f_{1},\Delta f_2 ,\cdots\Delta f_{N-1}  ]^T + [d_{1} ,d_2 ,\cdots d_{N-1} ]\notag\\
            &= \Delta \vec{f} + \vec{d}
        \end{align}
        Where $\vec{d}=[d_i|i=1,2,\cdots N-1, d_i\geq 0]$. Define $\vec{Q}=\nobreak\vec{\Gamma }^T \left( N\vec{I}_{N-1}- \vec{uu}^T \right)\vec{\Gamma }$, then
         \begin{align}
           \label{equa:Qij}
             \vec{Q}(i,j)&=
               \begin {cases}  {\left( {N - i} \right)j} & {i > j}  \\
                               {\left( {N - i} \right)i} & {i = j}  \\
                               {\left( {N - j} \right)i} & {i < j}
               \end {cases}
               ,\vec{Q}(i,j)>0
         \end{align}
        It follows that
        \begin{align}
            (\Delta \vec{f'})^T &\vec{\Gamma }^T \left(N\vec{I}_{N-1}-\vec{uu}^T \right) \vec{\Gamma}\Delta \vec{f'} \notag\\
            =&(\Delta \vec{f})^T \vec{\Gamma }^T \left(N\vec{I}_{N-1}-\vec{uu}^T \right) \vec{\Gamma}\Delta \vec{f}
               +\vec{d}^T \vec{\Gamma }^T \left(N\vec{I}_{N-1}-\vec{uu}^T \right) \vec{\Gamma} \vec{d} \notag\\
            >&(\Delta \vec{f})^T \vec{\Gamma }^T \left(N\vec{I}_{N-1}-\vec{uu}^T \right) \vec{\Gamma}\Delta \vec{f}
        \end{align}

    \end{proof}

   Remark: For equal-spaced measurement frequency such as RIPS, let the frequency spacing be $\Delta{\vec{f}}=[\Delta f,\Delta f,\cdots\Delta f]^T$, it is easily verified that
   \begin{align*}
    \vec{E}[(q-q_0)^2]&=\frac{c^2}{4\pi^2}\frac{\sigma _\theta^2}
    {\Delta \vec{f}^T \vec{\Gamma }^T
                    \left(\vec{I}_{N-1}-\vec{uu}^T/N \right) \vec{\Gamma}\Delta \vec{f} }\\
                        &=\frac{c^2 12\sigma _\theta ^2}{4\pi ^2 \Delta f^2 N( N^2-1)}\\
                        &=\frac{c^2 12\sigma _\theta ^2 ( N-1)} {4\pi ^2 B^2 N(N+1)}
   \end{align*}
   This means that the ranging accuracy of measurement method using equal-spaced frequency increases in proportion to $N$ as well as $B^2$.

   \newtheorem {lemma}{\noindent Lemma}
   \begin{lemma}
    \label{lemma:partial_sum}

     Let $\vec{a}=\left\{a_k,1\le k\le N\mid 0<a_1<\dots<a_N\right\}$ be a set of positive numbers sorted in ascending
     order and  $\vec{g}=\pi(\vec{a})=\left[g_k,1\le k\le N\right]$\,be a permutation of $\vec{a}$. Denote the set of
     all the permutations of $\vec{a}$ as $\mathcal{A}$ and define the partial sums sequence $\vec{b}$ of $\vec{g}$ as
      \begin{gather*}
        b_k= \sum_{i=1}^k g_i ,1\le k\le N, \hspace{2pt}   \overline{\vec{b}}=\frac{1}{N} \sum_{k=1}^N b_k
      \end{gather*}
      \vspace{-12pt}
      \begin{align}
      V(\vec{b})&=\frac{1}{N}\sum_{k=1}^N (b_k-\overline{\vec{b}})^2\notag\\
                   &=\frac{1}{N}\sum_{k=1}^N \left(\sum_{i=1}^k g_i-\frac{1}{N}\sum_{k=1}^N\sum_{i=1}^k g_i\right)^2\notag\\
                   &=f(\vec{g})
      \end{align}
     where $ \overline{\vec{b}} $ and $ V(\vec{b}) $ are the mean and variance of $ \vec{b} $. The optimal permutation
     is defined as ${\vec{g}}^{\ast} $, satisfying
       \begin{gather}
          \vec{g}^{\ast}=\arg {\max_{\vec{g} \in \mathcal{A}}} f(\vec{g})=\arg {\max_{\vec{g} \in \mathcal{A}}}
                          \frac{1}{N}\sum_{k=1}^N \left( b_k-\overline{\vec{b}} \right)^2
       \end{gather}

     Then the optimal sequences are
         \begin{equation}\vec{g}^{\ast}\hspace{-2pt}=\hspace{-2pt}
            \begin {cases} [a_1,a_3,a_5,\dots a_N,a_{N-1},\dots a_4,a_2 ] & N\hspace{2pt} is\hspace{2pt} odd\\
                           [a_1,a_3,a_5,\dots a_{N-1},a_N,\dots a_4,a_2 ] &\hspace{-2pt}N\hspace{2pt} is\hspace{2pt} even
            \end {cases}
         \end{equation}
     and
         \begin{equation}\vec{g}^{\ast}\hspace{-2pt}=\hspace{-2pt}
            \begin {cases} [a_1,a_2,a_4,\dots a_{N-1},a_N,\dots a_5,a_3 ] & N\hspace{2pt} is\hspace{2pt} odd\\
                          [a_1,a_2,a_4,\dots a_N,a_{N-1},\dots a_5,a_3 ] & \hspace{-2pt}N\hspace{2pt} is\hspace{2pt} even
            \end {cases}
         \end{equation}
    \end{lemma}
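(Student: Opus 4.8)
The plan is to recast the objective $f(\vec{g})$ as a quadratic form in $\vec{g}$ and then maximize it over permutations by a chain of exchange arguments. First I would expand the variance through the identity
\begin{gather*}
N^2 f(\vec{g})=N\sum_{k=1}^N b_k^2-\Big(\sum_{k=1}^N b_k\Big)^2=\sum_{1\le j<k\le N}(b_k-b_j)^2 .
\end{gather*}
Since $b_k-b_j=\sum_{i=j+1}^{k}g_i$ is the sum of a contiguous block of entries, this rewrites as $N^2 f(\vec{g})=\sum_{2\le p\le r\le N}\big(\sum_{i=p}^{r}g_i\big)^2$, a sum over all contiguous blocks whose left end is at least $2$; equivalently $N^2 f(\vec{g})=\vec{g}^T\vec{M}\vec{g}$ with $\vec{M}(i,j)=(\min(i,j)-1)(N+1-\max(i,j))\ge 0$. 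Three structural facts fall out at once and drive everything: (i) $f$ is nondecreasing in each $g_i$, since every block sum is a sum of positive numbers; (ii) the first entry $g_1$ lies in \emph{no} block (row and column $1$ of $\vec{M}$ vanish), so $g_1$ is a common shift of all partial sums and does not affect $f$; and (iii) reversing the tail $(g_2,\dots,g_N)$ leaves $f$ invariant, because it only permutes the block sums among themselves.

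From (i) and (ii) I would fix the extreme elements. Because the first slot is free while every later slot carries strictly positive weight, monotonicity forces the smallest element $a_1$ into position $1$: if some larger $a_m$ sat there, swapping it with $a_1$ would enlarge a strictly positive block sum and strictly increase $f$. Symmetry (iii) then fixes the orientation of the remaining tail, and a further exchange places the next-smallest element $a_2$ at the opposite end (position $N$). The point of peeling $a_1,a_2$ to the two ends is that $b_1=a_1$ and $b_N=S$, $b_{N-1}=S-a_2$ become the global extremes, and — after subtracting the common shift $a_1$ — the objective restricted to positions $2,\dots,N-1$ is again a variance of partial sums, now of the reduced set $\{a_3,\dots,a_N\}$. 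This is the inductive engine: at each stage the two currently-smallest elements occupy the two currently-outermost free positions and the interior recurses as the same problem. This generates exactly the alternating pattern $a_1,a_3,a_5,\dots$ outward from the left and $a_2,a_4,a_6,\dots$ inward from the right, peaking at $a_N$ (so the optimum is automatically unimodal), and the two stated sequences are precisely the two orientations permitted by (iii).

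The main obstacle is making the peeling step rigorous: proving that at an optimum the smallest remaining element must sit at an outermost free position, and that the interior subproblem genuinely decouples into the same variance problem on $\{a_3,\dots,a_N\}$. The tool for both is the first variation of a transposition of positions $s<t$, which shifts $b_s,\dots,b_{t-1}$ by $\Delta=g_t-g_s$ and changes the objective by
\begin{gather*}
\delta\!\left(N^2 f\right)=2\Delta\Big(\sum_{j=1}^{s-1}\sum_{k=s}^{t-1}(b_k-b_j)-\sum_{j=s}^{t-1}\sum_{k=t}^{N}(b_k-b_j)\Big)+(t-s)\big(N-(t-s)\big)\Delta^2 .
\end{gather*}
When the element being moved is the smallest available, $\Delta$ has a fixed sign, the quadratic term $(t-s)(N-(t-s))\Delta^2\ge 0$ is favorable, and I would show the linear bracket has the compatible sign using the monotonicity of the partial sums $b_1<\dots<b_N$, so moving that element toward an end never decreases $f$. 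The delicate bookkeeping is controlling that bracket across all admissible $(s,t)$ simultaneously, and confirming that once $a_1,a_2$ are pinned the residual form is affinely equivalent to the $(N-2)$-element instance; the boundary case where the bracket vanishes is exactly what produces the tie, i.e.\ the two optimal sequences related by (iii). I expect this sign analysis, rather than the algebraic reformulation, to be where the real work lies.
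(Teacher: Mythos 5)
The paper itself offers no proof of this lemma---the ``proof'' is a one-line citation of the companion paper \cite{Wei:12}---so your argument cannot be checked against an in-paper derivation and must stand on its own. Your algebraic setup is sound and genuinely useful: the identity $N^2 f(\vec{g})=N\sum_k b_k^2-(\sum_k b_k)^2=\sum_{j<k}(b_k-b_j)^2$, the resulting quadratic form with $\vec{M}(i,j)=(\min(i,j)-1)(N+1-\max(i,j))\ge 0$, the vanishing of row and column $1$ (which is essentially part~1 of the paper's Lemma~\ref{lemma:MMSE}), the reversal symmetry of positions $2,\dots,N$ (which correctly accounts for there being exactly two optimal sequences), the transposition formula, and the monotonicity argument pinning $a_1$ to position $1$ are all correct.

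The gap is the peeling/recursion step, and it is not merely ``delicate bookkeeping'': the decoupling claim is false as stated. Once $g_1=a_1$ and $g_N=a_2$ are pinned, the objective restricted to positions $2,\dots,N-1$ equals $\sum_{2\le i,j\le N-1}\vec{M}(i,j)g_ig_j+2a_2\sum_{i=2}^{N-1}(i-1)g_i+\mathrm{const}$, and neither piece is the $(N-2)$-element variance form: the quadratic weights $(\min-1)(N+1-\max)$ differ from the relabelled $(N-2)$-element weights $(\min-2)(N-\max)$ by the non-constant correction $(\min-1)+(N-\max)$, and the linear term carries position-dependent weights $(i-1)$. A concrete failure: for $N=5$ with $(a_2,a_3,a_4,a_5)=(1,2,3,4)$, the standalone $3$-element problem on $\{a_3,a_4,a_5\}$ is indifferent between the interior orders $(a_3,a_4,a_5)$ and $(a_3,a_5,a_4)$ (its first slot carries zero weight and the residual form $2x^2+2y^2+2xy$ is symmetric), yet embedded in the $5$-element problem these interior orders give $374$ versus $376$. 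So the subproblem is a genuinely different optimization whose maximizer merely happens to agree with one of the standalone optima, and your induction does not close. Likewise the placement of $a_2$ at an outermost free position, and the sign of the linear bracket in your transposition formula for general $(s,t)$, are asserted rather than proved---you flag this yourself---and that is precisely where the content of the lemma lives. As it stands, the proposal is a correct reformulation plus a rigorous proof that $g_1^{\ast}=a_1$, with the central combinatorial step still open.
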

    \begin{proof}
       The author proved it in \cite{Wei:12}.
    \end{proof}

    \begin{lemma}
    \label{lemma:MMSE}
     Suppose the symbols $\vec{a},\vec{g},\mathcal{A},\vec{g}^{\ast},f(\vec{g})$ are defined as lemma.
     \ref{lemma:partial_sum}. Let $\vec{g}^{(1)}=[g_{k+1},\;1 \leq k \leq N-1]$ and $\vec{a}_s=\{a_{k+1},\;1 \leq k \leq
     N-1\}=\{a_2,a_3 \cdots a_N \}$ denote the one-bit left-shift sequence of $\vec{g}$ and the subset of $\vec{a}$. Define
     all the permutations of  $\vec{a}_s$ as $\mathcal{V}=\{\vec{v}\left| {\vec{v}=\pi(\vec{a}_s)} \right\}$, then we
     have
        \begin{enumerate}
        \renewcommand{\baselinestretch}{1.2}\normalsize
         \item	For \,any $\vec{g}$ and its left-shift sequence $\vec{g}^{(1)}$,
                \begin{gather}
                    f(\vec{g})=\left(\vec{g}^{(1)}\right)^T \vec{\Gamma }^T \left(\vec{I}_{N-1}-\vec{uu}^T/N \right) \vec{\Gamma g}^{(1)}
                \end{gather}

         \item	The optimal permutation $\vec{v}^{\ast}$ that maximizing $(\vec{v})^T \vec{\Gamma }^T
                \left(\vec{I}_{N-1}-\vec{uu}^T/N \right) \vec{\Gamma v}$ is just the one-bit left-shift sequence of
                $\vec{g}^{\ast}$,
         \begin{align*}
          \vec{v}^{\ast}&=\arg \max_{\vec{v} \in \mathcal{V}}(\vec{v})^T \vec{\Gamma }^T
                    \left(\vec{I}_{N-1}-\vec{uu}^T/N \right) \vec{\Gamma v}\\
                           &=\{ g_{k + 1}^{\ast} ,1 \leq k \leq N-1 \} = \left(\vec{g}^{\ast} \right)^{(1)}
         \end{align*}
         \vspace{-8pt}
         \begin{equation}\vec{v}^{\ast}\hspace{-2pt}=\hspace{-2pt}
           \begin {cases} [a_3,a_5,\dots a_N,a_{N-1},\dots a_4,a_2 ] & N\hspace{2pt} is\hspace{2pt} odd\\
                          [a_3,a_5,\dots a_{N-1},a_N,\dots a_4,a_2 ] &\hspace{-2pt}N\hspace{2pt} is\hspace{2pt} even
           \end {cases}
         \end{equation}
         and
         \begin{equation}\vec{v}^{\ast}\hspace{-2pt}=\hspace{-2pt}
           \begin {cases} [a_2,a_4,\dots a_{N-1},a_N,\dots a_5,a_3 ] & N\hspace{2pt} is\hspace{2pt} odd\\
                          [a_2,a_4,\dots a_N,a_{N-1},\dots a_5,a_3 ] & \hspace{-2pt}N\hspace{2pt} is\hspace{2pt} even
           \end {cases}
         \end{equation}
       \end{enumerate}
    \end{lemma}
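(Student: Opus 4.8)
The plan is to establish part (1) as an algebraic identity and then obtain part (2) as an almost immediate consequence of Lemma \ref{lemma:partial_sum}, since part (1) shows that the two objective functions agree up to a harmless positive constant.

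For part (1), I would start by writing the partial-sum vector of the shifted sequence explicitly. With $b_k=\sum_{i=1}^k g_i$ the partial sums of the full permutation and $\vec{c}=\vec{\Gamma}\vec{g}^{(1)}$, the $k$-th entry is $c_k=\sum_{i=1}^k g_{i+1}=b_{k+1}-b_1$ for $1\le k\le N-1$. Substituting this into $\vec{c}^T(\vec{I}_{N-1}-\vec{uu}^T/N)\vec{c}=\sum_{k=1}^{N-1}c_k^2-\frac1N\left(\sum_{k=1}^{N-1}c_k\right)^2$, the crucial observation is that the missing index $k=1$ would contribute $b_1-b_1=0$, so both sums may be extended to run over $b_k-b_1$ for $1\le k\le N$ without changing their value. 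A centered sum of squares is invariant under subtracting the same constant $b_1$ from every term, so the expression collapses to $\sum_{k=1}^N (b_k-\overline{\vec{b}})^2=N\,f(\vec{g})$, reproducing the claimed identity up to the normalizing factor $1/N$ in the definition of $f$. This shift-invariance of the centered quadratic form is the only nontrivial algebra, and it is routine.

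For part (2), the idea is to convert the maximization over $\mathcal{V}$ into the maximization of $f$ already solved in Lemma \ref{lemma:partial_sum}. Given any $\vec{v}\in\mathcal{V}$, a permutation of $\vec{a}_s=\{a_2,\dots,a_N\}$, I prepend the smallest element $a_1$ to form the full permutation $\vec{g}=[a_1,v_1,\dots,v_{N-1}]$ of $\vec{a}$, so that $\vec{g}^{(1)}=\vec{v}$. Part (1) then gives $(\vec{v})^T\vec{\Gamma}^T(\vec{I}_{N-1}-\vec{uu}^T/N)\vec{\Gamma v}=N\,f([a_1,\vec{v}])$, a fixed positive multiple of $f$, so the two objectives share the same maximizer. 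Because $\vec{a}_s=\vec{a}\setminus\{a_1\}$, prepending $a_1$ is a bijection between $\mathcal{V}$ and the set of full permutations of $\vec{a}$ whose first entry is $a_1$; hence maximizing the quadratic form over $\mathcal{V}$ is exactly maximizing $f(\vec{g})$ subject to $g_1=a_1$.

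The decisive step, and the one I expect to be the main point to state cleanly, is the observation that the global maximizer $\vec{g}^{\ast}$ furnished by Lemma \ref{lemma:partial_sum} already begins with $a_1$. Both explicit optimal sequences there start with $a_1$, so the constraint $g_1=a_1$ is inactive and the constrained maximum coincides with the unconstrained one, $f(\vec{g}^{\ast})$. This forces $\vec{v}^{\ast}=(\vec{g}^{\ast})^{(1)}$, and the two explicit forms follow immediately by deleting the leading $a_1$ from each optimal sequence of Lemma \ref{lemma:partial_sum}; the only care needed is the parity bookkeeping when performing the shift in the odd and even cases.
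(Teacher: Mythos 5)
Your proposal is correct and follows essentially the same route as the paper: part (1) is the same algebraic identity (the paper centers the partial sums via $d_k=b_k-g_1$ and uses the matrix identity $(\vec{I}-\vec{uu}^T/N)^2+\vec{uu}^T/N^2=\vec{I}-\vec{uu}^T/N$, while you use the equivalent scalar shift-invariance of the centered sum of squares after padding with the zero term), and part (2) hinges on exactly the same key observation that $g_1^{\ast}=a_1$, which the paper phrases as a two-sided inequality sandwich and you phrase as an inactive constraint under the prepend-$a_1$ bijection. Your remark about the stray factor of $N$ correctly flags a normalization inconsistency already present between the paper's Lemma 1 and its proof of this lemma; it is harmless for the argmax claim.
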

     \begin{proof}  
     \begin{enumerate}
       \item For any $\vec{g}$ and its partial sums sequence $\vec{b}$, let $\vec{d}=[d_1,d_2,\cdots,d_N]^T$ with $d_k=b_k-g_1$, and $\overline{\vec{d}}=\nobreak\frac{1}{N} \sum_{k=1}^N d_k=\frac{1}{N} \sum_{k=2}^N d_k$, then
           \begin{align*}
              f(\vec{g})&=\sum_{k=1}^N \left(b_k-\frac{1}{N} \sum\nolimits_{j=1}^N b_j\right)^2 \\
                           &=\sum_{k=1}^N \Big( b_k-( Ng_1+(N-1)g_2+\cdots g_N )/N \Big)^2 \\
                           &=\left[ \sum_{k=2}^N  \left(d_k-\overline{\vec{d}} \right)^2 +\overline{\vec{d}}^2\right]
           \end{align*}
           Define $\vec{d}^{(1)}=[d_2 ,d_3 ,\cdots,d_N]^T$ and $\vec{g}^{(1)}=[g_2 ,g_3 ,\cdots,g_N]^T$, we can get
           \begin{gather*}
            \vec{d}^{(1)}=\vec{\Gamma }\vec{g}^{(1)},\quad\overline{\vec{d}}=\frac{\vec{u}^T\vec{\Gamma
                             }\vec{g}^{(1)}}{N}
           \end{gather*}

           \vspace{-10pt}

           \begin{align*}%
            \left(\vec{d}^{(1)}-\overline{\vec{d}}\vec{u} \right)&=\vec{\Gamma }\vec{g}^{(1)}-\frac{\vec{u}^T\vec{\Gamma}\vec{g}^{(1)}\vec{u}}{N}\\
            &=\left(\vec{I}_{N-1}-\frac{\vec{uu}^T}{N} \right) \vec{\Gamma g}^{(1)}\\
            \left[ \sum_{k=2}^N  \left(d_k-\overline{\vec{d}} \right)^2 \right]&= \left(\vec{d}^{(1)}-\overline{\vec{d}}\vec{u} \right)^T \left(\vec{d}^{(1)}-\overline{\vec{d}}\vec{u} \right)\\
            &=\left(\vec{g}^{(1)}\right)^T \vec{\Gamma }^T\left(\vec{I}_{N-1}-\vec{uu}^T/N \right)
               \left(\vec{I}_{N-1}-\vec{uu}^T/N \right)\vec{\Gamma g}^{(1)}
            \end{align*}

            \vspace{-14pt}

            \begin{align*} 
            f(\vec{g})&=\left(\vec{g}^{(1)}\right)^T \vec{\Gamma }^T \left(\left(\vec{I}_{N-1}-\vec{uu}^T/N \right)
                         \left(\vec{I}_{N-1}-\vec{uu}^T/N \right) +\vec{uu}^T/N^2 \right) \vec{\Gamma g}^{(1)}\\
                      &=\left(\vec{g}^{(1)}\right)^T \vec{\Gamma }^T\left(\vec{I}_{N-1}-\vec{uu}^T/N \right)\vec{\Gamma g}^{(1)}
           \end{align*}

       \item According to lemma \ref{lemma:partial_sum}, $g_1^{\ast}=a_1$, then $\left(\vec{g}^{\ast} \right)^{(1)}\hspace{-2pt} \in \hspace{-2pt}\mathcal{V}$, we have
                \begin{align}
                \label{equa:large}
            f(\vec{g}^{\ast})&=\left({\vec{g}^{\ast}}^{(1)}\right)^T \vec{\Gamma }^T\left(\vec{I}_{N-1}-\vec{uu}^T/N \right)\vec{\Gamma }{\vec{g}^{\ast}}^{(1)}\notag\\
               &\leq \left(\vec{v}^{\ast}\right)^T \vec{\Gamma }^T\left(\vec{I}_{N-1}-\vec{uu}^T/N \right)\vec{\Gamma }\vec{v}^{\ast}
               \end{align}
           Let $\vec{\tilde v}^{\ast}=\left\{a_{1},\vec{v}^{\ast}\right\}
           = \left\{a_{1},v_1^{\ast},\cdots v_{N-1}^{\ast}  \right\}$, $\vec{\tilde v}^{\ast} \in \mathcal{V}$, therefore
              \begin{gather}
              \label{equa:small}
                \hspace{-12pt}\left(\vec{v}^{\ast}\right)^T \vec{\Gamma }^T\left(\vec{I}_{N-1}-\vec{uu}^T/N \right)\vec{\Gamma }\vec{v}^{\ast}=f(\vec{\tilde v}^{\ast}) \leq f(\vec{g}^{\ast})
              \end{gather}
              From (\ref{equa:large}) and (\ref{equa:small}), it is readily seen that
              \begin{gather} %
                \left(\vec{v}^{\ast}\right)^T \vec{\Gamma }^T\left(\vec{I}_{N-1}-\vec{uu}^T/N \right)\vec{\Gamma }\vec{v}^{\ast}= f(\vec{g}^{\ast})\notag\\
                \vec{v}^{\ast}= \left( \vec{g}^{\ast} \right)^{(1)}
              \end{gather}
     \end{enumerate}
     \end{proof}

    Based on lemma \ref{lemma:MMSE}, we deduce the following Theorem

    \begin{theorem}
    \label{theo:optimal_permutation}
     Suppose that $\Delta \vec{f}=[\Delta f_1,\Delta f_2,\cdots\Delta f_{N-1}]$ is any given frequency spacing sorted in
     ascending order, then the optimal permutation of $\Delta \vec{f}$, in the sense of maximizing equation(\ref{equa:MMSE}),
     takes the form
     \begin{gather}
        \Delta \vec{f}^{\ast}=[\Delta f_1,\Delta f_3,\cdots \Delta f_{N-1},\Delta f_{N-2},\cdots \Delta f_4,\Delta f_2]
     \end{gather}
     \end{theorem}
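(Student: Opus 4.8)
The plan is to obtain this theorem as an immediate corollary of Lemma~\ref{lemma:MMSE}. First I would observe that the quantity to be maximized in~(\ref{equa:MMSE}) is precisely the quadratic form $(\vec{v})^T \vec{\Gamma}^T (\vec{I}_{N-1}-\vec{uu}^T/N)\vec{\Gamma v}$ that part~(2) of Lemma~\ref{lemma:MMSE} already optimizes, the only change being that the permuted vector $\vec{v}$ now ranges over the arrangements of the fixed values $\{\Delta f_1,\Delta f_2,\ldots,\Delta f_{N-1}\}$. Because the objective depends only on these $N-1$ numbers and on the matrix $\vec{\Gamma}^T(\vec{I}_{N-1}-\vec{uu}^T/N)\vec{\Gamma}$, which is identical in both statements, the maximizing permutation must be the one the lemma exhibits; no new optimization is required.

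The substantive step is a careful relabeling. Lemma~\ref{lemma:MMSE} is stated for the subset $\vec{a}_s=\{a_2,a_3,\ldots,a_N\}$ of an $N$-element ascending set, whereas here the objects to be permuted are $\Delta f_1<\Delta f_2<\cdots<\Delta f_{N-1}$. I would set up the order-preserving bijection $a_{k+1}\mapsto\Delta f_k$ for $1\le k\le N-1$; one may append any $a_1<\Delta f_1$ to complete $\vec{a}$, and since the optimal sequences in Lemma~\ref{lemma:MMSE} never involve $a_1$, this phantom element is harmless. Under this shift the even-indexed entries $a_2,a_4,a_6,\ldots$ map to the odd-indexed spacings $\Delta f_1,\Delta f_3,\Delta f_5,\ldots$, the odd-indexed entries $a_3,a_5,\ldots$ map to the even-indexed spacings $\Delta f_2,\Delta f_4,\ldots$, and the largest element $a_N$ maps to the largest spacing $\Delta f_{N-1}$.

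Next I would pick the second of the two optimal sequences supplied by Lemma~\ref{lemma:MMSE}, the one beginning $a_2,a_4,\ldots$ and ending $\ldots,a_5,a_3$, since it starts at the smallest element $a_2$ and ends at $a_3$, matching the theorem's target which starts at $\Delta f_1$ and ends at $\Delta f_2$. Applying the bijection term by term sends the ascending even-index block to $\Delta f_1,\Delta f_3,\ldots$, places $\Delta f_{N-1}$ at the apex, and sends the descending odd-index block to $\ldots,\Delta f_4,\Delta f_2$, yielding exactly
\[
   \Delta\vec{f}^{\ast}=[\Delta f_1,\Delta f_3,\cdots\Delta f_{N-1},\Delta f_{N-2},\cdots\Delta f_4,\Delta f_2].
\]

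The point I expect to need the most care is confirming that the single closed-form expression in the theorem reproduces both of the parity-dependent cases ($N$ odd and $N$ even) of Lemma~\ref{lemma:MMSE} after the index shift; the shift by one interchanges the parities, so the lemma's two-case split should collapse into one uniform mountain-shaped pattern for the spacings. I would verify this by checking the two endpoints and the apex for both parities on the small cases $N=4$ and $N=5$, where the generic formula degenerates gracefully. Since optimality is inherited wholesale from Lemma~\ref{lemma:MMSE}, no inequality or perturbation estimate beyond this bookkeeping is needed.
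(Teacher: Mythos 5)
Your proposal is correct and follows essentially the same route as the paper, which simply declares the theorem ``a direct conclusion of Lemma~\ref{lemma:MMSE}''; you merely make explicit the relabeling $a_{k+1}\mapsto\Delta f_k$ (with a harmless phantom $0<a_1<\Delta f_1$) and the parity bookkeeping that the paper leaves implicit. No further comparison is needed.
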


     \begin{proof}
         It is a direct conclusion of lemma \ref{lemma:MMSE}.
     \end{proof}

    We named the permutation of the form $\Delta \vec{f}^{\ast}$ provided in Theorem \ref{theo:optimal_permutation} as min-error permutation and the dual form $\Delta \vec{\tilde f}=[ \Delta f_{N-1} , \cdots \Delta f_3 ,\Delta f_1 ,\Delta f_2,\Delta f_4 , \cdots \Delta f_{N-2}]$ as max-error permutation. It is proved in \cite{Eilon:77} that the permutation in the form of $\Delta \vec{\tilde f}$ has extremely small variance once substituted into equation(\ref{equa:MMSE}). This implies that this permutation will lead to large ranging error when used in measurement frequency design.

    \begin{theorem}
    \label{theo:optimal_frequency}
      For a measurement system with bandwidth $B$, frequency number $N$, frequency resolution $\Delta f_{\min}$, the total  frequency number $\lfloor M={B/\Delta f_{\min}} \rfloor$ ($\lfloor \, \rfloor$ \,denotes the floor function), if the range  fall in $[-c/2B, c/2B]$, then the optimal measurement frequencies must be picked up from as near as possible to both ends of the frequency band, with frequency spacing in the form of
        \begin{gather}
          \Delta \vec{f}^{\ast}=[1,1,\cdots M+2-N,\cdots 1,1]^T \Delta f_{\min}
        \end{gather}
    \end{theorem}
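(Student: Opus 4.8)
The plan is to exploit the hypothesis $q_0\in[-c/2B,\,c/2B]$, which confines the true range to the main lobe of the cost function so that outliers and the UMR are irrelevant and the design collapses to the pure maximization of the MMSE denominator in (\ref{equa:MMSE}). Writing $J(\Delta\vec{f})=(\Delta\vec{f})^T\vec{\Gamma}^T(\vec{I}_{N-1}-\vec{uu}^T/N)\vec{\Gamma}\Delta\vec{f}$, my first step is to re-express $J$ in terms of the chosen carrier positions. Since $\vec{\Gamma}\Delta\vec{f}=[f_2-f_1,\,f_3-f_1,\dots,f_N-f_1]^T$, since $(\vec{I}_{N-1}-\vec{uu}^T/N)$ subtracts $1/N$ of the coordinate sum, and since the missing term $f_1-f_1=0$ contributes nothing, one obtains
\[
  J(\Delta\vec{f})=\sum_{k=1}^N (f_k-f_1)^2-\frac{1}{N}\Big(\sum_{k=1}^N (f_k-f_1)\Big)^2=N\cdot\mathrm{Var}(f_1,\dots,f_N),
\]
the population variance of the selected frequencies, equal to $\frac1N\sum_{i<k}(f_i-f_k)^2$. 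Thus \emph{maximizing ranging accuracy is equivalent to maximizing the spread of the measurement frequencies}, and because variance is invariant under translation and reflection I may work with grid positions in $\{0,1,\dots,M\}$ (units of $\Delta f_{\min}$).

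Second, I would pin the two extreme frequencies to the band edges. If $\min_k f_k>0$, shifting the smallest frequency down by one grid step strictly increases every term $(f_k-\min)^2$ and hence $J$; the symmetric argument holds at the top end. This is the discrete counterpart of Proposition \ref{propo:increase_B}. Consequently the full bandwidth is used, the integer spacings $n_i=\Delta f_i/\Delta f_{\min}$ satisfy $\sum_i n_i=M$, and the task reduces to: choose $N$ distinct integers in $\{0,\dots,M\}$ of maximal variance.

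Third, the core combinatorial step. For any center $m$ one has the identity $\mathrm{Var}(S)=\frac1N\sum_{v\in S}(v-m)^2-(\bar f-m)^2\le\frac1N\sum_{v\in S}(v-m)^2$, where $\bar f$ is the mean of $S$. Taking $m=M/2$ makes the right-hand side \emph{separable}: it is a sum of fixed weights $w_v=(v-M/2)^2$ over the chosen positions, and $w_v$ grows monotonically as $v$ moves away from the band center. Hence the $N$ positions maximizing $\sum_{v\in S}w_v$ are exactly the $N$ grid points farthest from $M/2$, namely two consecutive blocks hugging the two ends, $\{0,\dots,a-1\}\cup\{M-b+1,\dots,M\}$ with $a+b=N$ as balanced as the parity permits. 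When $N$ is even this set is symmetric about $M/2$, so $\bar f=M/2$, the bound is attained with equality, and the configuration is globally optimal; when $N$ is odd the slack $(\bar f-M/2)^2$ is absorbed by centering at the cluster mean and using the reflection symmetry between the two balanced splits. Any such optimum carries $N-2$ unit spacings inside the clusters and a single large gap $(M-b+1)-(a-1)=M+2-N$ between them, which is precisely the asserted multiset of spacings.

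Finally, the position of the large gap is fixed by Theorem \ref{theo:optimal_permutation}: among all permutations of the spacing multiset $\{1,\dots,1,M+2-N\}$, the one maximizing (\ref{equa:MMSE}) places the largest element in the middle, i.e.\ balances the two clusters, yielding $\Delta\vec{f}^{\ast}=[1,\dots,M+2-N,\dots,1]^T\Delta f_{\min}$. The step I expect to be the main obstacle is the combinatorial maximization over all $N$-subsets in the third paragraph: the separable upper bound settles the even-$N$ case cleanly, but rigor for odd $N$ (where the variance-minus-mean slack does not vanish at the symmetric choice) needs the extra centering/reflection argument, and one must still rule out unbalanced two-cluster and multi-gap configurations — all of which the variance-as-spread reformulation is designed to make tractable.
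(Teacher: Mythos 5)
Your reformulation of the MMSE criterion as $J(\Delta\vec{f})=\sum_k(f_k-\bar f)^2=\frac{1}{N}\sum_{i<k}(f_i-f_k)^2$, i.e.\ $N$ times the variance of the selected carrier positions, is correct and is a genuinely different route from the paper's: the paper never passes to a variance-of-positions picture, but instead classifies spacing vectors by their multiset, applies Theorem \ref{theo:optimal_permutation} within each class, and compares classes by an explicit unit-transfer exchange (decrement some $k_m\ge 2$, increment the largest spacing, and show via the trace identity with the matrix $\vec{Q}$ that the quadratic form strictly increases). Your endpoint-pinning step and your separable bound are sound, and for even $N$ your argument is arguably cleaner and more self-contained than the paper's, since the symmetric two-block set attains the bound $\mathrm{Var}(S)\le\frac{1}{N}\sum_{v\in S}(v-M/2)^2$ with zero slack and therefore strictly dominates every other $N$-subset.

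The odd-$N$ case, however, is a genuine gap, and the fix you sketch does not close it. For the unbalanced two-block set the slack $(\bar f-M/2)^2$ equals $\left(\frac{M+1-N}{2N}\right)^2$, which is of order $M^2$ --- the same order as the variance itself --- so the bound centered at $M/2$ is loose by a constant factor, not by a lower-order correction, and cannot separate the candidate optimum from competing configurations. ``Centering at the cluster mean'' does not repair this: the candidate $S_1=\{0,\dots,a-1\}\cup\{M-b+1,\dots,M\}$ with $a=(N+1)/2$, $b=(N-1)/2$ is \emph{not} the set of $N$ grid points farthest from its own mean (for $M=10$, $N=3$, $S_1=\{0,1,10\}$ has mean $11/3$, and the excluded point $9$ is farther from $11/3$ than the included point $1$), so the separable bound at $m=\bar f_{S_1}$ is not attained by $S_1$ and again certifies nothing. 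To finish the odd case you need a different mechanism --- for instance an exchange argument on the pairwise form (moving a point from $v$ to a free slot $v'$ changes $\sum_{i<k}(f_i-f_k)^2$ by $(v'-v)\left[(N-1)(v'+v)-2(N\bar f-v)\right]$, from which interior points and multi-gap configurations can be excluded), or simply the paper's transfer argument on the spacing multiset followed by Theorem \ref{theo:optimal_permutation}, which handles both parities uniformly.
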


     \begin{proof}
       See Appendix \ref{append:optimal_frequency}.
     \end{proof}

    \subsection{Algorithm}
    \label{subsect:algorithm}
    Based on Proposition \ref{propo:prime}$\sim$\ref{propo:increase_B} and Theorem \ref{theo:optimal_permutation}, we proposed a quite simple yet effective algorithm in Table \ref{table_algorithm}:

    \begin{table}[!t]
    \renewcommand{\arraystretch}{1.3}  
    \caption{The prime-based min-error method}
    \label{table_algorithm}
    \centering      
    \begin{tabular}{|p{9cm}|}  
    \hline
    $\star$ Define the set of primes less than $M$ as $\mathcal{P}_{M}$ (i.e. $\mathcal{P}_{10}=\{2,3,5,7 \}$). The positive
           integer $M$ is selected so that $|\mathcal{P}_{M}|\gg N$. \\

    $\star$  Assume $\Delta f_{\min}$ is the frequency resolution of measurement
           system and $K$ is the common factor, i.e., $K\Delta f_{\min}$ is the minimal
           frequency spacing of the frequency. To make full use of the
           bandwidth $B$, we perform the following steps (assume the UMR requirement is known as priori information or the UMR is set as needed) : \\

      \begin{enumerate}
        \item $i=1$.
        \item Pick the $N-1$ consecutive primes subset $\mathcal{S}^i=\{\mathcal{P}_M (i)\cdots \mathcal{P}_M (N+i-2)\}$
            and find the corresponding $K$, obey
               $K\Delta f_{\min} \sum_{j=1}^{N-1} \mathcal{S}^i(j) \leq B $
           and
               $(K+1)\Delta f_{\min}\sum_{j=1}^{N-1} \mathcal{S}^i(j) > B $.
        \item if $c/(K\Delta f_{\min})>UMR$, go to step 4); else, $i=i+1$ and go to step 2).
        \item  Then, the frequency spacing set $\Delta{\mathcal{F}}=\{ \Delta f_j=K\Delta f_{\min} \mathcal{S}^i(j) \left| {j=1 \cdots N-1}\right \}$ satisfying $\sum_{j=1}^{N-1}{\Delta f_j}\leq B$ is constructed.
      \end{enumerate}

   $\star$  Sort the frequency spacing in ascending order and obtain
       $\Delta \vec{f}=[\Delta f_1,\Delta f_2,\cdots\Delta f_{N-1}]$. \\
     \begin{itemize}
       \item If $N$ is odd, $\Delta \vec{f}$ can be rearranged as
           $$\Delta \vec{f}^{\ast}=[\Delta f_1,\Delta f_3,\cdots\Delta f_{N-2},\Delta f_{N-1},\cdots \Delta f_4,\Delta f_2]$$
       \item If $N$ is even, then
           $$\Delta \vec{f}^{\ast}=[\Delta f_1,\Delta f_3,\cdots\Delta f_{N-1},\Delta f_{N-2},\cdots \Delta f_4,\Delta f_2]$$
     \end{itemize}

   $\star$  The measurement frequencies is finally obtained
     $f_i=f_1 +\sum_{k=1}^{i-1} \Delta \vec{f}^{\ast} (k), i=1,2,\cdots N$.  \vspace{3mm}\\

   \hline
   \end{tabular}
   \end{table}
 Remark: The large enough primes set in Table \ref{table_algorithm} allow the algorithm to select the appropriate primes as needed. In the latter part of the paper, the notation $(B,N,\Delta f_{\min},i,K)$ stands for the design parameters of the min-error method defined above. Note that the UMR of the method is $\Delta L\approx c/(K\Delta f_{\min})$. The common factor $K$ and the index $i$ have great influence on the UMR as well as the estimation accuracy, as is shown in the following simulation. Here, $K$ is set as large as possible while $i$ is set in the opposite direction, see Fig.\ref{fig_impact_K} for the reason.

\section{Simulation Results}

   In this section, we present simulation results to compare different frequency design methods under various scenarios. The measurement frequencies of the Towers method satisfy\cite{Towers:03}
    \begin{gather*}
      \frac{f_N-f_{N-2}}{f_N-f_{N-1}}=\frac{f_N-f_{N-3}}{f_N-f_{N-2}}=\cdots=\frac{f_N}{f_N-f_1}=\frac{f_N}{B} \\
      \frac{f_N}{f_N-f_i}=\left(\frac{f_N}{B}\right)^{i},f_i=f_N-f_N\left(\frac{B}{f_N}\right)^{i},i=1,2\cdots N-1
   \end{gather*}
   The measurement frequencies of RIPS are \cite{Kusy:06}
    \begin{equation*}
      f_i=f_1+\left(i-1\right)B/\left(N-1\right),i=1,2\cdots N
    \end{equation*}
   The frequencies of the constrained optimal method are given by
    \begin{gather*}
      \Delta \vec{f}=[1,1,\ldots,M+2-N,\ldots,1,1]^T\Delta f_{\min} \\
      f_i=f_1+\sum_{k=1}^{i-1}\Delta \vec{f}(k),M=B/\Delta f_{\min}
   \end{gather*}
   The frequencies of the prime-based min-error method are selected according to Table \ref{table_algorithm} (assuming N to be even)
    \begin{gather*}
    \Delta \vec{f}=[\Delta f_1,\Delta f_3,\cdots\Delta f_{N-1},\Delta f_{N-2},\cdots\Delta f_4,\Delta f_2]\\
    f_i=f_1+\sum_{k=1}^{i-1}\Delta \vec{f}(k)
    \end{gather*}
   To illustrate the effect of permutation, we also show the prime-based max-error method (having the same set of frequency spacing as the prime-based min-error method but with the worst permutation), in the form of
    \begin{gather*}
    \Delta \vec{f}=[\Delta f_{N-1},\Delta f_{N-3},\cdots\Delta f_1,\Delta f_2,\cdots\Delta f_{N-4},\Delta f_{N-2}]\\
    f_i=f_1+\sum_{k=1}^{i-1}\Delta \vec{f}(k)
    \end{gather*}

   Unless otherwise mentioned, the following parameter values are assumed: The measurement frequencies range from 400\,MHz to 500\,MHz, the initial frequency $f_1=400\textrm{\,MHz}$ is used. The interferometric range is $q_0=0\,\textrm{m}$, the phase error in each frequency is modeled as independent and identically distributed (i.i.d.) zero-mean complex white Gaussian noise with variance $E\{\theta_{e}^{2}(k)\}=\sigma_{\theta}^{2}$. The signal-to-noise ratio is defined as $\textrm{SNR}=1/(2\sigma_{\theta}^{2})$ according to section \ref{subsect:CRB}. We adopt the LS search algorithm to find the optimal solution, with a step size of 0.001\,m ranging from $-c/2\Delta f$ to $c/2\Delta f$, where $\Delta f=B/(N-1)$ and $c/\Delta f$ are the frequency separation and UMR of RIPS method, respectively.

   The MSE of different methods are averaged over 2000 Monte Carlo runs for each SNR. The MMSE of RIPS method, Prime-based min-error method and Prime-based max-error method are denoted by MMSE-RIPS, MMSE-min-error and MMSE-max-error. The CRB of all the three methods are nearly not distinguishable, only the CRB bound of RIPS method is plotted using (\ref{equa:CRB}) and denoted as CRB-RIPS.

   For a fair comparison among different frequency design methods, the same number of frequencies $N$ and bandwidth $B$ have been used. The notation $(B,N,\Delta f_{\min},i,K)$ is defined in section \ref{subsect:algorithm} with  $\Delta f_{\min}=65\,\textrm{Hz}$ is assumed, which is the frequency resolution of MICA2 platform used in the field experiment of section \ref{sect:experiment}\cite{Maroti:05}. With no information about UMR requirement, $i=1$ is used in the simulation except Fig.\ref{fig_impact_K}. Note that even in this case, the UMR is $\Delta L\approx c/(K\Delta f_{\min})>(c/B)\sum_{j=1}^{N-1} \mathcal{S}^1(j)\gg (c/B)(N-1)=\Delta L_{RIPS} $ with $\Delta L_{RIPS} $ denotes the UMR of RIPS, see Table \ref{table_algorithm}.


   \begin{figure}[!t]  %
      \centerline{
        \subfloat[Case I]{\includegraphics[height=2.8in,width=3in]{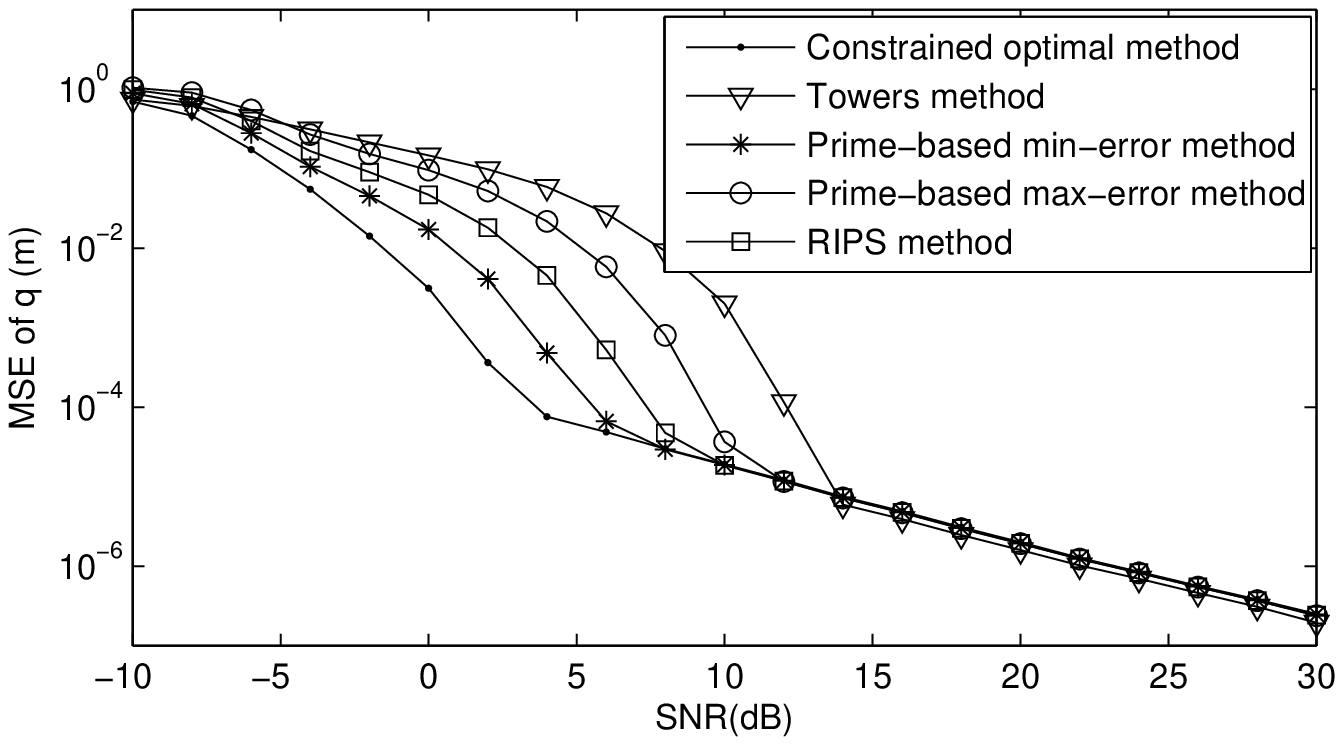}
        \label{fig_constrained}}
        \subfloat[Case II]{\includegraphics[height=2.8in,width=3in]{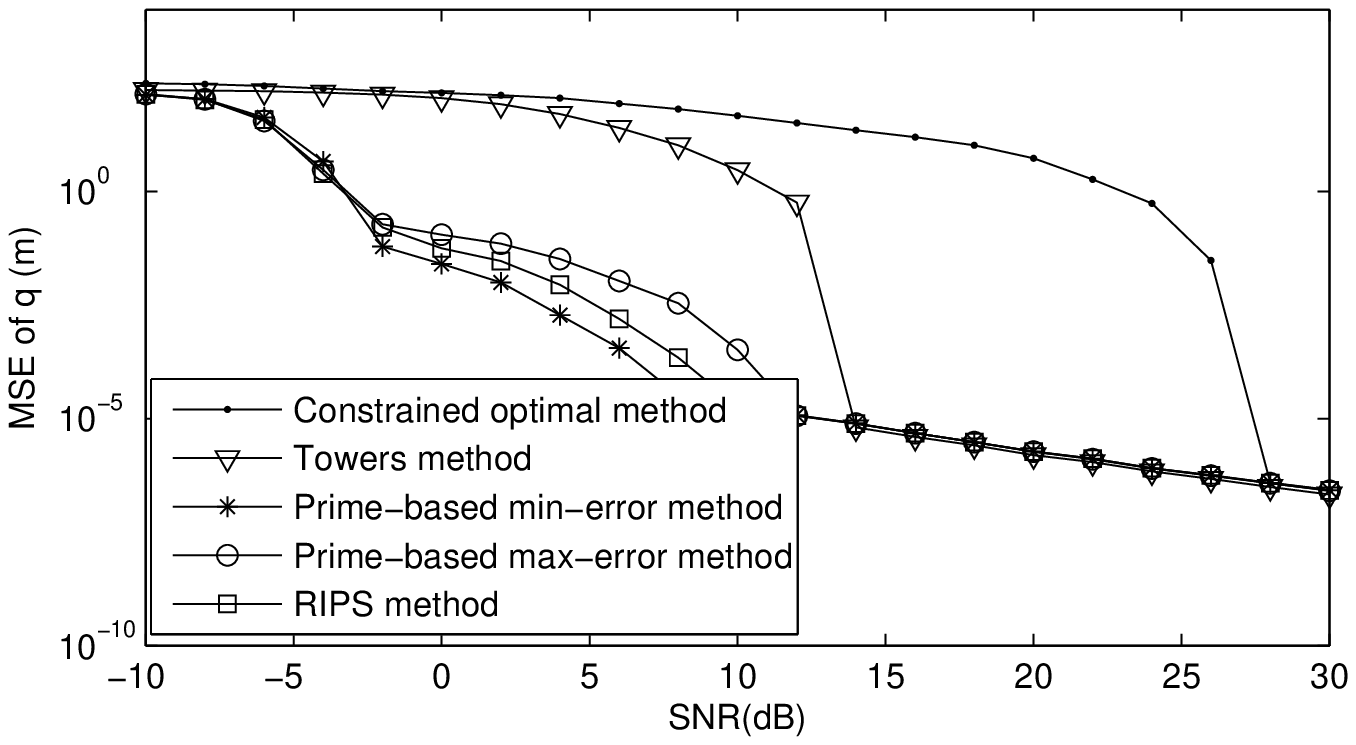}%
        \label{fig_unconstrained}}
        }
      \hfill
      \caption{The MSE versus measurement range with $N=31,B=90\,\textrm{MHz}$. (a) the search range of $[-c/2B, c/2B]$. (b) the search range of $[-c/2\Delta f, c/2\Delta f]$}%
      \label{fig_range}
   \end{figure}

    Fig.\ref{fig_constrained}, Fig.\ref{fig_unconstrained} illustrate that the constrained optimal method has the best ranging accuracy if the priori information that $q_0$ is in the range $[-c/2B, c/2B]$ is provided. It will fail to work once $q_0$ is outside this region. This feature is predicted in Theorem \ref{theo:optimal_frequency} and reduces its measurement range greatly. The proposed min-error method outperforms all the others and is slightly inferior to the constrained optimal method only in the  above limited measurement range. The MSE performance of towers method is not satisfactory due to the fact it uses a local instead of global method to optimize its frequency.

    \begin{figure}[!t]
      \centering
      \includegraphics[height=2.8in,width=3.5in]{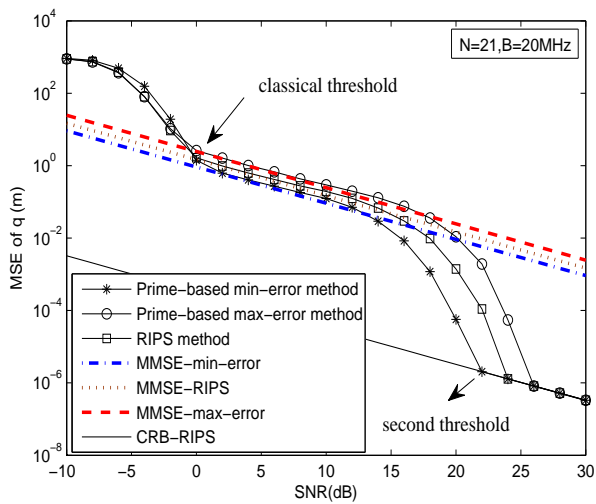}
      \caption{The double MSE curve for $N=21$ and $B=20\,\textrm{MHz}$. }
      \label{fig_double_mse}
    \end{figure}

    \begin{figure}[!t]
      \centering
      \includegraphics[height=2.8in,width=3.5in]{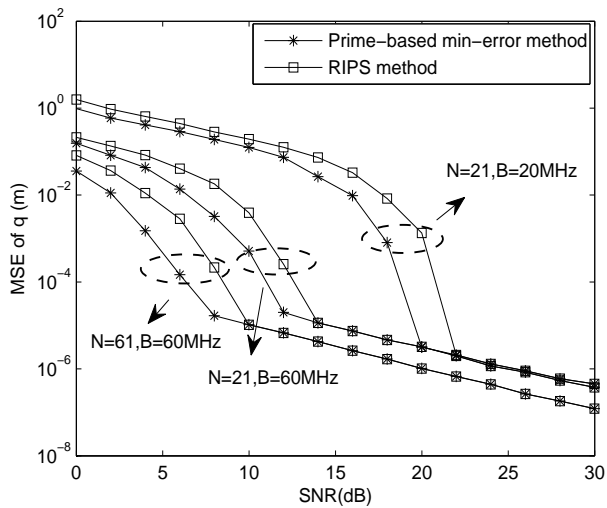}
      \caption{The impact of N and B.}
      \label{fig_impact_NB}
    \end{figure}

   When the condition $f_1\gg B$ is met, the multi-frequency interferometry ranging technique exhibits the unique double-threshold feature, as shown in Fig.\ref{fig_double_mse}. That is to say, besides the classical threshold, another threshold occurs. The MSE curve firstly follows the MMSE derived in (\ref{equa:MMSE}) tightly once SNR exceeds the classical threshold and drops to the CRB (or HMSE) when SNR reaching the second threshold. Only the CRB of RIPS is plotted for minor difference between the three methods and that is just the reason of optimizing MMSE in the paper. The prime-based min-error method has the minimal MMSE as well as the best estimation accuracy as expected. Compared with the prime-based max-error method, the merit of frequency spacing rearrangement is clearly visible, with about 4 dB gain obtained at the MSE of $10^{-4}$. This point will be further illustrated in later simulation.

   From Fig.\ref{fig_impact_NB}, it is observed that increasing the number of frequency or the measurement bandwidth will both improve the ranging accuracy of all the methods. The performance improvement achieved by increasing bandwidth $B$ is more significant, relative to the increment of $N$. Note that the proposed prime-based min-error method also works well for relatively wide bandwidth, also seen in Fig.\ref{fig_range}, although the design principle is derived under the assumption of $f_1\gg B$.

   \begin{figure}[!t]  
     \centering
     \includegraphics[height=2.8in,width=3.5in]{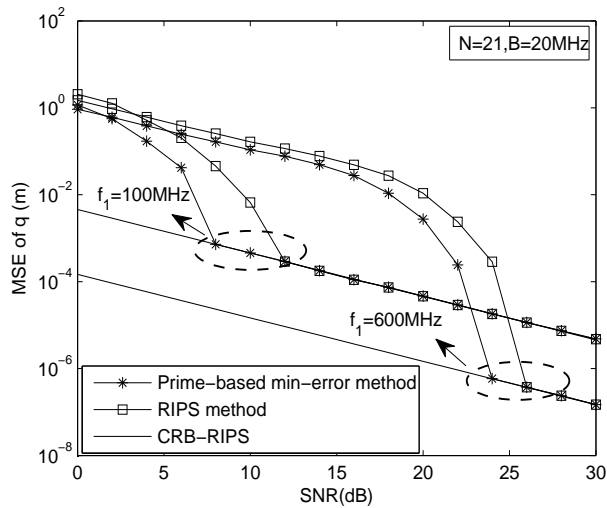}
     \caption{The impact of the initial measurement frequency.}
     \label{fig_initial_frequency}
   \end{figure}

   \begin{figure}[!t]
     \centering
     \includegraphics[height=2.8in,width=3.5in]{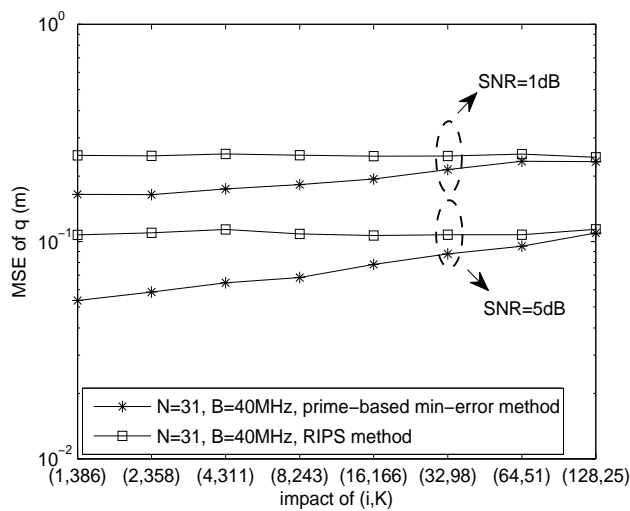}
     \caption{The impact of the common factor $K$ and the prime index $i$ in algorithm \ref{subsect:algorithm}.}
     \label{fig_impact_K}
   \end{figure}

   With fixed $B$ and $N$ in Fig.\ref{fig_initial_frequency}, it is interesting to find that the CRB is more easily attained at the cost of larger CRB for a low initial frequency $f_1$. The opposite has be seen for a high initial frequency. This phenomenon reveals an important design criterion that we prefer to use low initial frequency for moderate SNR and narrow bandwidth since the CRB is already attainable within this SNR region.

   The influence of the parameter $i,K$ on the MSE performance for a particular SNR is shown in Fig.\ref{fig_impact_K}, with $N=31$ and $B=40\,\textrm{MHz}$. The $x$-coordinate is the combination of $i,K$, which is the abbreviation of $(B,N,\Delta f_{\min},i,K)$ for simplicity($i$ is always set to 1 except in this case). The superiority of the min-error method is clearly seen for large $K$ and will diminish as $K$ decreases. The reason is that the difference between the frequency separation of the proposed method decreases and the frequency pattern tends to approach the RIPS method with decreasing $K$.

    \begin{figure}[!t]
       \centering
       \includegraphics[height=2.8in,width=3.5in]{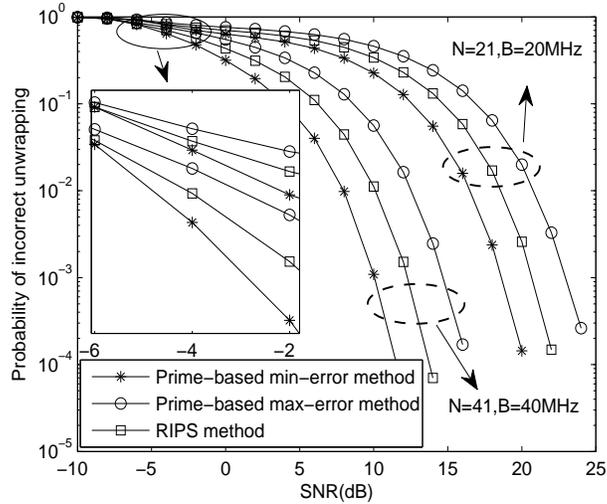}
       \caption{The probability of incorrect phase unwrapping.}
       \label{fig_false_unwrapping}
    \end{figure}

    \begin{figure}[!ht]  
      \centerline{
         \subfloat[Case I]{\includegraphics[height=2.8in,width=3in]{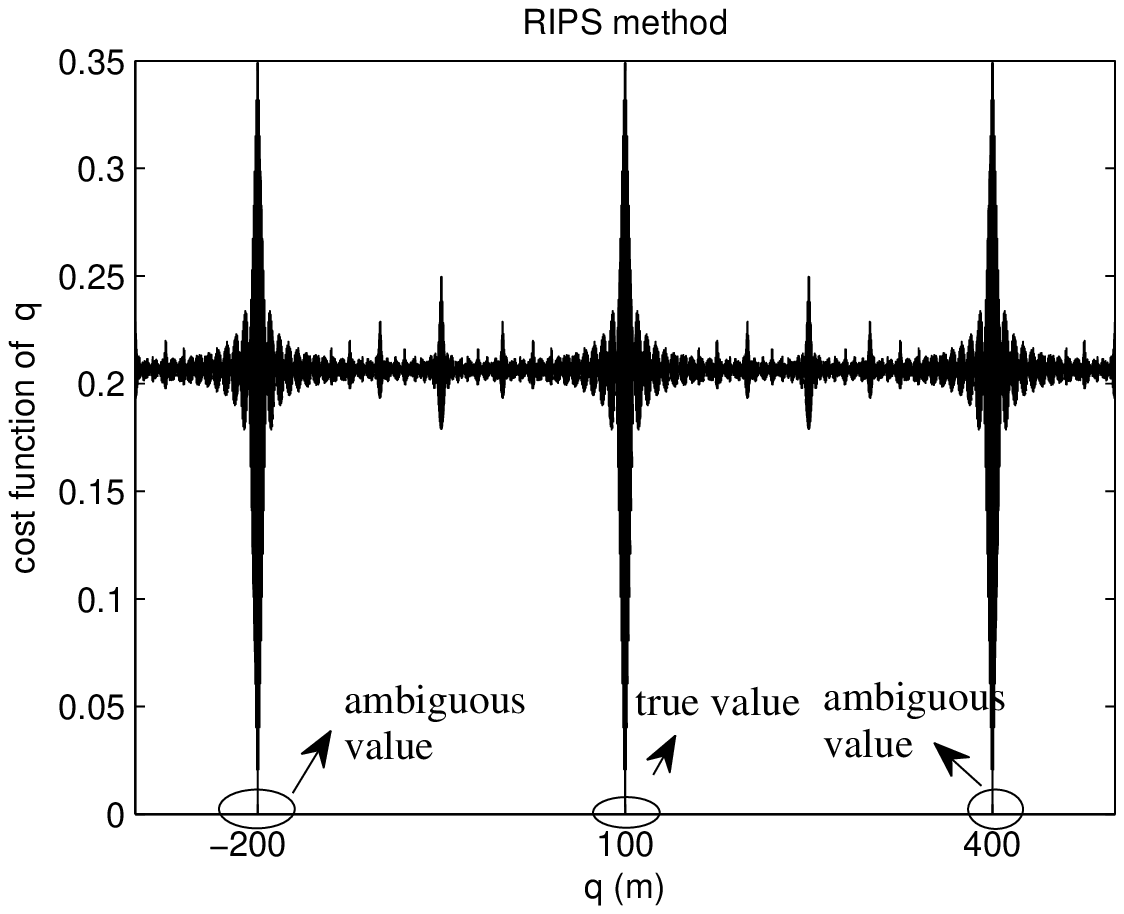}%
         \label{fig_ambiguity_RIPS}}
         \subfloat[Case II]{\includegraphics[height=2.8in,width=3in]{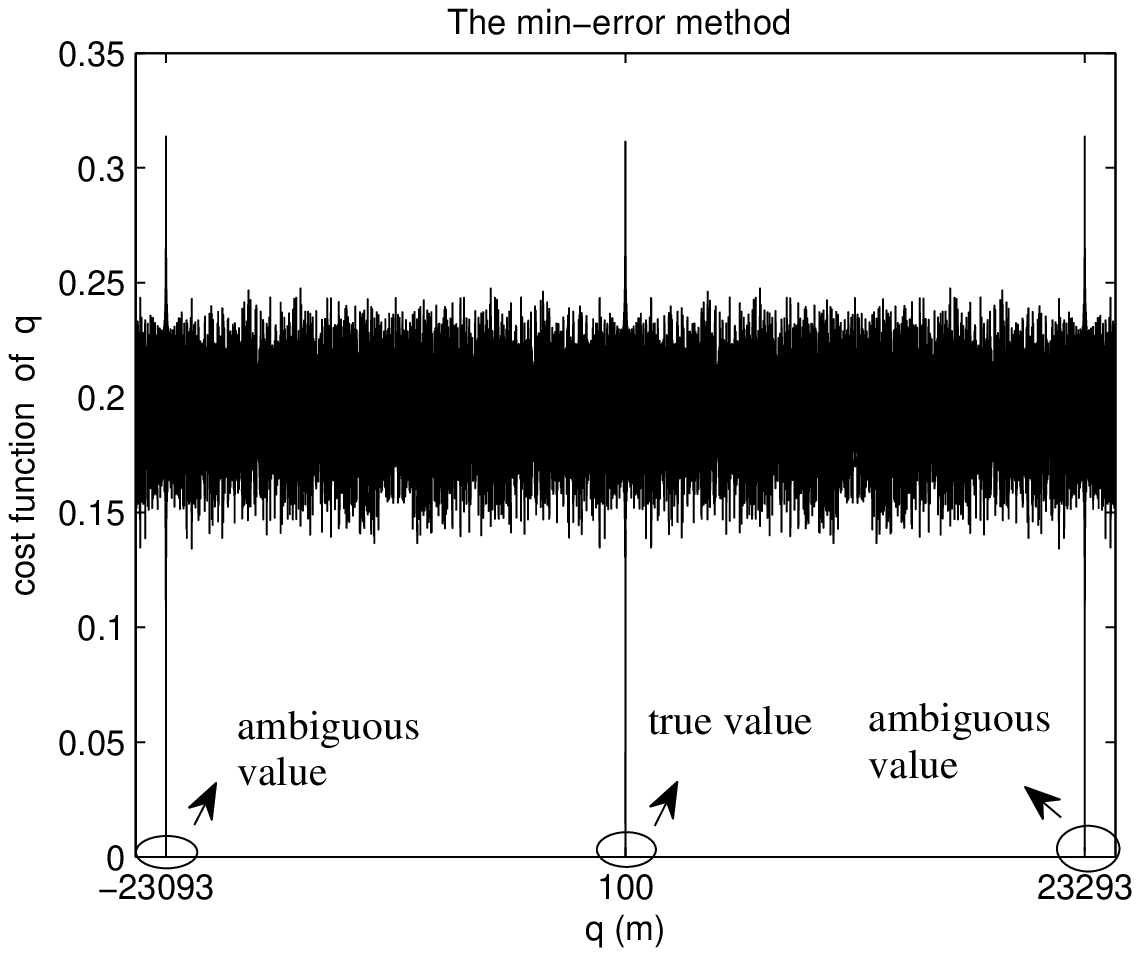}%
         \label{fig_ambiguity_optimal}}
         }
         \hfill               
      \caption{The unambiguous measurement range versus frequency pattern, $q_0=\,100m$.(a) RIPS method.(b) the min-error method.}
      \label{fig_ambiguity}
    \end{figure}

   When SNR is below the conventional threshold and the outlier occurs, the error is uniformly distributed across the entire measurement range and the MSE will not reflect the estimation performance any more (one large outlier may ruin the MSE curve and lead to larger MSE than that caused by many small outliers, so the prime-based min-error method may be inferior to RIPS method in MSE performance for low SNR, see Fig.\ref{fig_double_mse}). Hence, we compare the probability of incorrect unwrapping of different methods rather than MSE in Fig.\ref{fig_false_unwrapping}. From (\ref{equa:basic}) and (\ref{equa:LS_estimation}), the incorrect unwrapping probability $P_f$ is defined as (for correct unwrapping, the error must be less than one wavelength)
    \begin{gather*}
       P_f=P( |\hat q-q_0|>\lambda_N), \quad \lambda_1>\lambda_2\cdots >\lambda_N
    \end{gather*}
   where $\hat q$ is the estimation of $q_0$, the results is averaged over 500,000 Monte Carlo runs. Fig.\ref{fig_false_unwrapping} shows that the proposed min-error method has a much better performance, in term of $P_f$, than both the RIPS method and the max-error method. The min-error method provides approximately 2 dB and 4 dB gain over the RIPS method and the max-error method respectively, at an incorrect unwrapping probability of $10^{-4}$ for $N=21,B=20\,\textrm{MHz}$. Similar observation also holds for $N=41,B=40\,\textrm{MHz}$.

     \begin{figure}[!t]  
      \centerline{
         \subfloat[Case I]{\includegraphics[height=2.8in,width=3in]{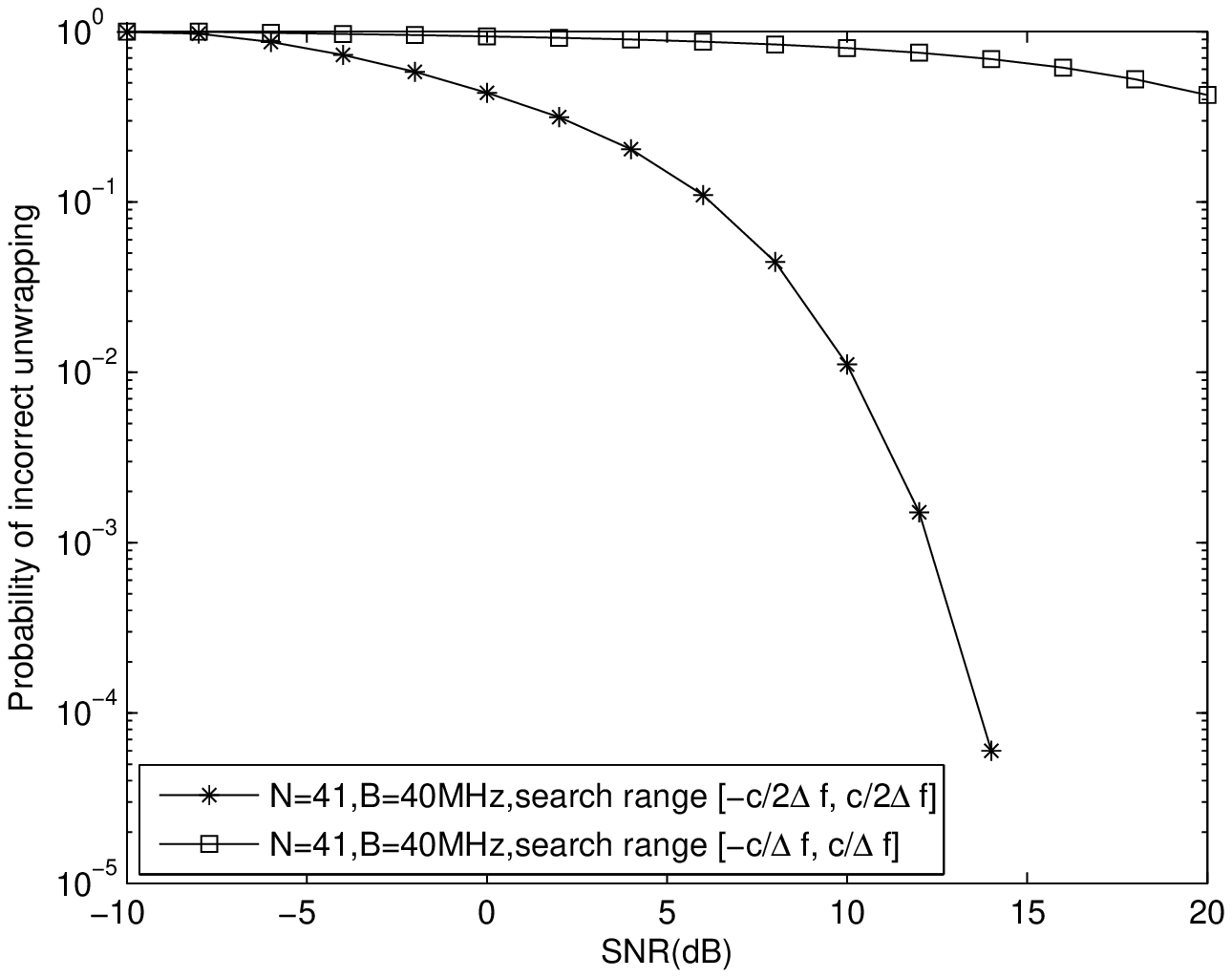}%
         \label{fig_Pe_narrow}}
         \subfloat[Case II]{\includegraphics[height=2.8in,width=3in]{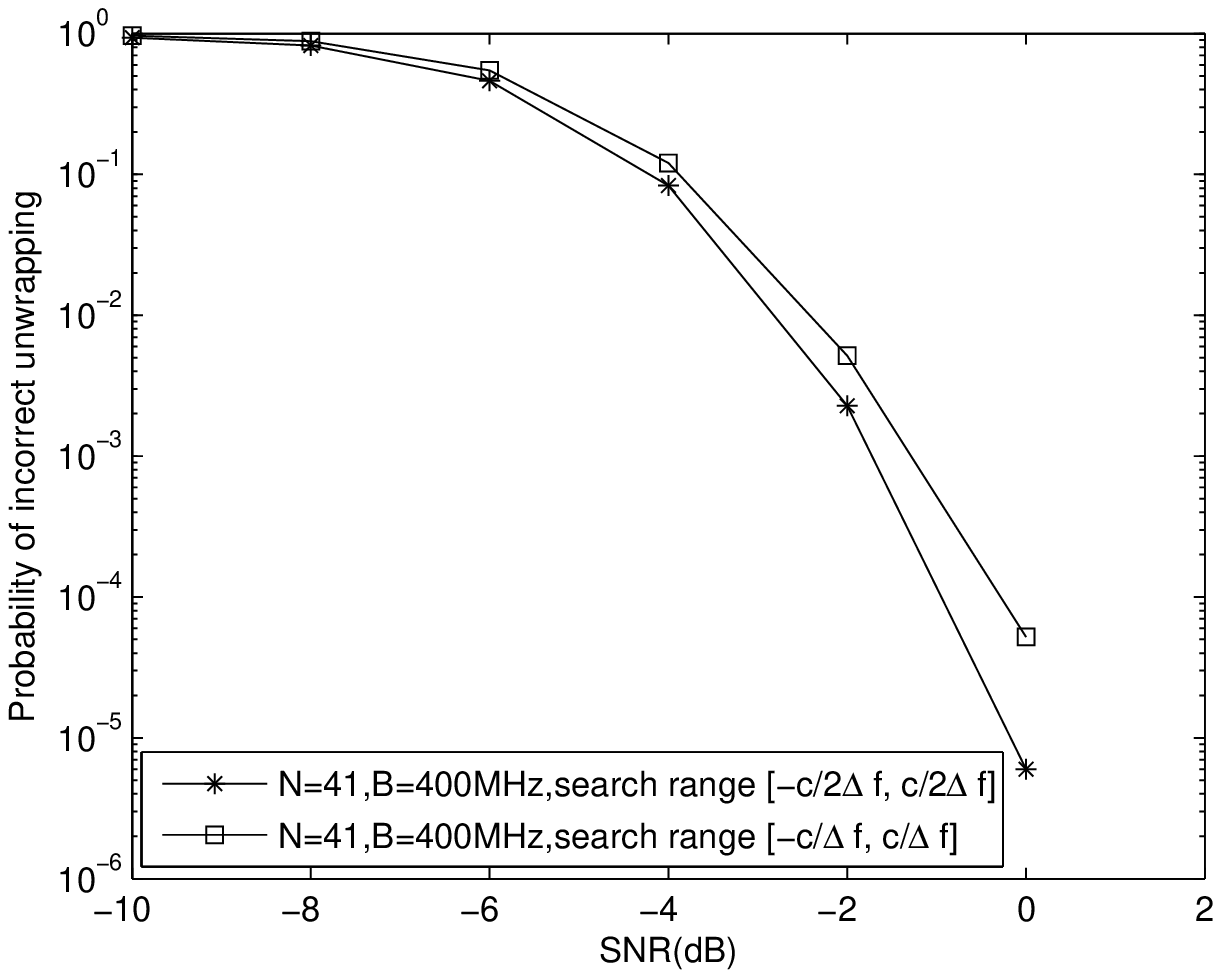}%
         \label{fig_Pe_wide}}
         }
         \hfill               
      \caption{The probability of incorrect phase unwrapping versus B and the search range, RIPS method, $q_0=\,10m$. (a) $N=41$, $B=40\,\textrm{MHz}$, $\Delta f=1 \textrm{MHz}$, $\varepsilon=0.1$ and $f_1=400.1\,\textrm{MHz}$. (b) $N=41$, $B=400\,\textrm{MHz}$, $\Delta f=10 \textrm{MHz}$, $\varepsilon=0.5$ and $f_1=105\,\textrm{MHz}$.}
      \label{fig_UMR_bandwidth}
    \end{figure}

   Fig.\ref{fig_ambiguity} further compares the UMR of the min-error method with RIPS method for the above parameter $N=\nobreak41, B=\nobreak40\,\textrm{MHz}$. The design parameter of the proposed method is $(40,41,65,1,199)$. So the UMR of RIPS method and the proposed method are $\Delta L=c/(B/(N-1))=300\,\textrm{m}$ and $\Delta L=c/(K\Delta f_{\min})- \varepsilon{\sum_{i=1}^N \lambda _i ^{-1}}\big/{\sum_{i=1}^N \lambda_i^{-2} }\approx23193\,\textrm{m}$, which both are in good agreement with simulation results. The proposed method achieves far more large UMR than RIPS. From Fig.\ref{fig_ambiguity_optimal}, it is also noted that larger sidelobes are not seen for being uniformly averaged over the whole parameter space and sharp peak is observed at the true location. These properties result in superior estimation accuracy and low outlier probability verified by the MSE curve. Moreover, the UMR can be easily enlarged by adjusting the parameter $K$.

    The incorrect unwrapping probability of RIPS method under different bandwidth and initial measurement frequency is shown in Fig.\ref{fig_UMR_bandwidth}. The search range of both $[-c/2\Delta f, c/2\Delta f]$ and $[-c/\Delta f, c/\Delta f]$ are evaluated. It is clear that when $f_1/B\gg 1$ or $\varepsilon\rightarrow 0$, the measurable range is constrained by P-UMR. Searching outside this range will obtain an ambiguous value and result in a large incorrect probability. For example, an error probability as large as $70\%$ is observed for $\textrm{SNR} =10\textrm{dB}$ when two ambiguous solutions are included into the search range, while the error probability of only $1\%$ is obtained for the same SNR if the P-UMR constraint is considered. However, the conclusion is invalid for $f_1/B< 1$, as is seen in Fig.\ref{fig_Pe_wide}. In the latter case, P-UMR becomes a pessimistic estimation of range since the measurable range is much more large than it. The observations reveal that the UMR is highly correlated with $f_1/B$ but less affected by noise variance. This result agrees well with the analysis in section \ref{subsect:UMR}.

\section{Experimental Evaluation}
\label{sect:experiment}

   The low-cost mica2 nodes is exploited for field experiment. All the ranging procedure is similar to the one in \cite{Maroti:05} and \cite{Kusy:06} except the measurement frequencies. Five nodes are used with two transmitters (A and B) and three receivers (C$\sim$E). For each measurement round, two transmitters and two receivers are needed. So the nodes form three deployment scenarios, ie.ABCD, ABCE and ABDE. The nodes are deployed in football-field of our campus, see Fig.\ref{fig_deployment}.
   \begin{figure}[!t]
   \centering
   \includegraphics[height=3in,width=2in]{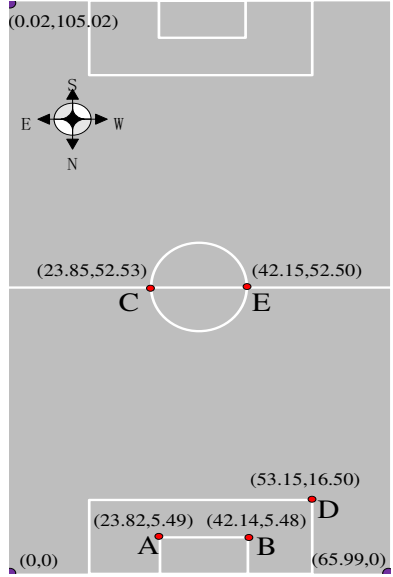}
   \caption{Coordinates of nodes (m).}
   \label{fig_deployment}
   \end{figure}
   To alleviate the multipath effect, all the nodes are placed one-meter above the ground, seen \cite{Kusy:06} for details. The real coordinates of nodes are determined via differential GPS, which has position error of about $\pm 2\,\textrm{cm}$. The true ranges are $d_{ABCD}=19.19\,\textrm{m}$, $d_{ABCE}=6.88\,\textrm{m}$ and $d_{ABDE}=-12.31\,\textrm{m}$, where $d_{ABCD}$ is the linear combination of the distances in the form of  $d_{ABCD} = d_{AD}-d_{BD}+d_{BC}-d_{AC}$, $d_{XY}$ denotes the distance between node $X$ and $Y$.

   The measurement frequencies range from 410\,MHz to 450.378\,MHz, the bandwidth and frequency number are $B=\nobreak40.378\,\textrm{MHz}$, $N=31$. The following methods are compared under identical $B$ and $N$: RIPS, min-error method, max-error method and the random method. For random method, the measurement frequencies are picked randomly from the usable frequency band. The experiment parameter of prime-based min-error method is $(B,N,\Delta f_{\min},i,K)=(40.378,31,65,12,200)$. Since $i=\nobreak12$, then the prime sequence described in section \ref{subsect:algorithm} is $\mathcal S=[37\, 41\, 43 \,47\, 53\, 59\, 61\, 67\, 71\, 73\, 79\, 83\, 89\, 97\, 101\, 103\, 107\, 109\,\\ 113\, 127\, 131\, 137\, 139\, 149\, 151\, 157\, 163\, 167\, 173\, 179]$. The nodes have fine frequency resolution of $\Delta f_{\min}=65\,\textrm{Hz}$. It is easy to verify that $\sum_i {S\left(i\right)} K\Delta f_{\min }=B$ holds. Based on Theorem \ref{theo:UMR_modify}, the practical UMR is $\Delta L = c/ \left( K\Delta f_{\min } \right)- \varepsilon{\sum_{i=1}^N \lambda _i ^{-1}}\big/{\sum_{i=1}^N \lambda_i^{-2} }\approx 23.077 \textrm{km}$. For each method and deployment, 50 independent experiments are performed.

   Since the UMR is $\Delta L\approx23.077\,\textrm{km}$, the search ranging is set to [-1000\,m,24000\,m] to guarantee that one ambiguity solution could be searched in Fig.\ref{fig_UMR_experiment}. The error between the real and estimated range is then plotted in Fig.\ref{fig_UMR_experiment}.

    \begin{figure}[!t]  
      \centerline{
         \subfloat[Case I]{\includegraphics[height=2.8in,width=3in]{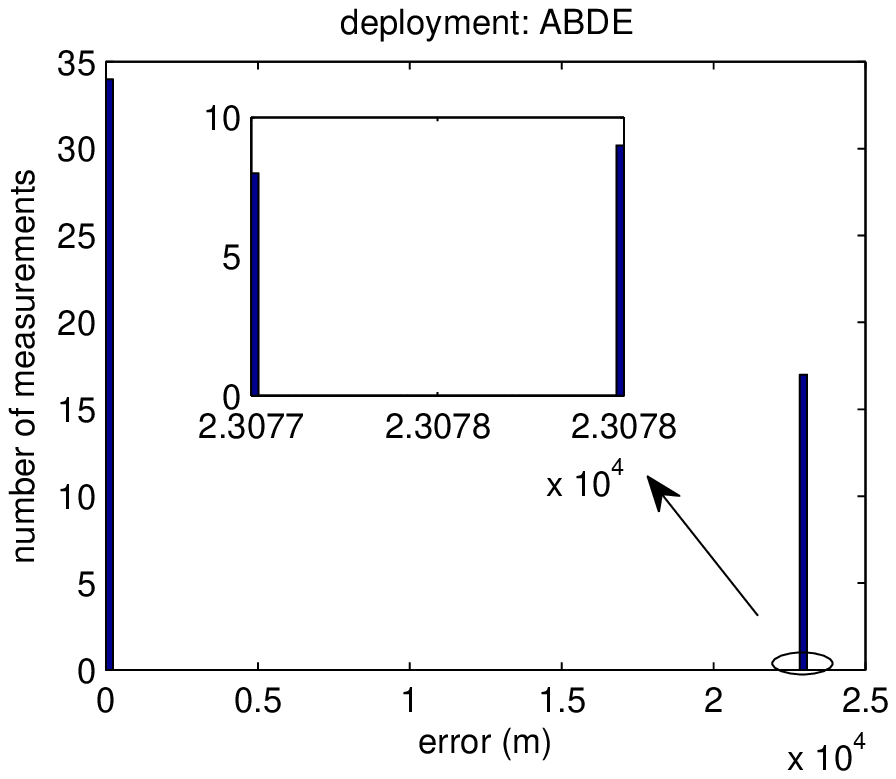}%
            \label{fig_UMR_min}}
         \subfloat[Case II]{\includegraphics[height=2.8in,width=3in]{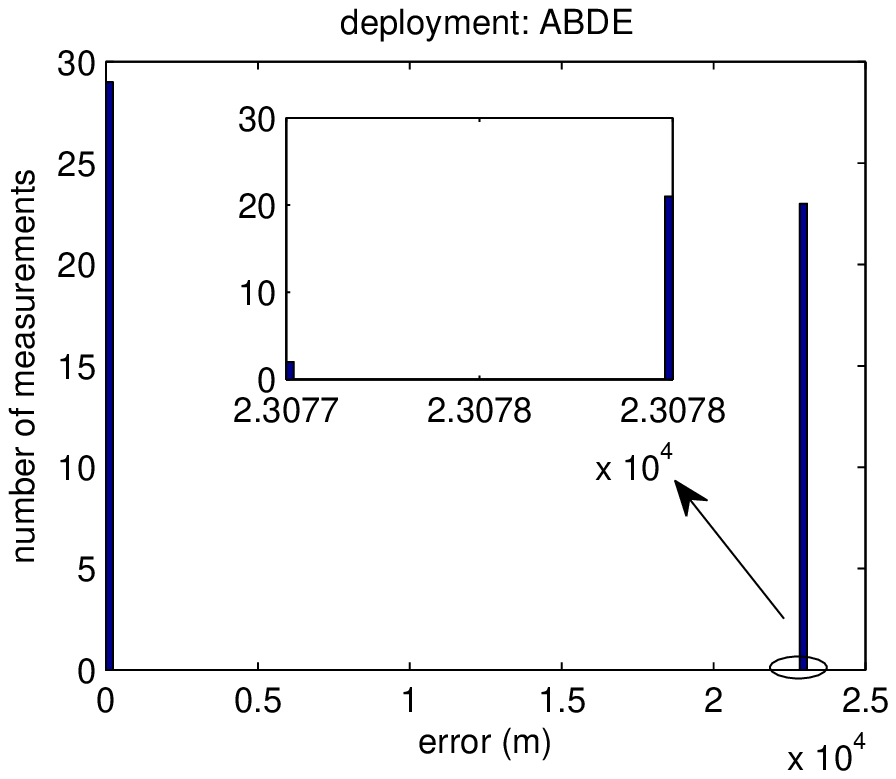}%
            \label{fig_UMR_max}}
            }
         \hfill               
      \caption{Validation of the unambiguous measurement range via field experiment. (a) the min-error method. (b) the max-error method.}
      \label{fig_UMR_experiment}
    \end{figure}

   It is obvious that the errors lie in two regions, one is near zero and the other is located at 23077\,m or 23078\,m. The latter is very close to the practical UMR(P-UMR). More importantly, the estimation failure probability(the error is equal to the P-UMR) is as large as $32\%$ and even $44\%$ for the min-error and max-error method respectively. This implies that the P-UMR does exist and is robust to noise and frequency inaccuracy, which are inevitable especially for low-cost hardware such as mica2 node.

     \begin{figure}[!t]  %
      \centerline{
         \subfloat[Case I]{\includegraphics[height=2in,width=2in]{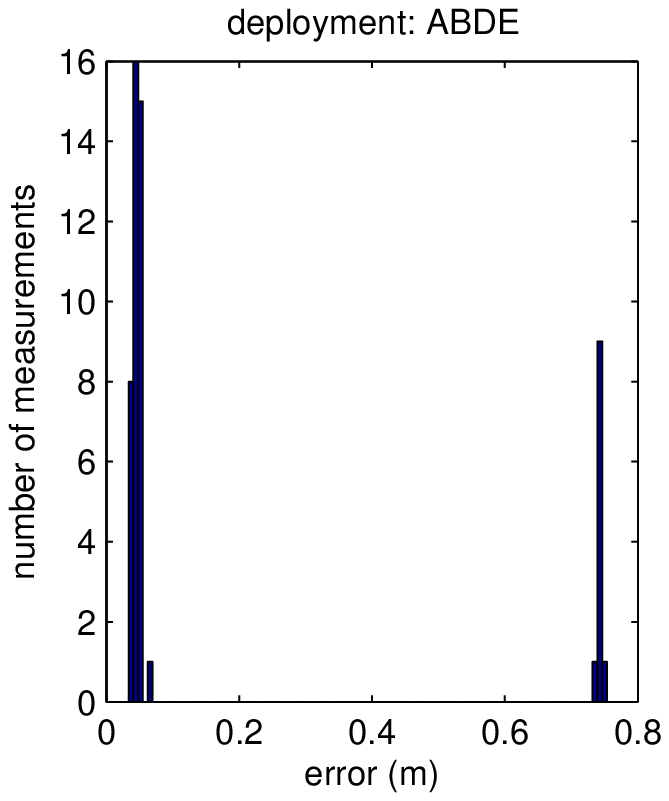}%
            \label{fig_min_error}}
         \subfloat[Case II]{\includegraphics[height=2in,width=2in]{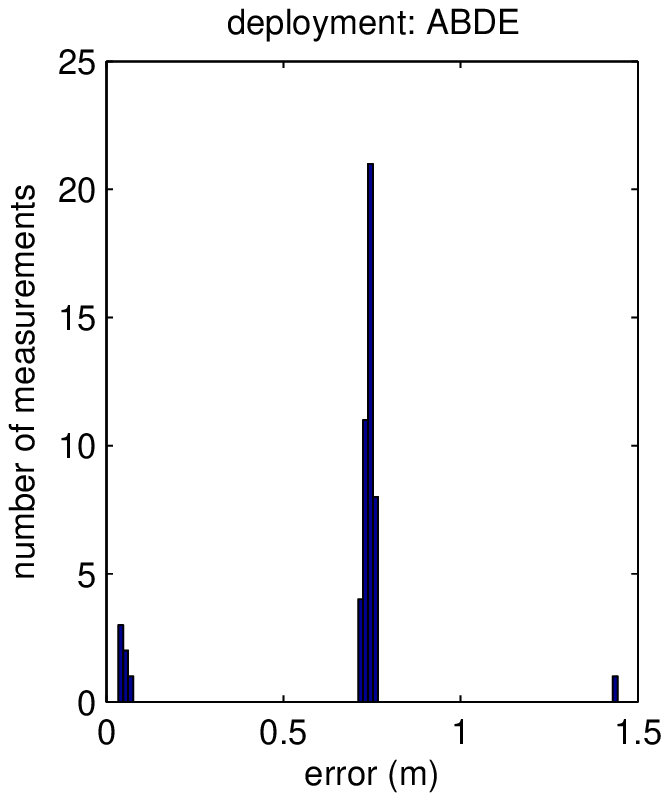}%
            \label{fig_max_error}}
         }
         \hfill               
         \vspace{-4mm}
      \centerline{
         \subfloat[Case II]{\includegraphics[height=2in,width=2in]{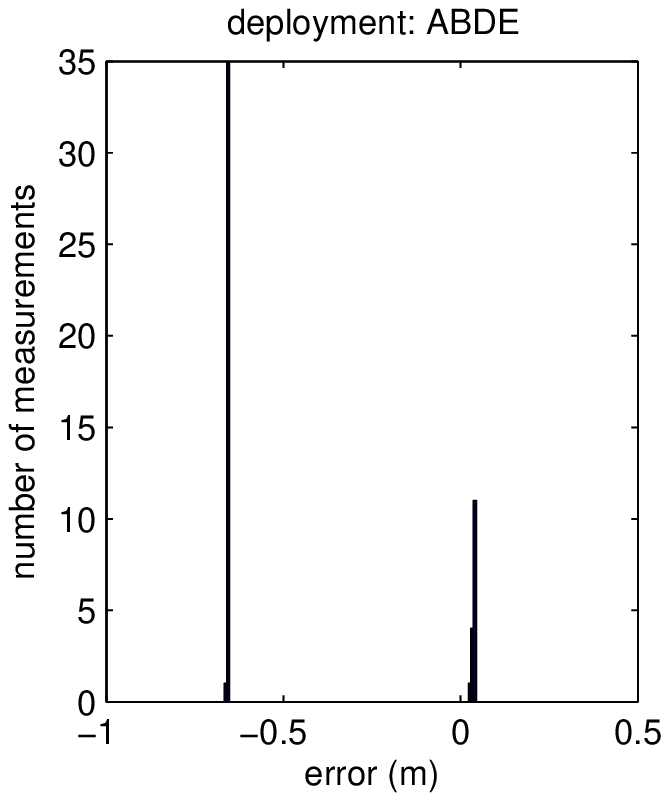}%
           \label{fig_RIPS}}
         \subfloat[Case II]{\includegraphics[height=2in,width=2in]{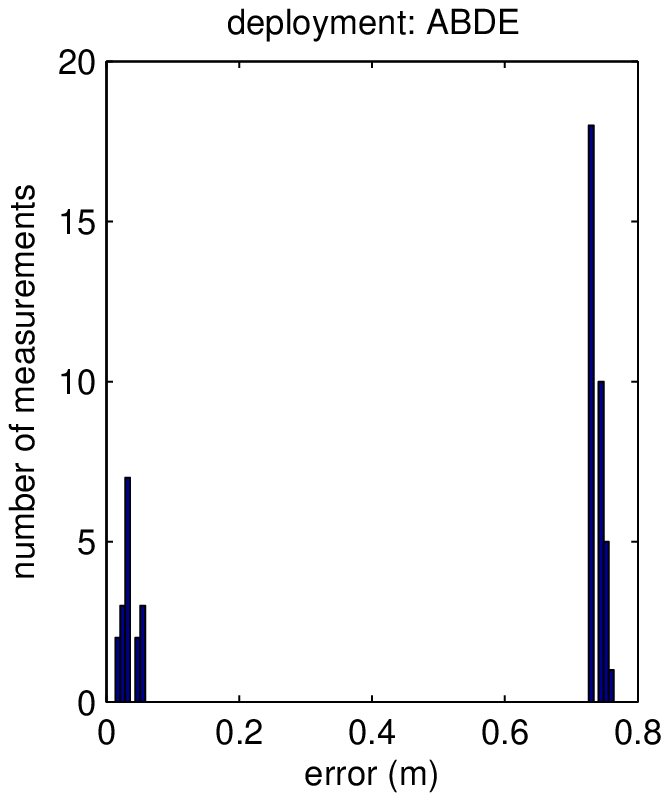}%
           \label{fig_random1}}
         }
      \caption{Error distribution of range estimation with 50 independent experiments for each method. (a) min-error method.
      (b) max-error method. (c) RIPS method. (d) random method. }
      \label{fig_Error_distribution}
    \end{figure}

    \begin{figure}[!t]
     \centering
     \includegraphics[height=2.8in,width=3.5in]{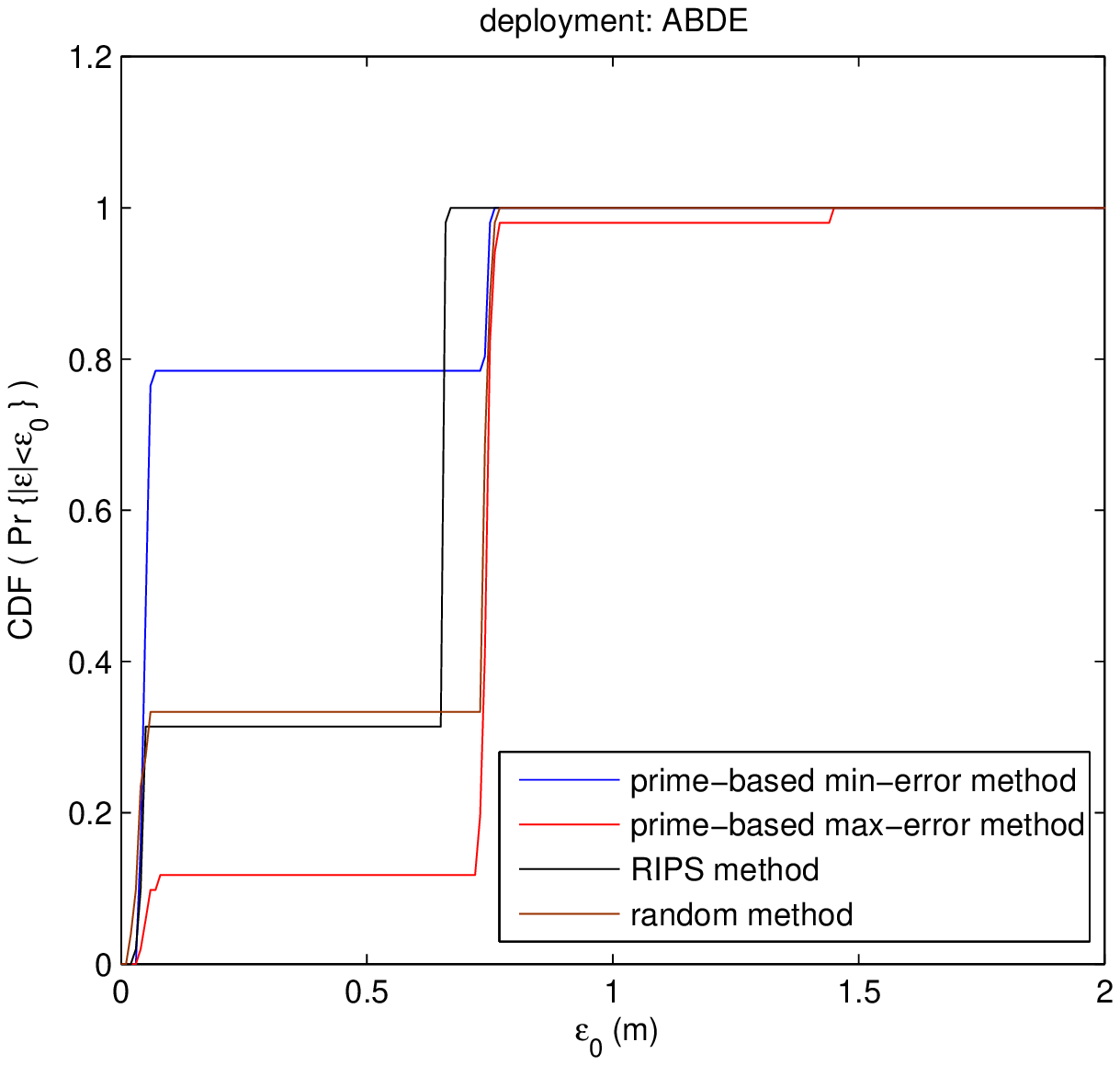}
     \caption{CDF of the absolute errors $|\epsilon|$ for different methods.}
     \label{fig_CDF}
   \end{figure}

   The search operation of the following experiment is performed inside the UMR range of RIPS, ie. $[-c/2\Delta f,c/2\Delta f]\approx[-100\,\textrm{m},100\,\textrm{m}]$ to avoid ambiguity. The distribution of ranging error is shown in Fig.\ref{fig_Error_distribution} for different methods. Fig.\ref{fig_CDF} exhibits the cumulative distribution function (CDF) of the absolute value of error for the deployment of ABDE. These two figures confirm the results in Fig.\ref{fig_false_unwrapping}. It is pointed out that the ranging errors are bias and not Gaussian distribution any more due to the existence of multipath or wrapped Gaussian noise, as is seen in Fig.\ref{fig_Error_distribution}. Even in this case, the proposed min-error method is still superior to the other schemes.

   The MSE performance of different deployments are also shown in Fig.\ref{fig_MSE}. The discrepancies are observed for all the methods except the min-error method, which has the best accuracy all the time. This discrepancies can be attributed to the multipath effect. It is well known that frequency-selective shading is introduced in multipath channel and it varies from one location to another. This is so-called frequency-selective and space-selective property of wireless channel. Therefore, the measurement frequencies of a certain method may undergo deep shading in one deployment and leading to small SNR in the receiver. It is quite possible that the opposite happens for another deployment. Consequently, the discrepancies appear.

   \begin{figure}[!t]
     \centering
     \includegraphics[height=2.8in,width=3.5in]{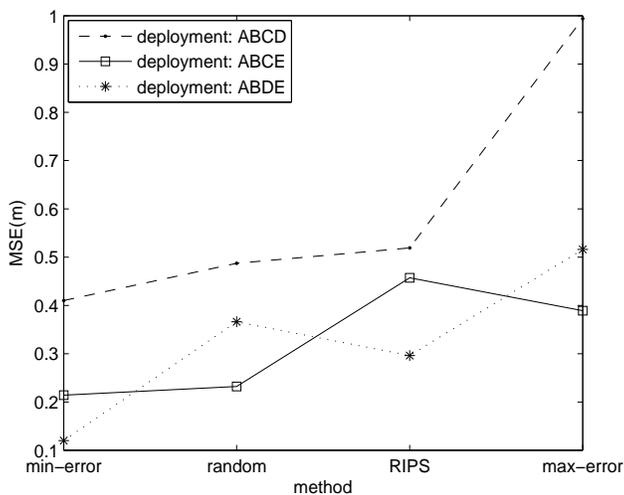}
     \caption{MSE for different methods with three deployments, averaged over 50 independent experiments for each method and
             deployment.}
     \label{fig_MSE}
   \end{figure}

\section{Conclusion and Discussion}
   In this paper, we focused on the frequency optimization of MFI system to extend the UMR, decrease the probability of outlier and improve the estimation accuracy. The main contributions of the paper can be summarized as follows
   \begin{itemize}
     \item  We prove that the UMR of MFI is in inverse proportion to the greatest common divisor (GCD) of frequency interval, thus the measurement range may be simply extended by reducing the GCD. Furthermore, we point out that the theoretical UMR of MFI is too optimistic for narrowband ranging ($f_1/B\gg 1$), while the P-UMR may be conservative when $f_1/B< 1$. This point has not been mentioned in the literature before.
     \item We explore the relationship between outlier probability and the ambiguity function as well as the probability density function, and suggest to use prime-based frequency interval for outlier suppression.
     \item The unique ``double threshold" phenomenon of MFI is firstly discovered and the expressions for the MMSE and HMSE are derived.
     \item Focusing on the optimization of MMSE performance, we present an optimal permutation for any set of frequency interval and prove its optimality for the first time.
     \item Based on the finding mentioned above, we present a quite simple and effective frequency design method and simulation results verified that the proposed method outperforms the existing method in UMR and MSE simultaneously.
     \item Different from most of the related work, field experiments have been designed and further demonstrate the new method's robustness to practical interference such as frequency inaccuracy and multipath error.
   \end{itemize}

     Although the theory of the paper is developed under the assumption of relatively narrowband ranging with $f_1\gg B$, it is found that the proposed method also performs well in relatively wide bandwidth. As a final remark, since the outlier suppression by non-searching-based frequency interval design is an open problem in the literature, we just give a heuristic design method and further research is required in this area.



\appendices    
\section{Proof of the Theorem \ref{theo:UMR_modify} }
\label{append:UMR}
\renewcommand{\labelenumi}{(\arabic{enumi})}
\begin{enumerate}
   \item
   Suppose the true range is $q_0$. Since $f_i=(k_i+\varepsilon)\Delta f_{\min}$, it is obvious that the cost function could not achieve zero at $q_0+\Delta L$, $\Delta L=c/\Delta f_{\min}$. We will instead consider the location of local minimal $q =q_0 +\Delta L+x$, near the original ambiguous location. The cost function is rewritten as
   \begin{align}
      \label{equa:cost_general_no_noise}
      S( q )& =\sum_{i = 1}^N \left( \left[\varphi (i) - \hat \varphi_q (i) \right]_{2\pi} \right)^2 \notag\\
      &= \sum_{i=1}^N \left(\left[2\pi\frac{q_0}{\lambda_i} - 2\pi \frac{(q_0+\Delta L+x )}{\lambda_i} \right]_{2\pi} \right)^2  \notag\\
      & = \sum_{i = 1}^N \left( \left[\frac{2\pi x}{\lambda_i} + \frac{2\pi\left((k_i+\varepsilon )\Delta f_{\min} \right)\Delta L}{c}  \right]_{2\pi}  \right)^2 \notag \\
      &=\sum_{i=1}^N \left(\left[ \frac{2\pi (x+\varepsilon \lambda_i)}{\lambda_i} \right]_{2\pi}  \right)^2\notag \\
      &=T(x)
   \end{align}

   Assume $\lambda_1>\lambda_2\cdots >\lambda_N$ and $\lambda_1<2\lambda_i$ (it holds for $f_1\geq B$), and note that $|x|<|\varepsilon|\lambda_1$ and $sign(x)=-sign(\varepsilon)$ for $x=-\varepsilon {\sum_{i=1}^N \lambda _i ^{-1}}/ {\sum_{i=1}^N \lambda_i^{-2} }$, then $| x+\varepsilon \lambda_i| \leq max\{|\varepsilon\lambda_i|,|\varepsilon(\lambda_1-\lambda_i)| \}< \lambda_i/2$.
    We have
    \begin{align}
      \label{equa:cost_PUMR_no_noise}
      T( x|x=-\varepsilon \frac{\sum_{i=1}^N \lambda _i ^{-1}}{\sum_{i=1}^N \lambda_i^{-2} } )&= \sum_{i=1}^N \left(\frac{2\pi ( x+\varepsilon \lambda_i )}{\lambda_i} \right)^2 \notag\\
      &=\sum_{i=1}^N  \left( 2\pi \varepsilon \left( \lambda_i - \frac{\sum_{k=1}^N \lambda _k ^{-1}} {\sum_{k=1}^N \lambda_k^{-2} }\right)/\lambda _i  \right)^2
    \end{align}

   Since
   \begin{gather}
      \sum_{k=1}^N {\lambda_k^{-1}}\bigg/\sum_{k=1}^N {\lambda_k^{-2}}
            < \sum_{k= 1}^N  \left(\lambda_k ^{-1} \lambda_1 \lambda_k ^{-1}  \right) \bigg/\sum_{k=1}^N {\lambda_k^{-2}}
            = \lambda_1 \notag\\
      \sum_{k=1}^N {\lambda_k^{-1}}\bigg/\sum_{k=1}^N {\lambda_k^{-2}}
            > \sum_{k= 1}^N  \left(\lambda_k ^{-1} \lambda_N \lambda_k ^{-1}  \right) \bigg/\sum_{k=1}^N {\lambda_k^{-2}}
            = \lambda_N
    \end{gather}
   We obtain
   \begin{gather}
       T(x) < \sum_{i=1}^N \left( 2\pi \varepsilon ( \lambda_1 - \lambda_N)/\lambda_N \right)^2
                = 4N\pi^2 \varepsilon^2 \left(f_1/B \right)^{-2}
   \end{gather}
   where $B=f_N-f_1$. we finally obtain
   \begin{gather}
      \underset{f_1/B\rightarrow\infty}{\lim} S(q|q=q_0 +\Delta L+x) = 0 \notag\\
      \underset{\varepsilon\rightarrow 0}{\lim}  \quad S(q|q=q_0 +\Delta L+x) = 0
   \end{gather}

   We also want to find the local minimal $x$ of $T(x)$ within the constraint $0<x<-\varepsilon\lambda_1$ for $\varepsilon<0$, or $-\varepsilon\lambda_1<x<0$ for $\varepsilon>0$, we obtain

    \begin{gather}
       \hat x=\underset {x}{\text{argmin}}  \sum_{i=1}^N \left(\frac{2\pi (x + \varepsilon \lambda_i )}{\lambda_i} \right)^2
            = -\varepsilon \frac{\sum_{i=1}^N \lambda _i ^{-1}} {\sum_{i=1}^N \lambda_i^{-2} }
    \end{gather}
   That is to say, $q=q_0 +c/\Delta f_{\min}- \varepsilon{\sum_{i=1}^N \lambda_i ^{-1}}\bigg/{\sum_{i=1}^N \lambda_i^{-2}}$ is one of the local minimal of $S( q )$.

   \item Suppose there exists at least one $\Delta l$ satisfying $\underset{f_1/B\rightarrow\infty}{\lim}
             S(q|q=q_0 +\Delta l)=0$ and $0<\Delta l<c/\Delta f_{\min}- \varepsilon{\sum_{i=1}^N \lambda_i^{-1}}\bigg/{\sum_{i=1}^N \lambda_i^{-2}}$.
             Let $\Delta {l}^{'}$ be the minimal one, then we have
             $$c/\Delta f_{\min}- \varepsilon{\sum_{i=1}^N \lambda _i ^{-1}}\big/{\sum_{i=1}^N \lambda_i^{-2} }=n\Delta {l}^{'}$$
             where $n$ is integer. Then $\Delta {l}^{'}=c/(n\Delta f_{\min})-(\varepsilon/n){\sum_{i=1}^N \lambda _i ^{-1}}\big/{\sum_{i=1}^N \lambda_i^{-2} }$, and
           \begin{align}
                \hspace{-4pt}&S(q|q=q_0 +\Delta {l}^{'})\notag \\ 
                             &=\sum_{i = 1}^N \left( \left[\frac{2\pi\left(k_i+\varepsilon \right)\Delta f_{\min}\Delta{l}^{'}}{c}  \right]_{2\pi}  \right)^2 \notag \\
                             &=\sum_{i = 1}^N \left(\left[2\pi\left(\frac{k_i}{n}+\frac{\varepsilon}{n\lambda_i}\left(\lambda_i-\frac{\sum_{k=1}^N \lambda _k ^{-1}} {\sum_{k=1}^N \lambda_k^{-2} } \right)  \right)\right]_{2\pi}  \right)^2 \notag \\
                             &=(2\pi)^2 \sum_{i = 1}^N\left(\left[\frac{k_i}{n}+\frac{\varepsilon}{n\lambda_i}\left(\lambda_i-\frac{\sum_{k=1}^N \lambda _k ^{-1}}
                                {\sum_{k=1}^N \lambda_k^{-2}} \right) \right]_{1} \right)^2
          \end{align}
          Since the GCD is $\Delta f_{\min}$, there at least exists one $j$,$1\leq j\leq N$, $[k_j]_{n}\neq 0$, where $[.]_{n}$ is modulo $n$ operation, yielding a result in the interval $[-n/2, n/2)$. Assume $[k_j]_{n}=p$, with the integer $p$, $-\lfloor n/2\rfloor\leq p<0$ or $0<p\leq\lceil n/2\rceil-1$, $\lfloor .\rfloor$, $\lceil .\rceil$ denote the floor and ceiling funcion, then,
          \begin{align}
                 \hspace{-4pt}&S(q|q=q_0 +\Delta {l}^{'})\notag\\
                                          &> (2\pi)^2\left(\left[\frac{k_j}{n}+\frac{\varepsilon}{n\lambda_j}\left(\lambda_j-\frac{\sum_{k=1}^N \lambda_k ^{-1}}
                                             {\sum_{k=1}^N \lambda_k^{-2} } \right) \right]_{1}  \right)^2 \notag\\
                                          &= (2\pi)^2 \left( \left[\frac{p}{n}+\frac{\varepsilon}{n}\left(\lambda_j-\frac{\sum_{k=1}^N \lambda_k ^{-1}} {\sum_{k=1}^N \lambda_k^{-2} } \right)\Bigg/\lambda_j \right]_{1}  \right)^2
          \end{align}
          For $f_1>B$, $|\left(\lambda_j-{\sum_{k=1}^N \lambda_k ^{-1}}\big/ {\sum_{k=1}^N \lambda_k^{-2} } \right)\Big/\lambda_j|<B/f_1<1$. If
            \begin{enumerate}
              \renewcommand{\baselinestretch}{1.2} \normalsize
              \item $n$ is odd, or $n$ is even and $p\neq-n/2$, then
              \begin{align*}
                 \hspace{-4pt}&S(q|q=q_0 +\Delta {l}^{'}) \\
                                          &> (2\pi)^2 \left(\frac{p}{n}+\frac{\varepsilon}{n}\left(\lambda_j-\frac{\sum_{k=1}^N \lambda_k ^{-1}} {\sum_{k=1}^N \lambda_k^{-2} } \right)\Bigg/\lambda_j  \right)^2  \\
                                          &> (2\pi)^2 \left(\frac{p^2}{n^2}+2\frac{p\varepsilon}{n^2}\left(\lambda_j-\frac{\sum_{k=1}^N \lambda_k ^{-1}} {\sum_{k=1}^N \lambda_k^{-2} } \right)\Bigg/\lambda_j  \right)\\
                                          &> (2\pi)^2 \left(\frac{p^2}{n^2}-2\frac{|p\varepsilon|}{n^2}\frac{B}{f_1} \right)
              \end{align*}
              \vspace{-10pt}
              \begin{gather*}
              \hspace{-30pt}\underset{f_1/B\rightarrow\infty}{\lim}
                 S(q|q=q_0 +\Delta {l}^{'})> (2\pi)^2 \frac{p^2}{n^2}
              \end{gather*}

              \item $n$ is even, and $p=-n/2$, it is also easily to verify
              \begin{gather*}
              \hspace{-30pt}\underset{f_1/B\rightarrow\infty}{\lim}
                 S(q|q=q_0 +\Delta {l}^{'})> (2\pi)^2 \left(\frac{n-1}{2n} \right)^2
              \end{gather*}
            \end{enumerate}
          Note that  $\underset{f_1/B\rightarrow\infty}{\lim}
                 S(q|q=q_0 +\Delta {l}^{'})=0$ according to the assumption.
          Contradiction. This completes the proof.

    \item   Similar to (\ref{equa:cost_general_no_noise}) and  (\ref{equa:cost_PUMR_no_noise}), we have
          \begin{align}
            \label{}
            S\left(q|q=q_0+c/\Delta f_{\min}\hspace{-2pt}-\hspace{-2pt}\varepsilon{\sum_{i=1}^N \lambda_i^{-1}}\bigg/{\sum_{i=1}^N \lambda_i^{-2}}\right)
            &=\sum_{i=1}^N \left(\left[2\pi\varepsilon\left(\lambda_i- \frac{\sum_{k=1}^N \lambda _k^{-1}} {\sum_{k=1}^N \lambda_k^{-2}}\right)/\lambda _i+\theta_e(i) \right]_{2\pi} \right)^2 \notag\\
            S\left(q|q=q_0\right)
            &=\sum_{i=1}^N \left(\left[\theta_e(i) \right]_{2\pi} \right)^2
          \end{align}
      Note that $\left|\left(\lambda_i- \frac{\sum_{k=1}^N \lambda _k^{-1}} {\sum_{k=1}^N \lambda_k^{-2}}\right)/\lambda_i\right|<\frac{B}{f_1}$. According to the $3\sigma$ principle with $\textrm{SNR}=1/2\sigma^2>0\textrm{dB}$ and $f_1/B\geq4$, the following equation holds with $99.7\%$ probability,
           \begin{gather*}
              \left| \theta_e(i)\right|<3\sigma<\frac{3\sqrt{2}}{2\pi}\pi<0.7\pi
           \end{gather*}
          \begin{align}
            \label{}
            S\left(q|q=q_0+c/\Delta f_{\min}\hspace{-2pt}-\hspace{-2pt}\varepsilon{\sum_{i=1}^N \lambda_i^{-1}}\bigg/{\sum_{i=1}^N \lambda_i^{-2}}\right)
               &=\sum_{i=1}^N \left(2\pi\varepsilon\left(\lambda_i- \frac{\sum_{k=1}^N \lambda _k^{-1}} {\sum_{k=1}^N \lambda_k^{-2}}\right)/\lambda _i+\theta_e(i)  \right)^2 \notag\\
           S\left(q|q=q_0\right)
               &=\sum_{i=1}^N \left(\theta_e(i) \right)^2
          \end{align}
      Denote $w_i=2\pi\varepsilon\left(\lambda_i- \frac{\sum_{k=1}^N \lambda _k^{-1}} {\sum_{k=1}^N \lambda_k^{-2}}\right)/\lambda_i$, then
      \begin{align}
                P_a&=P\left( S\left(q_0+c/\Delta f_{\min}\hspace{-2pt}-\hspace{-2pt}\varepsilon{\sum_{i=1}^N \lambda_i^{-1}}\bigg/{\sum_{i=1}^N \lambda_i^{-2}}\right)< S\left(q_0\right) \right)   \notag\\
                &=P\left( \sum_{i=1}^N \left(w_i+\theta_e(i)  \right)^2< \sum_{i=1}^N \left(\theta_e(i)\right)^2 \right) \notag\\
                &=P\left( \sum_{i=1}^N {w_i\theta_e(i)} <-\frac{1}{2}\sum_{i=1}^N w_{i}^{2}  \right)
     \end{align}
     Let $V=\sum_{i=1}^N {w_i\theta_e(i)}$, $V$ is then Gaussian random variable with zero mean and variance $E\{ V^{2}\}=\sigma_{V}^{2}=\sigma_{\theta}^{2}\sum_{i=1}^N w_{i}^{2}$, and $W=2\pi|\varepsilon|\frac{B}{f_1}$. We obtain
     \begin{align}
                P_a&=P\left( V <-\frac{\sigma_{V}^{2}}{2\sigma_{\theta}^{2}}  \right)  \notag\\
                   &=\frac{1}{2} \left(1+ erf\left(\frac{-\sigma_{V}}{2\sqrt{2}\sigma_{\theta}^{2}} \right) \right)   \notag\\
                   &=\frac{1}{2} \left(1- erf\left(\frac{\sqrt{\sum_{i=1}^N w_{i}^{2}}}{2\sqrt{2}\sigma_\theta } \right) \right)   \notag\\
                   &>\frac{1}{2} \left(1- erf\left(\frac{\sqrt{N} W}{2\sqrt{2}\sigma_\theta } \right) \right)
     \end{align}
     where $erf\left(x\right)=\frac{2}{\sqrt{\pi}} \int_0^x {e^{-t^2}}dt$.
\end{enumerate}

\section{Proof of the Theorem \ref{theo:optimal_frequency}}
\label{append:optimal_frequency}

    We firstly sort the $N-1$ adjacent frequency spacing in ascending order
    \begin{gather*}
       \Delta \vec{\tilde{f}}=[\Delta f_{1},\Delta f_{2},\cdots\Delta f_{N-1}]=[k_1,k_2,\cdots k_{N-1}]\Delta f_{\min}\\
        \sum_{i=1}^{N-1}k_i=B/\Delta f_{\min}=M\\
        1\leq k_1\leq k_2\leq\cdots k_{N-1},k_{N-1}\leq M+2-N
    \end{gather*}
    According to Theorem \ref{theo:optimal_permutation}, the optimal rearrangement of the spacing is
    \begin{align*}
       \Delta \vec{f}&=[\Delta f_1,\Delta f_3,\cdots\Delta f_{N-1},\Delta f_{N-2},\cdots\Delta f_4,\Delta f_2]\\
                        &=[k_1,k_3,\cdots k_{N-1},k_{N-2},\cdots k_4,k_2]\Delta f_{\min}
    \end{align*}
    If $k_{N-1}<M+2-N$, then there exists at least one $k_{m}\geq 2,m\neq N-1$. Without loss of generality, $N$ and $m$ are assumed to be even and odd). Then

    \begin{gather*}
    \Delta {\vec{f}}=\left[k_1,k_3,\cdots\underline {k_m }, \cdots \underline {k_{N-1}},\cdots k_4,k_2\right]^T \Delta f_{\min }
    \end{gather*}
    Define
    \begin{gather*}
    \Delta \vec{f^{'}}=\left[k_1,k_3,\cdots\underline{k_m-1},\cdots\underline{k_{N-1}+1},\cdots k_4,k_2\right]^T\Delta f_{\min}
    \end{gather*}
    Since
    \begin{align*}
    \Delta \vec {f}^T \vec{\Gamma }^T \left( N{\vec{I}}_{N-1}-\vec{uu}^T  \right){\vec{\Gamma }}\Delta {\vec{f}}
        = \mathrm{tr}\left( \vec{PQ}\right)
        = \sum_{i = 1}^{N - 1} \sum_{j = 1}^{N - 1} \vec{P}\left({i,j}\right) \vec{Q}\left({i,j}\right)
    \end{align*}
    Where $\vec{P}=\Delta{\vec{f}}\Delta {\vec{f}}^T$,
    $\vec{Q}$ is defined in (\ref{equa:Qij}).

   Let $\Delta{\vec{u}}=\left[0,0,\cdots\underbrace{-1}_{(m+1)/2},\cdots \underbrace 1_{N/2},0,\cdots0\right]^T\Delta f_{\min}$, then
   \begin{align*}
     \vec{P^{'}}&=\Delta {\vec{f^{'}}}\Delta {\vec{f^{'}}}^T  = \left(\Delta \vec{f}+\Delta \vec{u} \right)
                     \left(\Delta \vec{f}+\Delta \vec{u} \right)^T\\
     &=\vec{P}+\Delta \vec{u}\Delta \vec{f}^T +\Delta \vec{f}\Delta \vec{u}^T+\Delta \vec{u}\Delta \vec{u}^T
   \end{align*}

   Note that $\vec{Q}(i,i) + \vec{Q}(j,j) > 2\vec {Q}(i,j)$, then
   \begin{align*}
     {\mathrm{tr}}\left(\Delta \vec{u}\Delta \vec{u}^T \vec{Q}\right)&=\vec{Q}\left(\frac{m+1}{2},\frac{m + 1}
      {2} \right)+\vec{Q}\left(\frac{N}{2},\frac{N}{2}\right)-2\vec{Q}\left(\frac{m+1}{2},\frac{N}{2}\right)\\
      &>0
   \end{align*}

   Let $\vec U = \Delta \vec{u}\Delta \vec{f}^T $, since $\mathrm{Q}$ is symmetric, we have
   \begin{align*}
    \mathrm{tr}\left(\Delta\vec{f}\Delta \vec{u}^T\vec{Q}\right)&=\mathrm{tr}\left(\Delta \vec{u}\Delta \vec{f}^T
                         \vec{Q}\right)\\
        &\hspace{-0pt}= \sum_{i=(m+1)/2,N/2} \sum_{j=1}^{N-1} \vec{U}\left(i,j\right)\vec{Q}\left(i,j \right) \\
        &\hspace{-0pt}= \sum_{j=1}^{N-1} \vec{U}\left( \frac{m + 1}{2},j \right)\vec{Q}\left( \frac{m + 1}{2},j \right)
                         +\sum_{j=1}^{N-1} \vec{U}\left( N/2,j \right)\vec{Q}\left( N/2,j\right)\\
        &\hspace{-0pt}=\sum_{j=1}^{N-1}\Delta \vec{f}\left(j \right)\left[ \vec{Q}(N/2,j)-\vec{Q}\left(\frac{m+1}{2},j
                        \right)\right]\\
        &\hspace{-0pt}=\sum_{j=1}^{N/2-1}\Delta \vec{f}(j)\left[\vec{Q}\left(N/2,j\right)-\vec{Q}\left(\frac{m+1}{2},j
                        \right)\right]
                        +\sum_{j=N/2}^{N-1}\Delta\vec{f}(j)\left[\vec{Q}\left(N/2,j\right)-\vec{Q}\left(\frac{m+1}{2},j
                        \right) \right] \\
        &\hspace{-0pt}=\sum_{j=1}^{N/2-1} \Delta \vec{f}(j)\left[\frac{N}{2}j -\vec{Q}\left(\frac{m + 1}{2},j\right)
                         \right]
                        +\sum_{j=N/2}^{N-1} \Delta \vec{f}(j)\left[ \frac{N}{2}(N-j)-\frac{m + 1}{2}(N-j) \right]
   \end{align*}
  Let $k=N-1-j$, then
  \begin{align*}
      &= \sum_{j=1}^{N/2-1} \Delta \vec{f}(j)\left[\frac{N}{2}j - \vec{Q}\left(\frac{m+1}{2},j\right)\right]
         +\Delta \vec{f}(N-1)\left[\frac{N}{2} - \frac{m+1}{2} \right]\\
          &\hspace{15pt}+\sum_{k=1}^{N/2-1} \Delta \vec{f}(N-1-k)\left[\frac{N}{2}(k+1)-\frac{m + 1}{2}(k+1) \right]\\
      &> \sum_{j=1}^{N/2-1} \Delta \vec{f}(j)\left[\frac{N}{2}j- \vec{Q}\left(\frac{m + 1}{2},j \right)\right]
          \hspace{0pt}+\sum_{j=1}^{N/2-1} \Delta \vec{f}(N-1-j)\left[\frac{N}{2}(j+1) - \frac{m + 1}{2}( j+1) \right]
  \end{align*}

  By Theorem \ref{theo:optimal_permutation}, $\Delta \vec{f}(j)<\Delta \vec{f}(N-1-j)$, therefore
  \begin{align*}
       \mathrm{tr}( \Delta \vec{f}\Delta \vec{u}^T \vec{Q})&
       >\sum_{j=1}^{N/2-1}\Delta\vec{f}(j)\Bigg[\frac{N}{2}(2j+1)-\vec{Q}\left(\frac{m+1}{2},j\right)-\frac{m+1}{2}(j+1) \Bigg]\\
       &\hspace{-0pt}>\sum_{j=1}^{N/2-1}\Delta\vec{f}(j)\left[jN-\vec{Q}\left(\frac{m+1}{2},j \right)-\frac{m + 1}{2}j
                   \right]\\
       &\hspace{-0pt}=\sum_{j=1}^{(m+1)/2}\Delta\vec{f}(j)\underbrace{\left[jN-j\left(N-\frac{m+1}{2}\right)-\frac{m+1}{2}j
                  \right]}_{=0}\\
       &\hspace{5pt}+\sum_{j=(m+2)/2}^{N/2-1}\Delta\vec{f}(j)\underbrace{\left[jN-\frac{m+1}{2}(N-j)-\frac{m+1}{2}j
                  \right]}_{>0}\\
       &\hspace{-0pt}>0
  \end{align*}

  It follows that
  \begin{gather}
  \label{equa:exchange}
    \Delta \vec{f}^T\vec{Q}\Delta \vec{f}< \Delta \vec{f^{'}}^T\vec{Q}\Delta \vec{f^{'}}
  \end{gather}

  This implies that the sequence $\Delta \vec f^{'}$ is superior to $\Delta \vec f$. Therefore, the maximum value $k_{N-1}$ of the optimal frequency spacing $\Delta \vec{f}^{\ast}$, normalized by $\Delta f_{\min}$, must satisfy $k_{N-1}=M+2-N$. Otherwise, for any $\Delta \vec f$ with $k_{N-1}<M+2-N$, we can always find another frequency spacing better than it on the basis of (\ref{equa:exchange}).

  The key idea of the proof is that all the $\Delta\vec f$ may be classified into different categories (categorized into groups). Those $\Delta \vec f$ composed of the same set of frequency spacing belong to the same class, which differ only in the permutation. For each class, the optimal one is easily obtained by Theorem \ref{theo:optimal_permutation}. The global optimal across all the local optimal takes the form:
      \begin{gather*}
          \Delta \vec{f}^{\ast}=[1,1,\cdots M+2-N,\cdots 1,1]^T \Delta f_{\min}
      \end{gather*}

  It is pointed that when the frequencies are densely distributed on both ends of the band, those frequencies can be approximated as two frequencies on both ends and the corresponding UMR approaches $c/B$. Thus, the condition of $q \in [-c/2B,c/2B]$ is imposed on the Theorem to exclude ambiguity.




\ifCLASSOPTIONcaptionsoff
  \newpage
\fi



\bibliographystyle{IEEEtran}









\end{document}